\newcommand{\blind}{0}
\DeclareOldFontCommand{\bf}{\normalfont\bfseries}{\mathbf}
\declaretheoremstyle[
headfont=\bfseries,%
headpunct={.}, %
numbered=yes,
spaceabove=10pt, %
postheadspace=10pt ]{bettertheorem}
\declaretheorem[name=Theorem,style=bettertheorem]{theorem}
\declaretheorem[name=Corollary,style=bettertheorem]{corollary}
\declaretheorem[name=Proposition,style=bettertheorem]{proposition}
\declaretheorem[name=Lemma,style=bettertheorem]{lemma}
\declaretheorem[name=Assumption,style=bettertheorem]{assumption}
\declaretheorem[name=Definition,style=bettertheorem]{definition}
\declaretheoremstyle[
headfont=\bfseries,%
headpunct={.}, %
numbered=no,
spaceabove=10pt, %
postheadspace=10pt ]{pptheorem}
 \newcolumntype{P}[1]{>{\centering\arraybackslash}p{#1}}
 \newcolumntype{L}[1]{>{\raggedright\arraybackslash}p{#1}}
\newcolumntype{M}[1]{>{\centering\arraybackslash}m{#1}}
\begin{document}

\def\spacingset#1{\renewcommand{\baselinestretch}%
{#1}\small\normalsize} \spacingset{1}


\if0\blind
{
  \title{\bf A Ridge-Regularised Jackknifed Anderson-Rubin Test}
  \author{Max-Sebastian Dov\`i \thanks{
	This research is funded by the European Research Council via Consolidator grant number 647152 and the German National Merit Foundation. The views expressed herein are those of the authors and should not be attributed to the IMF, its Executive Board, or its management. We thank Frank Kleibergen, Damian Kozbur, Anna Mikusheva, Bent Nielsen and Alexei Onatski for useful comments and discussions.}\hspace{.2cm}\\
    International Monetary Fund\\
    and \\
    Anders Bredahl Kock \\
    Department of Economics, University of Oxford and CREATES, Aarhus University\\
    and\\
    Sophocles Mavroeidis\\
    Department of Economics, University of Oxford}
  \maketitle
} \fi

\if1\blind
{
  \bigskip
  \bigskip
  \bigskip
  \begin{center}
    {\LARGE\bf A Ridge-Regularised Jackknifed Anderson-Rubin Test}
\end{center}
  \medskip
} \fi

\bigskip
\begin{abstract}
We consider hypothesis testing in instrumental variable regression models with few included exogenous covariates but many instruments -- possibly more than the number of observations. We show that a ridge-regularised version of the jackknifed \citet[][henceforth AR]{Anderson:1949tx} test controls asymptotic size in the presence of heteroskedasticity, and when the instruments may be arbitrarily weak. Asymptotic size control is established under weaker assumptions than those imposed for recently proposed jackknifed AR tests in the literature. Furthermore, ridge-regularisation extends the scope of jackknifed AR tests to situations in which there are more instruments than observations. Monte-Carlo simulations indicate that our method has favourable finite-sample size and power properties compared to recently proposed alternative approaches in the literature. An empirical application on the elasticity of substitution between immigrants and natives in the US illustrates the usefulness of the proposed method for practitioners.
\end{abstract}

\noindent%
{\it Keywords:} instrumental variables, weak identification, high dimensional models, ridge regression.
\newpage
\spacingset{1.8} 

\section{Introduction}

Instrumental variables (IVs) are commonly employed in economics and related fields to estimate causal effects from observational data. 
 The use of a large number of IVs has gained popularity due to weak identification, where researchers aim to capture limited exogenous variation in endogenous covariates and obtain more precise inference. One example is Mendelian randomisation in biology, where weakly associated genetic mutations are used as IVs, and the number of IVs can exceed the number of observations \citep{Davies:2015,burgess2021mendelian}. Other prominent examples are the granular IV approach \citep{Gabaix:2020ws} and the saturation approach to identify treatment effects \citep{Blandhol2022}.\footnote{Recent examples of applications with many IVs include \citet{Muller}, who uses judge dummies interacted with individual covariates as IVs to identify causal effects of incarceration; \citet{vanDuijn}, who derive an equation for residential sorting that produces many more IVs than observations; and \citet{Lonn}, who derive structural asset-pricing models that lead to a vast number of moment conditions.} However, it is crucial to have reliable inference methods that account for the potential lack of joint informativeness of a large number of weak IVs, especially when pretesting the strength of IVs is impractical or not available due to the excessive number of IVs compared to observations. The present paper contributes to the literature on the development of such methods.

We propose a ridge-regularised jackknifed Anderson-Rubin (RJAR) test to construct confidence sets for the coefficients of endogenous variables in weakly-identified and heteroskedastic IV models when the number of IVs is large. Jackknife-based methods have recently been used in this context because they are applicable in an asymptotic framework where the number of IVs diverges with the number of observations. However, by relying on existing central limit theorems developed for standard projection matrices, these methods require that the number of IVs be less than the number of observations, and often perform poorly when the number of IVs is close to (but still less than) the number of observations. Recently-proposed regularisation approaches for inference under many IVs require strong identification or a sparse relationship between the endogenous variables and the IVs to work well. By combining jackknifing with ridge regularisation, we provide a test that has the desired asymptotic size under heteroskedasticity, arbitrarily weak identification, and more IVs than observations, all while achieving good power both when the relationship between the endogenous variables and the IVs is sparse and when it is dense.

Throughout this paper, we focus on \citet[henceforth AR]{Anderson:1949tx} tests. This is because, in addition to being robust to weak IVs, they are also robust to arbitrary relationships between the endogenous variables and the IVs, since they make no assumption on the first-stage projection of the endogenous variables on the IVs. In the context of AR tests with an increasing number of IVs, it is helpful to distinguish three different asymptotic regimes. 

The first `moderately many' IVs regime allows the number of IVs, $k_n$, to grow with the sample size, $n$, but still requires it to be asymptotically negligible with respect to the sample size. Examples include \citet{Andrews:2007bl}, who require $k_n/n^{1/3} \to 0$, and \citet{Phillips:2017bz}, who require $k_n / n \to 0$.

The second `many' IVs regime allows the number of IVs to be of the same order of magnitude as the number of observations. \citet{Anatolyev:2010kk} provide an AR test that remains valid for $k_n/n \to \tau$, $0 \leq \tau < 1$, $n \to \infty$, provided that the error terms are homoskedastic and the IVs satisfy a restrictive balanced-design assumption.\footnote{See \citet{Anatolyev:2017ee} and \citet{Crudu:2020bk} for a discussion of the restrictiveness of the balanced-design assumption of \citet{Anatolyev:2010kk}.}$^,$\footnote{See \citet{Kaffo:2017gp} for a bootstrapped version of the \citet{Anatolyev:2010kk} AR test.} \citet{Bun:2021en} provide analogous results for the GMM version of the AR statistic under the assumption of independently and identically distributed (i.i.d.) data. These results were recently extended in the form of a jackknifed AR statistic by \citet[henceforth CMS]{Crudu:2020bk} and \citet[henceforth MS]{Mikusheva:2020uo} to the case where errors are allowed to display arbitrary heteroskedasticity, and the only assumption on the IVs is that the diagonal entries of the projection matrix of the IVs are bounded away from unity from above.\footnote{ \citet{Boot, Matushita:2022,Lim} propose weak-identification robust jackknife-based procedures for inference with many IVs. However, these methods make additional assumptions on the first-stage projection of the endogenous variables on the IVs, and hence fall outside of the scope of this paper.}

The third `very many' IVs regime allows the number of IVs to grow with $n$, and further allows the number of IVs to be greater than the number of observations. \citet[henceforth BCCH]{Belloni:2012kw} propose a Sup Score test that remains valid under mild conditions, and allows the number of IVs to increase exponentially with the sample size. \citet[henceforth CT]{Carrasco:2016un} propose a ridge-regularised AR statistic that allows for more IVs than observations under the assumption of i.i.d.~data and homoskedasticity. \citet{Kapetanios:2015fp} extend \citet{Bai:2010ds} by proposing weak-identification robust factor-based tests that in principle allow for the number of IVs to be larger than the number of observations, provided that the factor structure of the IVs is sufficiently strong.

Using the notation of the model introduced in the next section, Table \ref{Table-Frontier} provides a schematic overview of the main assumptions and results in the literature on inference that is robust to many weak IVs.

Our test provides a twofold extension of the existing literature. First, it allows for valid inference under many IVs and heteroskedastic errors while further weakening the assumptions of similar tests proposed by CMS and MS. This is made possible by deriving the limiting behaviour of the RJAR statistic from the bottom up, without relying on the existing results in \citet{Chao:2012iz} or \citet{Hansen:2014ie}. Second, this test allows for more IVs than observations under arbitrary heteroskedasticity of the error terms. The only other approach currently available in the literature that is robust to heteroskedastic error terms and more IVs than observations is the Sup Score test of BCCH. Simulations show that the RJAR test has power comparable to the Sup Score test of BCCH whenever the signal in the first stage is sparse (i.e., only a few of the IVs are informative), and substantially more power when the signal in the first stage is dense (i.e., not sparse). Simulations also show that the RJAR test is as powerful as, and in some cases more powerful than, the AR test of CT, the validity of which has been established only under homoskedastic error terms. Finally, we provide a comparison of the most recently proposed approaches to conducting inference in possibly heteroskedastic linear IV models when the number of IVs is not negligible with respect to the sample size. Indeed, using extensive simulation evidence and an empirical application based on \citet{Card:2009hj}, we provide a comparison between these existing `state-of-the-art' methods in a controlled and comparable setting.

In the rest of the paper, we use the following notation. $I_p$ denotes the $p \times p$ identity matrix. For an $m\times 1$ vector $a = [a_1, \dots, a_m]'$, $||a||_2 := \sqrt{\sum_{j = 1}^{m}a_j^2}$. The entry $(i, j)$ for an $m\times p$ matrix $A$ is denoted as $A_{ij}$ for $1\leq i \leq m$, $1 \leq j \leq p$.  $\text{tr}(A):= \sum_{i=1}^mA_{ii}$ for any $m\times m$ matrix $A$ with entries given by $A_{ij}$ for $1 \leq i, j \leq m$, and $||A||_F\coloneqq \sqrt{\text{tr}(A'A)}$ denotes the Frobenius norm of $A$. The remaining notation follows standard conventions.

The structure of the paper is as follows. Section \ref{Section-RJAR-Model} specifies the linear IV model considered throughout. Section \ref{Section-RJAR-HDAR} introduces the RJAR test, and provides the main asymptotic results. Section \ref{Section-RJAR-Simulations} provides simulation evidence on the size and power of our RJAR test and compares it with the tests proposed by BCCH, CT, CMS and MS. Section \ref{Section-RJAR-App} considers an empirical application based on \citet{Card:2009hj}. Section \ref{Section-RJAR-Conclusion} concludes. All proofs are given in the Supplementary Material.

\begin{table}[H]
\scriptsize
\caption{\citet{Anderson:1949tx} tests with many IVs: schematic comparison of main assumptions and results in the literature.}\label{Table-Frontier}
\centering
\begin{tabular}{P{0.22\textwidth}P{0.2\textwidth}P{0.2\textwidth}P{0.15\textwidth}}
\toprule
 & $k_n$ &  $Z$ & $\varepsilon$ \\
\midrule
%

\citet{Kaffo:2017gp, Anatolyev:2010kk} & $k_n/n \to \tau$, $0 \leq \tau < 1$, $n \to \infty$  & $Z$ is fixed, obeys restrictive balanced design assumption & i.i.d., $\mathbb{E}[\varepsilon_i] = 0$, $\mathbb{E}[\varepsilon_i^4] < \infty$ \\

\\

\citet{Belloni:2012kw} & $\log(\text{max}(k_n,n)) = o(n^{1/3})$, $n \to \infty$  & $Z$ is fixed & independent, $\mathbb{E}[\varepsilon_i] = 0$, $\mathbb{E}[\varepsilon_i^3] < \infty$ \\

\\

\citet{Kapetanios:2015fp} & Factor-dependent  & \multicolumn{2}{P{0.35\textwidth}}{Potentially dependent data, $\mathbb{E}[Z_{il}^4] < \infty$, $l = 1, \dots, k_n$, $\mathbb{E}[\varepsilon_i^4] < \infty$ }\\

\\


\citet{Carrasco:2016un} & $n \to \infty$ & \multicolumn{2}{P{0.35\textwidth}}{$Z_i\varepsilon_i$ is i.i.d., $\mathbb{E}[Z_{i}\varepsilon_i] = 0$, $\mathbb{E}[X_{ig}^2] < \infty$} \\

\\

%

\citet{Bun:2021en} & $k_n/n \to \tau$, $0 \leq \tau < 1$, $n \to \infty$  & \multicolumn{2}{P{0.35\textwidth}}{$Z_i\varepsilon_i$ is i.i.d., $Z$ obeys restrictive balanced design assumption, $\mathbb{E}[Z_i\varepsilon_i] = 0$, $\mathbb{E}[Z_i^8\varepsilon_i^8] < \infty$} \\

\\

\citet{Crudu:2020bk} & $k_n/n \to \tau$, $0 \leq \tau < 1$, $k_n \to \infty$ as $n\to \infty$  & $Z$ is fixed, $P_{ii} \leq 1-\delta$, $0 <\delta < 1$ & independent, $\mathbb{E}[\varepsilon_i] = 0$, $\mathbb{E}[\varepsilon_i^4] < \infty$ \\

\\

\citet{Mikusheva:2020uo} & $k_n/n \to \tau$, $0 \leq \tau < 1$, $k_n \to \infty$ as $n\to \infty$ & $Z$ is fixed, $P_{ii} \leq 1-\delta$, $0 <\delta < 1$ & independent, $\mathbb{E}[\varepsilon_i] = 0$, $\mathbb{E}[\varepsilon_i^6] < \infty$ \\

\midrule

RJAR & $r_n \to \infty$ as $n\to \infty$  & $Z$ is fixed, $\underset{n\to \infty}{\liminf} \frac{1}{r_n}\sum_{i = 1}^n\sum_{j\neq i}(P^{\gamma_n}_{ij})^2 > 0$ & independent, $\mathbb{E}[\varepsilon_i] = 0$, $\mathbb{E}[\varepsilon_i^4] < \infty$ \\

\bottomrule
\multicolumn{4}{l}{\emph{Notes}:}\\
\multicolumn{4}{p{0.93\textwidth}}{$n$ is the number of observations, $k_n$ is the number of IVs, and $g$ is the number of endogenous variables. $Z$ is the $n\times k_n$ matrix of IVs. $X$ is the $n\times g$ matrix of endogenous variables. $\varepsilon$ is the $n\times 1$ vector of structural error terms. $r_n := \text{rank}(Z)$. See also Equation \eqref{Equation-RJAR-Model} below.}\\
\multicolumn{4}{p{0.93\textwidth}}{$P := Z(Z'Z)^{-1}Z'$, $P^{\gamma_n} := Z(Z'Z + \gamma_n I_{k_n})^{-1}Z'$ for $\gamma_n  \geq 0$ if $r_n = k_n$ and $\gamma_n \geq \gamma_{-} > 0$ if $r_n < k_n$.}\\
\end{tabular}
\end{table}

\section{Model \label{Section-RJAR-Model}}

We consider the heteroskedastic linear IV model given by
\begin{subequations}
	\label{Equation-RJAR-Model}
	\begin{align}
		y &= X\beta + \varepsilon \label{Equation-RJAR-Model a}\\
		X &= Z\Pi + V, \label{Equation-RJAR-Model b}
	\end{align}
\end{subequations}
 where $y$ is an $n\times 1$ vector containing the dependent variable, $X$ is an $n\times g$ matrix containing the endogenous variables, $\beta$ is a $g\times 1$ coefficient vector, $\varepsilon$ is an $n\times 1$ vector containing the structural error terms, $Z$ is an $n \times k_n$ matrix containing the IVs, $\Pi$ is a $k_n\times g$ coefficient matrix, and $V$ is an $n\times g$ matrix containing the first-stage errors. $k_n$ can diverge with $n$, but $g$ is fixed. Also, let $y_i$, $X_i$, $\varepsilon_i$, $Z_i$, and $V_i$ denote the $i^{\text{th}}$ row of $y, X, \varepsilon, Z$, and $V$, respectively. As in BCCH, CMS and MS, we treat $Z$ as fixed (non-stochastic),\footnote{The treatment of IVs as non-stochastic is an assumption that has been commonly used in several of the papers that are most closely related to ours, see Table 1. 
    With stochastic IVs, one can interpret the results as holding conditional on $Z$, see also CMS Footnote 1.} and assume that any exogenous covariates have been partialled out (see the discussion of Assumption \ref{Assumption-High-Level} below). We \emph{exclusively} consider methods that allow for arbitrarily weak identification, i.e., methods that control asymptotic size irrespective of the value of $\Pi$.


Inference is conducted on the coefficient vector $\beta$ by testing hypotheses of the following type for a prespecified $\beta_0 \in \Re^g$:
\begin{equation}
	\label{Equation-RJAR-Null-Hypothesis}
	H_0: \beta = \beta_0 \text{ vs. } H_1: \beta \neq \beta_0.
\end{equation}

For a given non-randomised test of asymptotic size $\alpha \in (0, 1)$, a confidence set of asymptotic coverage $1-\alpha$ can be constructed as the collection of those~$\beta_0$ for which $H_0$ in Equation \eqref{Equation-RJAR-Null-Hypothesis} is not rejected. For convenience, define 
\begin{equation}\label{eq: struct error under H0}
    e(\beta_0) \coloneqq \left[e_1(\beta_0), \ldots, e_n(\beta_0) \right]', \qquad e_i(\beta_0) \coloneqq  y_i - X_i\beta_0, \qquad i = 1, \dots, n,
\end{equation}
which we refer to as the structural error under the null hypothesis.


It will become apparent below that so long as the error term, $\varepsilon$, remains additive, the linearity in the structural equation \eqref{Equation-RJAR-Model a} and the first-stage equation \eqref{Equation-RJAR-Model b} can be relaxed to allow for any (known) real-valued function, without affecting the asymptotic size of the RJAR test.




\section{The RJAR test \label{Section-RJAR-HDAR}}

This section introduces our proposed RJAR test, derives its large sample properties under the null hypothesis in Equation \eqref{Equation-RJAR-Null-Hypothesis}, and discusses its relationship to the most closely related tests in the literature: the ridge-regularised AR test of CT, the jackknifed AR tests of CMS and MS, and the Sup Score test of BCCH. 

\subsection{Definition of the RJAR test}

The original AR test of the null hypothesis in Equation \eqref{Equation-RJAR-Null-Hypothesis} can be thought of as a test that the IVs are exogenous using the implied structural errors in Equation \eqref{eq: struct error under H0}. More specifically, the AR test is a Wald test of the significance of the IVs in the auxiliary regression of the structural errors under the null, $e_i(\beta_0)$, on the IVs, $Z_i$.\footnote{The AR statistic is defined as $e(\beta_0)'Z \widehat{\Sigma}^{-1} Z'e(\beta_0)/n$, where $\widehat{\Sigma}$ is a consistent estimator of $\text{Var}[Z'e(\beta_0)/\sqrt{n}]$.} The weak-IV-robust AR test of the null hypothesis in Equation \eqref{Equation-RJAR-Null-Hypothesis} with asymptotic size $\alpha$ rejects the null if and only if the AR statistic exceeds the $1-\alpha$ quantile of a $\chi^2$ distribution with $k_n$ degrees of freedom. When the number of IVs $k_n$ grows with $n$, i.e., when there are many (potentially weak) IVs, the $\chi^2$ approximation of the original AR statistic breaks down. The recently proposed AR test of CT, which we shall briefly review in Section \ref{Section-RJAR-Competition}, uses ridge-regularisation to allow for a growing number of IVs that may even exceed the sample size, but is not robust to arbitrary heteroskedasticity in the error terms. The recent papers by CMS and MS, which we shall also briefly review in Section \ref{Section-RJAR-Competition}, propose jackknifed versions of the AR test that remain valid when $k_n$ grows with $n$, but require $k_n<n$. Our proposed method combines ridge regularisation with jackknifing of the auxiliary regression of $e_i(\beta_0)$ on $Z_i$ to allow for a weak-identification robust test capable of dealing with both more IVs than observations and arbitrary heteroskedasticity in the error terms.

We first standardise the IVs in-sample (after partialling out any covariates) as in BCCH (p.~2393) so that
\begin{equation}
    \label{Equation-Standardised}
    \frac 1 n \sum_{i = 1}^n Z_{ij}^2 = 1 \quad \text{for} \quad j = 1, \dots, k_n.
\end{equation}

The RJAR test for the testing problem in Equation \eqref{Equation-RJAR-Null-Hypothesis} is then based on the following statistic:
\begin{align}\label{eq: RJAR}
	RJAR_{\gamma_n}(\beta_0) & \coloneqq   \frac{1}{\sqrt{r_n} \sqrt {\hat{\Phi}_{\gamma_n}(\beta_0)}}\sum_{i = 1}^n \sum_{j \neq i}P_{ij}^{\gamma_n}e_i(\beta_0)e_j(\beta_0),
\end{align}
where $r_n\coloneqq \text{rank}(Z)$ (assumed to be positive without loss of generality), 
\begin{equation}\label{eq: variance estimator}
\hat{\Phi}_{\gamma_n}(\beta_0) \coloneqq  \frac{2}{r_n}\sum_{i = 1}^n \sum_{j \neq i}(P_{ij}^{\gamma_n})^2e_i^2 (\beta_0)e_j^2(\beta_0)    ,
\end{equation}
and $P^{\gamma_n} := Z(Z'Z + \gamma_n  I_{k_n})^{-1}Z'$ is the ridge-regularised projection matrix with $\gamma_n \geq 0$ if $r_n = k_n$ and $\gamma_n>0$ if $r_n < k_n$. $\gamma_n$ is a (sequence of) regularisation parameter(s) whose choice we discuss below. We note that setting $\gamma_n = 0$ when $r_n = k_n$ corresponds to the case where a standard least-squares projection matrix $Z(Z'Z)^{-1}Z'$ is used to construct the RJAR statistic. This is the standard jackknife statistic. We also note that the $r_n$-scaling in the denominator of the RJAR statistic cancels with the $r_n$ in the denominator of $\hat{\Phi}_{\gamma_n}(\beta_0)$. 


We set the regularisation parameter $\gamma_n$ to
	\begin{equation}
		\label{Equation-Gamma-Star}
		\gamma^*_n \coloneqq \max\underset{\gamma_n  \in \Gamma_n}{\arg\max}\sum_{i = 1}^n\sum_{j\neq i}(P^{\gamma_n}_{ij})^2,
	\end{equation}
where $\Gamma_n \coloneqq \{\gamma_n\in \Re : \gamma_n \geq 0 \text{ if } r_n = k_n, \text{ and } \gamma_n \geq \gamma_{-} > 0 \text{ if } r_n<k_n\}$ for some constant $\gamma_{-}$ not depending on $n$. The existence of $\gamma^*_n$ is shown in Lemma \ref{Lemma-Gamma-Star-Existence} in the Supplementary Material. We let $\gamma_n^*$ be an element of $\underset{\gamma_n  \in \Gamma_n}{\arg\max}\sum_{i = 1}^n\sum_{j\neq i}(P^{\gamma_n}_{ij})^2$ out of conservativeness, i.e., to make Assumption \ref{Assumption-High-Level} below as plausible as possible given the IVs. Furthermore, we let $\gamma^*_n$ be the maximal element of this set because the maximiser is not necessarily unique without imposing additional assumptions on the singular values and left-singular vectors of the IVs (although in practice we only found unique maximisers). We choose to take the maximum of the maximisers to make the smallest eigenvalues of the ridge-regularised Gram matrix, $Z'Z + \gamma_nI_{k_n}$, as far away from zero as possible when $r_n < k_n$. To see that~$\gamma_n^*$ maximises the smallest eigenvalue of~$Z'Z + \gamma_nI_{k_n}$ among the elements of~$\underset{\gamma_n  \in \Gamma_n}{\arg\max}\sum_{i = 1}^n\sum_{j\neq i}(P^{\gamma_n}_{ij})^2$, notice that due to the symmetry of $Z'Z + \gamma_nI_{k_n}$, its smallest eigenvalue can be expressed as
\begin{equation*}
    \underset{a\in \Re^{k_n}:||a||_2 = 1}{\text{ min }} a'(Z'Z + \gamma_nI_{k_n})a = \underset{a\in \Re^{k_n}:||a||_2 = 1}{\text{ min }} a'(Z'Z)a + \gamma_n = \gamma_n,
\end{equation*}
where the first equality used that $a'a = 1$, and the second equality used that the smallest eigenvalue of $Z'Z$ is 0 when the rank $r_n$ of $Z$ is less than $k_n$.

We now have all the ingredients to define our new test.
\begin{definition}\label{def: RJAR}
The RJAR test rejects $H_0:\beta=\beta_0$ in Equation \eqref{Equation-RJAR-Null-Hypothesis} at significance level $\alpha\in(0,1)$ if and only if
\begin{equation}\label{eq: RJAR test}
RJAR_{\gamma^*_n}(\beta_0) > \mathcal{Q}(1-\alpha),    
\end{equation}
where $RJAR_{\gamma_n}(\beta_0)$ is defined in Equation \eqref{eq: RJAR}, $\gamma^*_n$ is defined in Equation \eqref{Equation-Gamma-Star}, and $\mathcal Q(1-\alpha)$ is the $(1-\alpha)$ quantile of the Standard Normal distribution.
\end{definition}

\subsection{Asymptotic properties of the RJAR test \label{Section-Asymptotics}}

We make the following assumptions to derive the limiting distribution of $RJAR_{\gamma^*_n}(\beta_0)$ under the null hypothesis in Equation \eqref{Equation-RJAR-Null-Hypothesis}.

\begin{assumption}
	\label{Assumption-MC}
	$\{\varepsilon_i\}_{i\in\mathbb{N}}$ is a sequence of independent random variables satisfying $\mathbb{E}[\varepsilon_i] = 0$, $\inf_{i\in \mathbb{N}}\text{Var}[\varepsilon_i]> 0$, and $\sup_{i\in\mathbb{N}}\mathbb{E}[\varepsilon_i^4] < \infty$.
\end{assumption}

\begin{assumption}
	\label{Assumption-Bekker}
	$r_n = \text{rank}(Z) \to \infty$ as $n \to \infty$. 
\end{assumption}


\begin{assumption}
	\label{Assumption-High-Level}
\hfill \break
\vspace{-1.0cm}
	\begin{enumerate}
	    \item If $r_n = k_n$, then there exists a $\gamma_n\in [0, \infty)$ such that $\underset{n\to \infty}{\liminf} \frac{1}{r_n}\sum_{i = 1}^n\sum_{j\neq i}(P^{\gamma_n}_{ij})^2 > 0$.
	    \item If $r_n < k_n$, then there exists a~$\gamma_{-} > 0$  and a~$\gamma_n\in [\gamma_{-}, \infty)$ such that\\ $\underset{n\to \infty}{\liminf} \frac{1}{r_n}\sum_{i = 1}^n\sum_{j\neq i}(P^{\gamma_n}_{ij})^2 > 0$.
	\end{enumerate}
\end{assumption}

Assumption \ref{Assumption-MC} is a mild condition on the structural error terms, and allows for conditional heteroskedasticity. It is the same as in CMS. It is slightly less restrictive than the one in MS (who require finite sixth moments on the structural error terms), and slightly more restrictive than the one in BCCH (who require finite third moments).\footnote{It is difficult to allow for autocorrelated errors in our framework, since our proof relies on casting the RJAR statistic as a degenerate U-statistic, which requires that $\varepsilon_i$ be independent of $\varepsilon_j$ for $i\neq j$.}

Assumption \ref{Assumption-Bekker} is a weak technical assumption. It implies that both $k_n$ and $n$ diverge. It also allows for the sum of the number of IVs and the number of exogenous covariates to be larger than the number of observations, provided that the number of exogenous covariates that have been partialled out be sufficiently small (so that the rank of the matrix of partialled IVs continues to diverge).\footnote{{More formally, suppose that $W$ is an $n\times l_n$ matrix of covariates with rank $l_n$, and define $M_W := I_{n} - W(W'W)^{-1}W'$. Furthermore, let $\tilde{Z}$ be the $n\times k_n$ matrix of IVs prior to partialling out the controls $W$, and let $\tilde{r}_n := \text{rank}(\tilde{Z})$, such that  $Z = M_W\tilde{Z}$ and $r_n := \text{rank}(Z) \leq \text{min}(n-l_n, \tilde{r}_n)$. Therefore, $l_n$ can diverge as $n\to \infty$ without contradicting $r_n\to \infty$, which means that a growing number of exogenous controls can be accommodated.}} This assumption is weaker than the restriction on the dimensionality in MS and CMS, who require $r_n = k_n$, and $k_n < n$ for each $n\in \mathbb N$, as well as $k_n\to \infty$ as $n\to \infty$. BCCH prove asymptotic size control of their Sup Score test under the assumption that $\log k_n = o(n^{1/3})$.


\subsubsection{Assumption \ref{Assumption-High-Level} when~$k_n<n$}\label{Section: A3 k<n}
Assumption \ref{Assumption-High-Level} is a high-level assumption on the number of IVs and their correlation structure. Assumption \ref{Assumption-High-Level} implies that $\gamma^*_n$ satisfies $\underset{n\to \infty}{\liminf} \frac{1}{r_n}\sum_{i = 1}^n\sum_{j\neq i}(P^{\gamma^*_n}_{ij})^2 > 0$. In the absence of exogenous covariates and when $k_n < n$ (as in CMS and MS), Assumption \ref{Assumption-High-Level} is weaker than the \emph{balanced-design} assumption in CMS and MS which requires for $P := Z(Z'Z)^{-1}Z'$ that
\begin{equation}
    \label{Equation-CMS-MS-BD}
    \underset{1 \leq i \leq n}{\text{ max }}P_{ii} \leq 1 - \delta\qquad\text{for all }n\in\mathbb{N},
\end{equation}
for some $0 < \delta < 1$.  This is stated formally in part 1 of the following Proposition.
\begin{proposition}
	\label{Proposition-Weaker-CMSMS}
	Suppose (as in CMS and MS) that $\text{rank}(P) = k_n$. 
 \begin{enumerate}
     \item If there exists a $\delta\in(0,1)$ such that $\underset{1 \leq i \leq n}{\text{ max }}P_{ii} \leq 1 - \delta$ for all~$n\in\mathbb{N}$ then Assumption \ref{Assumption-High-Level} is satisfied.
     \item For any~$\delta\in(0,1)$, let~$\mathcal{A}_n(\delta):=\{i \in \{1, \dots, n\}: P_{ii} \geq 1 - \delta\}$. If there exists a~$\delta'\in(0,1)$ such that~$\frac{1}{k_n}|\mathcal{A}_n(\delta')| \to 0$ then Assumption \ref{Assumption-High-Level} is satisfied.
 \end{enumerate}
\end{proposition}
Part 2.~of Proposition \ref{Proposition-Weaker-CMSMS} implies that Assumption \ref{Assumption-High-Level} can be satisfied --- even when the balanced-design assumption is not --- as long as the number of diagonal elements of~$P$ ``close to 1'' increases slower than~$k_n$. Note also that if for some~$\delta'\in(0,1)$ it holds that $\underset{1 \leq i \leq n}{\text{ max }}P_{ii} \leq 1 - \delta'$ for all~$n\in\mathbb{N}$, then~$\mathcal{A}_n(\delta')=\emptyset$. Thus, part 1.~of Proposition~\ref{Proposition-Weaker-CMSMS} actually follows from Part 2. However, we have chosen to keep part 1.~as a separate statement to explicitly state that the balanced-design assumption implies Assumption \ref{Assumption-High-Level}.

\subsubsection{Assumption \ref{Assumption-High-Level} when~$k_n\geq n$}\label{Section:A3 with k>n}
In case~$k_n\geq n$ the following proposition provides sufficient conditions for Assumption~\ref{Assumption-High-Level} to be satisfied with probability one in case~$Z$ is random.
\begin{proposition}\label{prop:hd}
    Let the entries of~$Z$ be i.i.d.~with a mean zero and variance one. In addition, let~$p>2$ satisfy~$\mathbb{E}|Z_{11}|^{4p}<\infty$ and let~$k_n=\tau n$ for~$\tau\in[1,\infty)$. Then, if the distribution of~$Z_{11}$ is absolutely continuous with respect to the Lebesgue measure, there exists a constant~$\eta>0$ such that for~$\gamma_{n,\eta}=\eta  n$ it holds with probability one that
    \begin{align*}
        \underset{n\to \infty}{\liminf} \frac{1}{r_n}\sum_{i = 1}^n\sum_{j\neq i}(P^{\gamma_{n,\eta}}_{ij})^2 > 0.
    \end{align*}
\end{proposition}
Note that Assumption~\ref{Assumption-High-Level} is satisfied in particular when the entries of~$Z$ are i.i.d~Gaussian. The latter was used in \cite{Hansen:2014ie} to provide a set of sufficient conditions for their assumptions to be satisfied. The proof of Proposition~\ref{prop:hd} also sheds further light on the exact value of~$\eta$. To summarize, Propositions~\ref{Proposition-Weaker-CMSMS} and~\ref{prop:hd} show that Assumption \ref{Assumption-High-Level} is likely to be satisfied; this can be the case even in settings in which the assumptions underlying previous methods are violated. 


 If very small values of $\frac{1}{r_n}\sum_{i = 1}^n\sum_{j\neq i}(P^{\gamma_n}_{ij})^2$ are observed, then Assumption \ref{Assumption-High-Level} may be questionable. If this is the case, practitioners could make Assumption \ref{Assumption-High-Level} more plausible by suitably modifying the matrix of IVs, $Z$. This could include the dropping of IVs (either randomly or motivated by economic reasoning, but not informed by the in-sample correlation of the IVs with the endogenous variable), or the (iterative) removal of certain observations. If sufficient data is available to split the sample, a data-driven selection of IVs on one split of the sample for use on the other is also possible.

\subsubsection{Asymptotic normality of the RJAR test statistic}
We are now in a position to state the asymptotic distribution of the RJAR test under the null hypothesis.
\begin{theorem}\label{Theorem-RJAR}
Suppose Assumptions \ref{Assumption-MC}--\ref{Assumption-High-Level} and the null hypothesis in Equation \eqref{Equation-RJAR-Null-Hypothesis} hold. Consider any sequence $\gamma_n$ such that $\underset{n\to \infty}{\liminf} \frac{1}{r_n}\sum_{i = 1}^n\sum_{j\neq i}(P^{\gamma_n}_{ij})^2 > 0$. Then, the statistic $RJAR_{\gamma_n}(\beta_0)$ defined in Equation \eqref{eq: RJAR} satisfies%
	\begin{equation*}
		RJAR_{\gamma_n}(\beta_0) \overset{d}{\to}\mathcal{N}[0, 1].
	\end{equation*}
\end{theorem}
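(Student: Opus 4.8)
The plan is to exploit the fact that under $H_0$ the implied structural error coincides with the model error, turning $RJAR_{\gamma_n}(\beta_0)$ into a studentised, diagonal-free quadratic form in independent errors, and then to apply a martingale central limit theorem (CLT). First I would note that under $H_0:\beta=\beta_0$ we have $e_i(\beta_0)=y_i-X_i\beta_0=X_i(\beta-\beta_0)+\varepsilon_i=\varepsilon_i$, so that the numerator of \eqref{eq: RJAR} becomes $S_n:=\sum_{i=1}^n\sum_{j\neq i}P_{ij}^{\gamma_n}\varepsilon_i\varepsilon_j$ and the estimator \eqref{eq: variance estimator} becomes $\hat\Phi_{\gamma_n}=\tfrac{2}{r_n}\sum_{i\neq j}(P_{ij}^{\gamma_n})^2\varepsilon_i^2\varepsilon_j^2$. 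By independence and $\mathbb{E}[\varepsilon_i]=0$, $\operatorname{Var}(S_n)=2\sum_{i\neq j}(P_{ij}^{\gamma_n})^2\sigma_i^2\sigma_j^2=:r_n\Phi_{\gamma_n}$ with $\sigma_i^2=\operatorname{Var}(\varepsilon_i)$, and $\mathbb{E}[\hat\Phi_{\gamma_n}]=\Phi_{\gamma_n}$.

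Next I would collect the deterministic facts about $P^{\gamma_n}$. Since $P^{\gamma_n}$ is symmetric positive semidefinite with eigenvalues $s_\ell^2/(s_\ell^2+\gamma_n)\in[0,1]$ (in terms of the singular values $s_\ell$ of $Z$), one has $|P_{ij}^{\gamma_n}|\le 1$, $((P^{\gamma_n})^2)_{ii}=\sum_j (P_{ij}^{\gamma_n})^2\le 1$ for every $i$, and $\operatorname{tr}((P^{\gamma_n})^4)\le\operatorname{tr}((P^{\gamma_n})^2)=\sum_{i,j}(P_{ij}^{\gamma_n})^2$. Because the diagonal entries are bounded while, by the hypothesis of the theorem, $\sum_{i\neq j}(P_{ij}^{\gamma_n})^2\ge cr_n^{\omega}\to\infty$, it follows that $\operatorname{tr}((P^{\gamma_n})^2)\to\infty$ and hence $\operatorname{tr}((P^{\gamma_n})^4)/(\operatorname{tr}((P^{\gamma_n})^2))^2\to 0$; together with $\inf_i\sigma_i^2>0$ and $\sup_i\sigma_i^2<\infty$ (Assumption \ref{Assumption-MC}) this also shows $\Phi_{\gamma_n}$ is bounded away from $0$ and $\infty$, which is the non-degeneracy making studentisation legitimate.

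For the CLT I would order the observations, set $\mathcal{F}_j:=\sigma(\varepsilon_1,\dots,\varepsilon_j)$, and use the symmetry of $P^{\gamma_n}$ to write $S_n=\sum_{j=2}^n D_j$ with $D_j:=2\varepsilon_j\sum_{i<j}P_{ij}^{\gamma_n}\varepsilon_i$; since $\mathbb{E}[D_j\mid\mathcal{F}_{j-1}]=0$, $\{D_j\}$ is a martingale difference array. I would then invoke a standard martingale CLT (e.g. Hall--Heyde), which requires (i) the conditional-variance convergence $V_n/\operatorname{Var}(S_n)\overset{p}{\to}1$ with $V_n:=\sum_j\mathbb{E}[D_j^2\mid\mathcal{F}_{j-1}]=4\sum_j\sigma_j^2(\sum_{i<j}P_{ij}^{\gamma_n}\varepsilon_i)^2$, and (ii) a conditional Lindeberg condition, for which I would verify the Lyapunov bound $\sum_j\mathbb{E}[D_j^4]/\operatorname{Var}(S_n)^2\to 0$. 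For (ii), Rosenthal's inequality gives $\mathbb{E}[D_j^4]\lesssim(\sum_{i<j}(P_{ij}^{\gamma_n})^2)^2$, so $\sum_j\mathbb{E}[D_j^4]\lesssim\max_j(\sum_{i}(P_{ij}^{\gamma_n})^2)\cdot\operatorname{tr}((P^{\gamma_n})^2)\le\operatorname{tr}((P^{\gamma_n})^2)$, which over $\operatorname{Var}(S_n)^2\asymp(\operatorname{tr}((P^{\gamma_n})^2))^2$ tends to $0$. For (i), since $\mathbb{E}[V_n]=\operatorname{Var}(S_n)$, it suffices to show $\operatorname{Var}(V_n)/(\mathbb{E}[V_n])^2\to 0$; expanding $\operatorname{Var}(V_n)$ produces fourth-order sums of products of $P^{\gamma_n}$-entries weighted by error cross-moments, and each such sum is controlled either by $\operatorname{tr}((P^{\gamma_n})^4)$ or by $\max_i((P^{\gamma_n})^2)_{ii}\cdot\operatorname{tr}((P^{\gamma_n})^2)$, both of which are $o((\operatorname{tr}((P^{\gamma_n})^2))^2)$ by the bounds above.

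It remains to handle studentisation. I would show $\hat\Phi_{\gamma_n}/\Phi_{\gamma_n}\overset{p}{\to}1$ by a second-moment argument: since $\mathbb{E}[\hat\Phi_{\gamma_n}]=\Phi_{\gamma_n}\gtrsim 1$, it is enough that $\operatorname{Var}(\hat\Phi_{\gamma_n})\to 0$, and computing this, the surviving terms (those sharing one or two indices) are bounded, using $\sup_i\mathbb{E}[\varepsilon_i^4]<\infty$, by $r_n^{-2}\max_i((P^{\gamma_n})^2)_{ii}\operatorname{tr}((P^{\gamma_n})^2)\lesssim r_n^{-1}\to 0$. Finally I would write $RJAR_{\gamma_n}(\beta_0)=(S_n/\sqrt{\operatorname{Var}(S_n)})\cdot\sqrt{\Phi_{\gamma_n}/\hat\Phi_{\gamma_n}}$ and conclude by Slutsky's theorem that $RJAR_{\gamma_n}(\beta_0)\overset{d}{\to}\mathcal N(0,1)$; since Assumption \ref{Assumption-High-Level} guarantees that $\gamma_n^*$ satisfies the required lower bound, the result applies to the feasible test. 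I expect the main obstacle to be step (i) --- controlling $\operatorname{Var}(V_n)$ --- because it demands careful combinatorial bookkeeping of the fourth-order index sums and is exactly where Assumption \ref{Assumption-High-Level}, through $\operatorname{tr}((P^{\gamma_n})^2)\to\infty$, is indispensable.
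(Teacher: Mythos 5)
Your overall architecture matches the paper's proof almost exactly: under $H_0$ the statistic is a diagonal-free quadratic form in $\varepsilon$, you decompose it into the martingale differences $D_j=2\varepsilon_j\sum_{i<j}P^{\gamma_n}_{ij}\varepsilon_i$ (the paper's $Y_{nj}$), invoke the Hall--Heyde CLT via a Lyapunov bound for the conditional Lindeberg condition and a second-moment bound for the conditional variance, prove $\hat\Phi_{\gamma_n}/\Phi_{\gamma_n}\overset{p}{\to}1$ by a variance calculation, and finish with Slutsky. The Lyapunov step and the studentisation step are fine as you sketch them, since the sums arising there ($\sum_{i\neq j}(P^{\gamma_n}_{ij})^4$ and $\sum_{i,j,h}(P^{\gamma_n}_{ij})^2(P^{\gamma_n}_{ih})^2$) are sums of \emph{squares}, hence sign-definite and genuinely controlled by the trace inequalities you cite.

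The gap is exactly where you predicted trouble: controlling $\operatorname{Var}(V_n)$. Writing $W_j=\sum_{i<j}P^{\gamma_n}_{ij}\varepsilon_i$, the covariances $\operatorname{Cov}(W_j^2,W_{j'}^2)$ for $j\neq j'$ produce, besides harmless square terms, the sign-indefinite \emph{order-restricted} sum
\begin{equation*}
\sum_{i<h<j<j'}P^{\gamma_n}_{ij}P^{\gamma_n}_{hj}P^{\gamma_n}_{ij'}P^{\gamma_n}_{hj'}\,\sigma_i^2\sigma_h^2\sigma_j^2\sigma_{j'}^2 ,
\end{equation*}
and your claim that such terms are "controlled either by $\operatorname{tr}((P^{\gamma_n})^4)$ or by $\max_i((P^{\gamma_n})^2)_{ii}\operatorname{tr}((P^{\gamma_n})^2)$" does not follow from the stated facts. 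The unrestricted sum $\sum_{i,h,j,j'}P^{\gamma_n}_{ij}P^{\gamma_n}_{hj}P^{\gamma_n}_{ij'}P^{\gamma_n}_{hj'}=\operatorname{tr}((P^{\gamma_n})^4)$ is indeed small, but the ordering $i<h<j<j'$ destroys the trace identity, and because the summands have mixed signs you cannot majorise the restricted sum by the unrestricted one; attempts to "complete" it (e.g.\ via $B_{ih}=\sum_{j>h}P^{\gamma_n}_{ij}P^{\gamma_n}_{hj}$, so the sum equals $\tfrac12\sum_{i<h}B_{ih}^2$ minus a square term) only generate new ordered sums of different shapes. This is precisely the content of the paper's Lemma \ref{Lemma-B2}, whose proof is the longest and most delicate part of the appendix: it combines the bound $\operatorname{tr}((P^{\gamma_n}-G)^4)\leq 16r_n$ (with $G$ the diagonal of $P^{\gamma_n}$), an expansion of $(L+L')^4$ for the lower-triangular part $L$ into ordered sums, and an auxiliary-randomisation trick (the variables $\Delta_1,\Delta_2,\Delta_3$ built from i.i.d.\ $u_i$'s) to isolate and bound the specific ordered sum by $Cr_n$. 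Without this lemma, or an equivalent combinatorial argument, your step (i) is unproven, and it is the one step of the theorem that cannot be dispatched by the elementary spectral facts you list.
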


\begin{corollary}\label{Proposition:RJAR}
Under Assumptions \ref{Assumption-MC}--\ref{Assumption-High-Level} and the null hypothesis in Equation \eqref{Equation-RJAR-Null-Hypothesis}, the RJAR test given in Definition \ref{def: RJAR} (i.e.~using~$\gamma_n=\gamma^*_n$ for all~$n\in\mathbb{N}$) has asymptotic size $\alpha$.
\end{corollary}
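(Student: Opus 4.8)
The plan is to deduce the result directly from Theorem \ref{Theorem-RJAR}, which already carries all of the analytic weight by establishing the asymptotic normality of $RJAR_{\gamma_n}(\beta_0)$ for \emph{any} admissible penalty sequence obeying the rate condition. It therefore suffices to (i) verify that the specific data-driven penalty $\gamma_n^*$ employed by the test of Definition \ref{def: RJAR} is such a sequence, and (ii) convert the distributional convergence into a statement about the rejection probability using continuity of the limiting law. No new probabilistic machinery is needed beyond the theorem.

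First I would observe that, because $Z$ is treated as fixed, both $P^{\gamma_n}$ and hence $\gamma_n^*$ defined in Equation \eqref{Equation-Gamma-Star} are nonrandom, so $\{\gamma_n^*\}$ is a deterministic sequence to which Theorem \ref{Theorem-RJAR} applies verbatim. It remains to check the rate condition $\sum_{i=1}^n\sum_{j\neq i}(P_{ij}^{\gamma_n^*})^2\geq cr_n^{\omega}$ for all large $n$. By construction $\gamma_n^*\in\Gamma_n$ maximises the map $\gamma_n\mapsto\sum_{i=1}^n\sum_{j\neq i}(P_{ij}^{\gamma_n})^2$ over $\Gamma_n$, its existence being guaranteed by Lemma \ref{Lemma-Gamma-Star-Existence}. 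Assumption \ref{Assumption-High-Level} supplies, for each $n\geq N$, an admissible $\gamma_n\in\Gamma_n$ at which this same objective is at least $cr_n^{\omega}$; since $\gamma_n^*$ maximises the objective over $\Gamma_n$, it attains a value at least as large, whence $\sum_{i=1}^n\sum_{j\neq i}(P_{ij}^{\gamma_n^*})^2\geq cr_n^{\omega}$ for all $n\geq N$. Thus the hypotheses of Theorem \ref{Theorem-RJAR} hold with $\gamma_n=\gamma_n^*$.

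Applying Theorem \ref{Theorem-RJAR} with this sequence under Assumptions \ref{Assumption-MC}--\ref{Assumption-High-Level} and the null in Equation \eqref{Equation-RJAR-Null-Hypothesis} yields $RJAR_{\gamma_n^*}(\beta_0)\overset{d}{\to}\mathcal{N}[0,1]$. Since the test rejects exactly when $RJAR_{\gamma_n^*}(\beta_0)>\mathcal{Q}(1-\alpha)$, and the standard normal distribution function is continuous everywhere—in particular at its fixed $(1-\alpha)$ quantile $\mathcal{Q}(1-\alpha)$—the portmanteau lemma gives
\[
\Pr\!\left[RJAR_{\gamma_n^*}(\beta_0)>\mathcal{Q}(1-\alpha)\right]\;\longrightarrow\;\Pr\!\left[\mathcal{N}[0,1]>\mathcal{Q}(1-\alpha)\right]=\alpha,
\]
which is the claimed asymptotic size. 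The only point requiring care, rather than a genuine obstacle, is this last passage from weak convergence to convergence of the tail probability; it is legitimate precisely because the limit is continuous at the nonrandom critical value, so that the boundary event has limiting probability zero. All the substantive difficulty—a martingale-type central limit theorem for the off-diagonal bilinear form together with consistency of the variance estimator $\hat\Phi_{\gamma_n}(\beta_0)$—is confined to Theorem \ref{Theorem-RJAR}, which I take as given here.
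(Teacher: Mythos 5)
Your proof is correct and follows essentially the same route the paper intends: Assumption \ref{Assumption-High-Level} together with the maximality of $\gamma_n^*$ (a deterministic sequence, since $Z$ is fixed) yields the rate condition $\sum_{i=1}^n\sum_{j\neq i}(P_{ij}^{\gamma_n^*})^2\geq cr_n^{\omega}$, after which Theorem \ref{Theorem-RJAR} and continuity of the standard normal distribution at $\mathcal{Q}(1-\alpha)$ give the convergence of the rejection probability to $\alpha$. Nothing is missing.
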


Notice that we did not impose any assumption on the coefficients of the instruments $\Pi$ in the first-stage regression in Equation \eqref{Equation-RJAR-Model}. Therefore, the RJAR test is robust to arbitrarily weak identification.

\subsection{Closest alternatives in the literature}\label{Section-RJAR-Competition}

The RJAR test combines two existing approaches in the literature. It uses ridge-regularisation as in CT to allow for $r_n < k_n$, and jackknifing as in CMS and MS to allow for arbitrary heteroskedasticty in the error terms.

The RJAR test is similar to the ridge-regularised AR test proposed by CT given by
\begin{equation*}
        AR_{CT} = \frac{ne(\beta_0)'P^{\theta}e(\beta_0)}{e(\beta_0)'(I_n - P^{\theta})e(\beta_0)},
\end{equation*}
where $P^{\theta} = Z(Z'Z + \theta I_k)^{-1}$ for some fixed scalar $\theta$ that does not depend on $n$, where $\theta \geq 0$ if $r_n = k_n$, and $\theta > 0$ if $r_n < k_n$. CT show that under the assumption of homoskedastic error terms, $AR_{CT}$ converges to an infinite sum of weighted $\chi^2_1$ distributions. Since the limiting distribution depends on an infinite number of unobserved weights, CT propose a bootstrap procedure to derive the critical values for $AR_{CT}$.
The distinguishing features of the RJAR test are the data-driven and unique penalty parameter used to regularise the projection matrix (we find substantial sensitivity of the performance of $AR_{CT}$ to different values of $\theta$), the robustness to arbitrary heteroskedasticiy in the error terms, and its computational speed (it uses standard Normal asymptotic critical values instead of the bootstrap).

Robustness to arbitrary heteroskedasticity in the error terms is achieved through jackknifing, as in CMS and MS. The distinguishing feature of the RJAR test to the tests proposed in CMS and MS is the use of a ridge-regularised projection matrix $P^{\gamma_n}$, which makes the RJAR test applicable also when $k_n > r_n$, unlike the aforementioned two tests that use the standard least-squares projection matrix $Z'(Z'Z)^{-1}Z'$, and cannot be computed when $k_n > r_n$.

CMS assume $r_n = k_n < n$, and propose the jackknifed AR statistic given by
\begin{equation*}
	AR_{CMS}(\beta_0) := \frac{1}{\sqrt{k_n} \sqrt{\hat{\Phi}_{CMS}(\beta_0)}}\sum_{i = 1}^n \sum_{j \neq i}C_{ij}e_i(\beta_0)e_j(\beta_0),
\end{equation*}
where $C := A - B$, $A := P + \Delta$, $B := (I_n - P)D(I_n - D)^{-1}(I_n - P)$, $\Delta := PD(I_n - D)^{-1}P - \frac 1 2 PD(I_n-D)^{-1} - \frac 1 2 D(I_n-D)^{-1}P$, and $D$ is the diagonal matrix containing the diagonal elements of $P$. $\hat{\Phi}_{CMS}(\beta_0) := \frac{2}{k_n}\sum_{i = 1}^n\*\sum_{j \neq i}C_{ij}^2\*e_i^2(\beta_0)\*e_j^2(\beta_0)$. Under the null hypothesis in Equation \eqref{Equation-RJAR-Null-Hypothesis}, CMS show that $AR_{CMS}(\beta_0)$ converges to a Standard Normal distribution. The CLT underlying this result is a modified version of Lemma A2 of \citet{Chao:2012iz} proposed in \citet[Appendix A.4]{Bekker:2015jm}.

MS also assume that $r_n = k_n < n$, and propose a different jackknifed AR statistic than CMS that can be obtained from $RJAR_{\gamma_n}(\beta_0)$ in Equation \eqref{eq: RJAR} by setting $\gamma_n = 0$, and replacing $\hat{\Phi}_{\gamma_n}(\beta_0)$ with 
\begin{equation}\label{eq: Phi_MS}
	\hat{\Phi}_{MS}(\beta_0) := \frac{2}{k_n}\sum_{i = 1}^n\sum_{j \neq i}\frac{P_{ij}^2}{M_{ii}M_{jj} + M_{ij}^2}\left[e_i(\beta_0)M_ie(\beta_0)\right]\left[e_j(\beta_0)M_je(\beta_0)\right],
\end{equation}
where $M = I_n - P$, and $M_i$ is the $i^{\text{th}}$ row of $M$. The reason why the unregularised jackknifed AR test in MS uses $\hat{\Phi}_{MS}(\beta_0)$ instead of the variance estimator given in Equation \eqref{eq: variance estimator} evaluated at $\gamma_n = 0$, is because, according to their Theorem 4, and the discussion in Section 4.2.1, the former yields higher power than the latter. It follows that the unregularised version of our RJAR test, which arises when $\gamma^*_n = 0$ in Equation \eqref{Equation-Gamma-Star}, will be dominated by the jackknifed AR test of MS in terms of power, and practitioners may prefer the latter when $r_n = k_n$ and the balanced-design assumption is satisfied. This can have implications for the finite-sample performance of the jackknifed AR test of MS compared to the RJAR, which we investigate in Section \ref{Section-RJAR-Simulations}.

The only other existing test that allows for $r_n < k_n$ and arbitrary heteroskedasticty is the Sup Score test of BCCH. BCCH first standardise the IVs as in Equation \eqref{Equation-Standardised}, and then propose the Sup Score statistic given by


\begin{equation*}
	{S}(\beta_0) = \underset{1 \leq j \leq k_n}{\text{ max }} \frac{\left|\frac{1}{\sqrt n}\sum_{i = 1}^ne_i(\beta_0)Z_{ij}\right|}{\sqrt{\frac 1 n \sum_{i = 1}^n\left(e_i(\beta_0)\right)^2Z_{ij}^2}}.
\end{equation*}


BCCH propose to use the critical value $\nu_\alpha = c_{BCCH} \*\mathcal{Q}\left(1-\alpha/(2k_n)\right)$, $c_{BCCH} > 1$ and $\alpha \in (0, 1)$. They show that comparing their Sup Score statistic to this critical value yields a test of the null hypothesis in Equation \eqref{Equation-RJAR-Null-Hypothesis} that has asymptotic size less than or equal to $\alpha$. Being a supremum-norm test suggests that the BCCH Sup Score test will work well with a sparse first stage (i.e. where only a few elements of $\Pi$ are zero), but may have lower power than the RJAR test when the first stage is dense. This is verified in the simulations in Section \ref{Section-RJAR-Simulations}.

\section{Simulations \label{Section-RJAR-Simulations}}

We now investigate the size and power properties of the RJAR test and compare them to those of the tests proposed in CT, CMS, MS, and BCCH. We take our simulation setup from \citet{Hansen:2014ie} who in turn take theirs from BCCH. The DGP is given by
\begin{subequations}
	\label{Equation-RJAR-Simulation-Model}
	\begin{align}
			y_i & = X_i\*\beta + \varepsilon_i \label{Equation-Simulation-Second}\\
	        X_i & = Z_{i}'\*\pi + v_i, \label{Equation-Simulation-First}
	\end{align}
\end{subequations}
for $i = 1, \dots, n = 100$ . The IVs $Z_i$ are independent and identically Gaussian with mean 0 and $\text{Var}\left[Z_{il} \right] = 0.3 $ and $\text{Corr}\left[ Z_{il}, Z_{im}\right] = 0.5^{|l-m|}$.\footnote{Simulations in Appendix \ref{Appendix-Simulations-Uncorrelated-IVs} show that the results remain qualitatively unchanged if uncorrelated Gaussian IVs are considered instead.} The error terms are given by
\begin{equation*}
    \begin{aligned}
\begin{bmatrix}
		\varepsilon_{i}\\
		v_{i}
	\end{bmatrix} &\sim \mathcal N \left[0,   
\left[\begin{smallmatrix}
\sigma_\varepsilon^2 & \sigma_{\varepsilon\*v} \\ 
 \sigma_{\varepsilon\*v} &\sigma_v^2 
\end{smallmatrix}\right]
\right]
\end{aligned}
\end{equation*}
with $\sigma_\varepsilon^2=2$, $\sigma_v^2=1$ and $\sigma_{\varepsilon v} = 0.6\sigma_\varepsilon\sigma_v$.

$\pi = \zeta\*\kappa$, where $\kappa$ is a vector of zeros and ones that varies with the type of DGP considered (sparse or dense, as modelled below), and $\zeta$ is some scalar that ensures that for a given concentration parameter, $\mu^2$, the following relationship is satisfied:
\begin{equation*}
	\mu^2 = \frac{n\*\pi'\mathbb{E}\left[Z_{i}\*Z_{i}'\right]\*\pi}{\sigma_v^2}.
\end{equation*}

This implies that
\begin{equation*}
	\zeta = \sqrt{\frac{\sigma_v^2\*\mu^2}{n\kappa'\mathbb{E}\left[Z_{i}\*Z_{i}'\right]\kappa}}.
\end{equation*}

To illustrate how the sparsity structure of the first stage in Equation \eqref{Equation-Simulation-First} can affect the size and power of the studied tests, we consider both a sparse first stage and a dense first stage. Sparsity in the first stage is modelled by setting $\kappa = [\iota_5', 0_{k_n-5}' ]'$, where $\iota_q$ is a $q\times 1$ vector of ones, and $0_q$ is a $q\times 1$ vector of zeros. Denseness in the first stage is modelled by setting $\kappa = [\iota_{0.4k_n}', 0_{0.6k_n}']'$.  

We consider $k_n = 30, 90, 190$. In the context of Assumption \ref{Assumption-High-Level}, we search over values greater than 1 when choosing $\gamma^*_n$ in case $r_n<k_n$. For the case of 30 IVs, the RJAR test does not impose any regularisation ($\gamma^*_n = 0$). For the case of 90 IVs, $\gamma^*_n = 12.048$. We note that in the latter case~$\sum_{i = 1}^n\sum_{j\neq i}(P^{\gamma^*_n}_{ij})^2 = 11.123 > 8.845  = \sum_{i = 1}^n\sum_{j\neq i}(P_{ij})^2$. This shows that even in the case where $r_n < n$, ridge regularisation can make Assumption \ref{Assumption-High-Level} strictly more plausible. For the case of 190 IVs, $\gamma^*_n = 109.187$. 

The variance estimator of MS occasionally yields a negative value. These cases are conservatively interpreted as a failure to reject the null hypothesis. As recommended by BCCH, $c_{BCCH} = 1.1$. As in the simulation section in CT, we set $\theta = 0.05$. The number of Monte Carlo replications is 10,000.

 Appendix \ref{Appendix-Hetero} provides additional simulation evidence on inference with heteroskedastic error terms. Appendix \ref{Appendix-Projection} provides additional simulation evidence on projection-based inference.

\subsection{Size}

Figure \ref{Figure-Size-Homo} shows the simulation results with a sparse first stage for tests of size 0.01 to 0.99, that is the rejection frequency under $H_0:\beta_0 = 1$. As far as the illustration of the tests' size properties is concerned, the dense first stage yields virtually the same results, and is hence omitted. Since all tests are robust to weak IVs, the rejection frequencies of the tests are not affected by the strength of identification. For the case of 30 IVs, the AR tests of CT, CMS, and MS and the RJAR test have correct size, while the BCCH Sup Score test is undersized. 

For the case of 90 IVs, the AR test of CT and the RJAR test control size. The AR test of CMS appears to control size for common small nominal test sizes (e.g., 0.05 or 0.1). The AR test of MS appears to be generally oversized. For example, at nominal level 0.05, the rejection frequency of the test is 0.189.\footnote{The simulations in Appendix \ref{Appendix-Simulations-1000} show that if $n$ and $k_n$ are increased to about 1,000 and 900, respectively, there is virtually no more size distortion.} The BCCH Sup Score test continues to be undersized. 

For the case of 190 IVs, only the AR test of CT, the BCCH Sup Score test, and the RJAR test are feasible. As before, the BCCH Sup Score test is undersized, while the AR test of CT and the RJAR test have correct size. 

\begin{figure}[H]
    \centering
    \begin{subfigure}[b]{0.5\textwidth}
        \includegraphics[width=\textwidth]{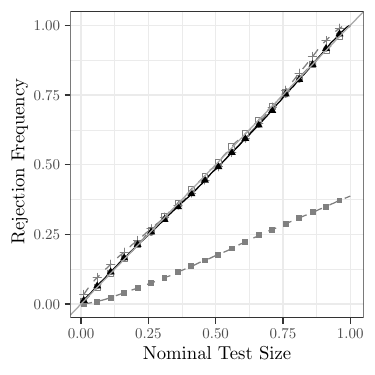}
        \caption{$k_n = 30$}
    \end{subfigure}%
    \begin{subfigure}[b]{0.5\textwidth}
        \includegraphics[width=\textwidth]{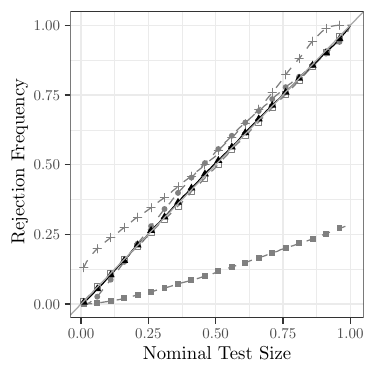}
        \caption{$k_n = 90$}
    \end{subfigure}
  \begin{subfigure}[b]{0.5\textwidth}
        \includegraphics[width=\textwidth]{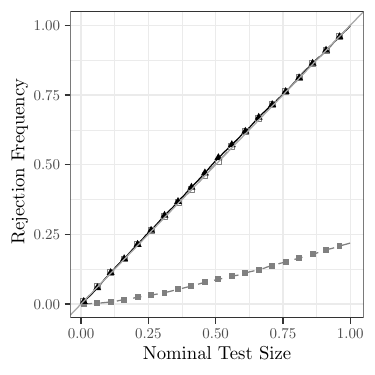}
        \caption{$k_n = 190$}
    \end{subfigure}
    
    \begin{subfigure}[c]{\textwidth}
    	\centering
    	\includegraphics{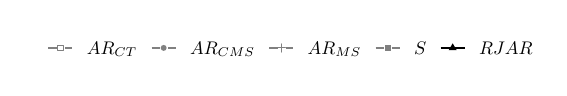}
    \end{subfigure}
    \caption{PP Plots for Sparse IVs, homoskedastic errors, $\beta = 1$, $\mu^2 = 0$, $H_0: \beta_0 = 1$.}
    \label{Figure-Size-Homo}
\end{figure}

\subsection{Power}

Figures \ref{Figure-Power-30-Homo}--\ref{Figure-Power-190-Homo} show the power of the tests when the number of IVs and the sparsity pattern of the first stage is varied. It is still the case that $H_0:\beta_0 = 1$.  

For the case of 30 IVs (Figure \ref{Figure-Power-30-Homo}), the AR tests of CT and CMS have similar power to the RJAR test, while MS is slightly more powerful than the RJAR test. The BCCH Sup Score test is less powerful than all other tests.

For the case of 90 sparse IVs (Figure \ref{Figure-Power-90-Homo}), the RJAR test is slightly more powerful than the BCCH Sup Score test. The AR test of MS fails to control the size, while the AR tests of CT and CMS exhibit power properties substantially worse than those of the BCCH Sup Score test and the RJAR test. For the case of 90 dense IVs (Figure \ref{Figure-Power-90-Homo}), the RJAR test is substantially more powerful than all other alternatives. For the case of 190 sparse and dense IVs (Figure \ref{Figure-Power-190-Homo}), the RJAR test is more powerful than the BCCH Sup Score test.\footnote{The poor power of the BCCH Sup Score is likely a consequence of the Union Bound underlying their Lemma 5. We note that the poor power properties of the BCCH Sup Score are acknowledged in BCCH, and are what leads them to recommend using their main estimator (Post-LASSO two-stage least squares) whenever weak identification is not a concern, and the first stage is sufficiently sparse.} The RJAR test is also more powerful than the AR test of CT. Thus, for all the DGPs that are considered here, the RJAR test is as powerful as existing methods whenever these are applicable, and sometimes much more powerful.

\begin{figure}[H]
    \begin{subfigure}[b]{0.5\textwidth}
        \includegraphics[width=\textwidth]{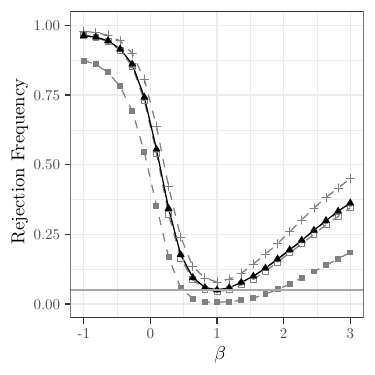}
        \caption{$\mu^2 = 30$, sparse IVs}
    \end{subfigure}%
    \begin{subfigure}[b]{0.5\textwidth}
        \includegraphics[width=\textwidth]{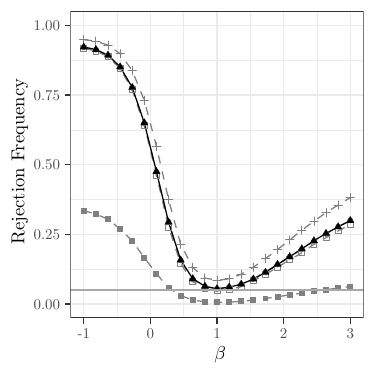}
        \caption{$\mu^2 = 30$, dense IVs}
    \end{subfigure}
    \begin{subfigure}[b]{0.5\textwidth}
        \includegraphics[width=\textwidth]{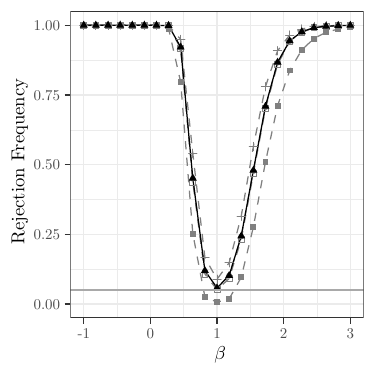}
        \caption{$\mu^2 = 180$, sparse IVs}
    \end{subfigure}%
    \begin{subfigure}[b]{0.5\textwidth}
        \includegraphics[width=\textwidth]{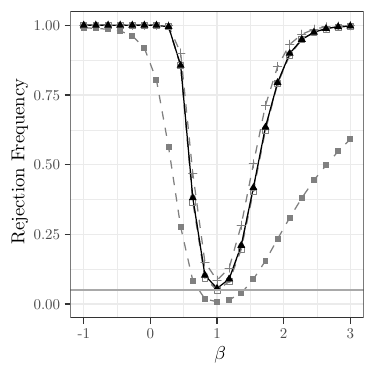}
        \caption{$\mu^2 = 180$, dense IVs}
    \end{subfigure}
    \begin{subfigure}[c]{\textwidth}
    	\centering
    	\includegraphics{TablesFigures/PowerRevision/legend_ld.pdf}
    \end{subfigure}
    \caption{Power curves for 30 IVs. Nominal test size of 5\% indicated by the grey horizontal line. $H_0: \beta_0 = 1$.}
    \label{Figure-Power-30-Homo}
\end{figure}

\begin{figure}[H]
    \begin{subfigure}[b]{0.5\textwidth}
        \includegraphics[width=\textwidth]{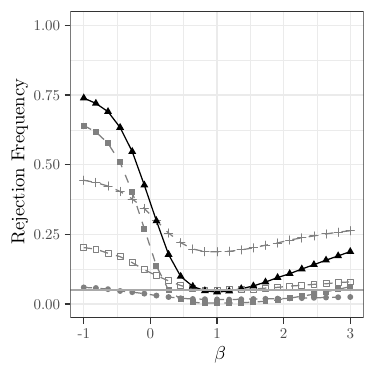}
        \caption{$\mu^2 = 30$, sparse IVs}
    \end{subfigure}%
    \begin{subfigure}[b]{0.5\textwidth}
        \includegraphics[width=\textwidth]{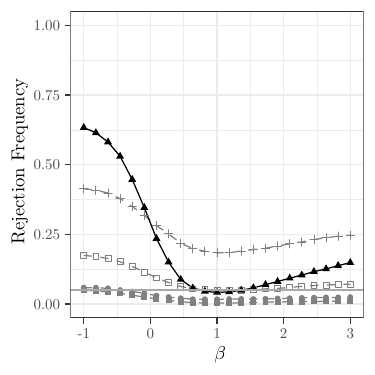}
        \caption{$\mu^2 = 30$, dense IVs}
    \end{subfigure}
    \begin{subfigure}[b]{0.5\textwidth}
        \includegraphics[width=\textwidth]{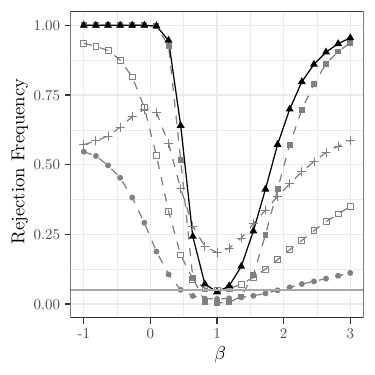}
        \caption{$\mu^2 = 180$, sparse IVs}
    \end{subfigure}%
    \begin{subfigure}[b]{0.5\textwidth}
        \includegraphics[width=\textwidth]{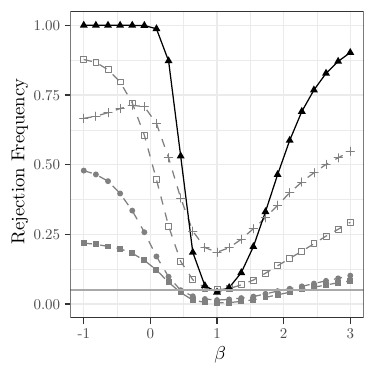}
        \caption{$\mu^2 = 180$, dense IVs}
    \end{subfigure}
    \begin{subfigure}[c]{\textwidth}
    	\centering
    	\includegraphics{TablesFigures/PowerRevision/legend_ld.pdf}
    \end{subfigure}
    \caption{Power curves for 90 IVs. Nominal test size of 5\% indicated by the grey horizontal line. $H_0: \beta_0 = 1$. }
    \label{Figure-Power-90-Homo}
\end{figure}

\begin{figure}[H]
    \begin{subfigure}[b]{0.5\textwidth}
        \includegraphics[width=\textwidth]{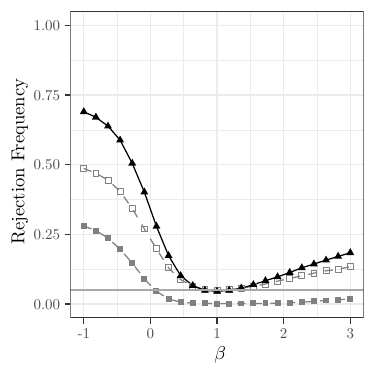}
        \caption{$\mu^2 = 30$, sparse IVs}
    \end{subfigure}%
    \begin{subfigure}[b]{0.5\textwidth}
        \includegraphics[width=\textwidth]{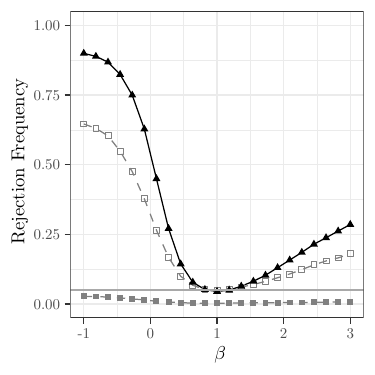}
        \caption{$\mu^2 = 30$, dense IVs}
    \end{subfigure}
    \begin{subfigure}[b]{0.5\textwidth}
        \includegraphics[width=\textwidth]{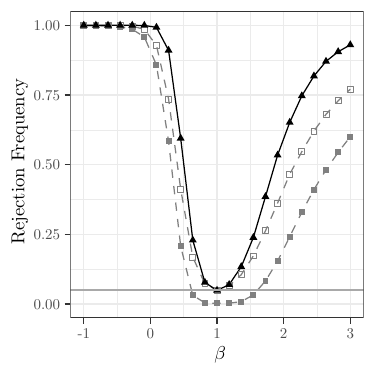}
        \caption{$\mu^2 = 180$, sparse IVs}
    \end{subfigure}%
    \begin{subfigure}[b]{0.5\textwidth}
        \includegraphics[width=\textwidth]{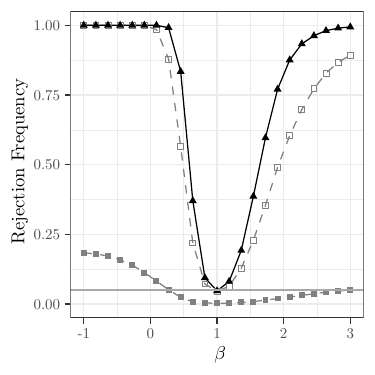}
        \caption{$\mu^2 = 180$, dense IVs}
    \end{subfigure}
    \begin{subfigure}[c]{\textwidth}
    	\centering
    	\includegraphics{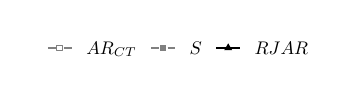}
    \end{subfigure}
    \caption{Power curves for 190 IVs. Nominal test size of 5\% indicated by the grey horizontal line. $H_0: \beta_0 = 1$.}
    \label{Figure-Power-190-Homo}
\end{figure}

\section{Empirical Application \label{Section-RJAR-App}}

We consider an empirical application based on \citet{Card:2009hj}. The coefficient of interest is given by $\beta_s$ in the following model:
\begin{equation}
	\label{Equation-RJAR-Card-2}
	\begin{aligned}
		y_{is} &= \beta_s X_{is} + \delta_s'W_i + \varepsilon_{is}, \\
	\end{aligned}
\end{equation}
where $y_{is}$ is the difference between residual log wages for immigrant and native men in skill group $s$ in city $i$,\footnote{As discussed in \citet[p.~11, footnote 17]{Card:2009hj}, residual wages are wages once observed characteristics of the entire US workforce are controlled for.} $X_{is}$ is the log ratio of immigrant to native hours worked in skill group $s$ of both men and women in city $i$, and $W_i$ is a vector of city-level controls with coefficient vector $\delta_s$, and $\varepsilon_{is}$ is the structural error. In the context of the production function specified in \citet[Section I]{Card:2009hj}, $\beta_s$ can be interpreted as the (negative) inverse elasticity of substitution  between immigrants and natives in the US in their respective skill group. As in \citet{Card:2009hj}, we consider two skill groups $s = h, c$ (high school or college equivalent) separately. 

\citet{Card:2009hj} raises the concern that unobserved factors in a city may lead to both higher wages and higher employment levels of immigrants relative to natives, causing $X_{is}$ to be endogenous. \citet{Card:2009hj} proposes to use the ratio of the number of immigrants from country $l$ in city $i$ to the total number of immigrants from foreign country $l$ in the US as an IV. The rationale for these IVs is that existing immigrant enclaves are likely to attract additional immigrant labour through social and cultural channels unrelated to labour market outcomes. We consider two sets of IVs. First, we consider the original setup of \citet{Card:2009hj}, using as IVs the $k_n = 38$ different countries of origin of the immigrants. Second, motivated by the saturation approach of \citet{Blandhol2022}, we consider the setup where these 38 original IVs are interacted with the $q = 9$ available controls (including a constant). This yields $k_n = 342$ IVs. In both cases, the number of observations (i.e., the number of cities) is $n = 124$.\footnote{We note that the controls and the IVs are at the city level, and hence the same for the applications to high-school workers and college workers. Therefore, the projection matrix and the ridge-regularised projection matrix will also be the same across these two skill groups.}

We construct (weak-identification robust) confidence sets for $\beta_s$ by inverting the AR tests of CT, CMS and MS, the Sup Score test, and the RJAR test. Thus, the 95\% confidence set for any test is obtained as the collection of $\beta_{s,0}$ for which that test does not reject the null at 5\% level of significance. As in the simulation exercise in Section \ref{Section-RJAR-Simulations}, we search over values greater than 1 when choosing~$\gamma^*_n$ in case~$r_n<k_n$. Again $c_{BCCH} = 1.1$ and $\theta = 0.05$. The number of boostrap replications for the AR test of CT is set to 2,500. A grid of 100 values for $\beta_{s,0}$ is used for $s = h, c$. Data is taken from a single cross section, as made available by \citet{GoldsmithPinkham:2020hm}.

Figure \ref{Figure-Card-38} shows the confidence sets when $k_n = 38$ for high-school workers and college workers. We find that $\gamma^*_n = 0$, implying that no regularisation is needed. This is in line with our simulations in Section \ref{Section-RJAR-Simulations}, where we found that regularisation was not needed to maximise the sum in Assumption \ref{Assumption-High-Level} when $k_n/n = 0.3$. Furthermore, $\underset{1 \leq i \leq n}{\text{ max }}P_{ii} = 0.944$, and  $r^{-1}_n\sum_{i = 1}^n\sum_{j\neq i}(P^{\gamma^*_n}_{ij})^2 = 0.513$. 
We also point out that only three diagonal entries of $P$ are larger than $0.9$, which suggests that Assumption \ref{Assumption-High-Level} is reasonably satisfied by part 2.~of Proposition \ref{Proposition-Weaker-CMSMS}. The 95\% confidence sets for each test are given by all the points below the grey horizontal line. The confidence sets for both skill groups broadly confirm the results in \citet{Card:2009hj}. We find that the confidence sets for high-school workers is smallest for the jackknifed AR statistic of CMS, whereas the BCCH Sup Score test yields the smallest confidence interval for the application to college workers. For both cases, the AR test of CT yields the largest confidence interval. Based on the power results in Section \ref{Section-RJAR-Simulations}, this suggests a very sparse first stage for college workers, i.e., a few nationalities being highly predictive of inflows of immigrant labour.


\begin{figure}[H]
    \centering
    \begin{subfigure}[c]{0.5\textwidth}
        \centering
        \includegraphics[width = 0.9\textwidth]{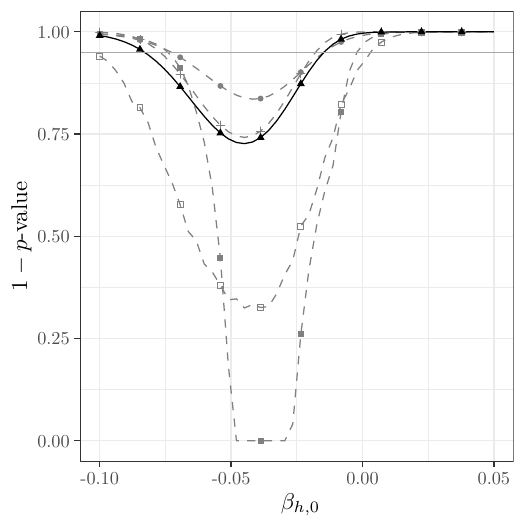}
        \caption{High-school workers}
    \end{subfigure}%
    \begin{subfigure}[c]{0.5\textwidth}
        \centering
        \includegraphics[width = 0.9\textwidth]{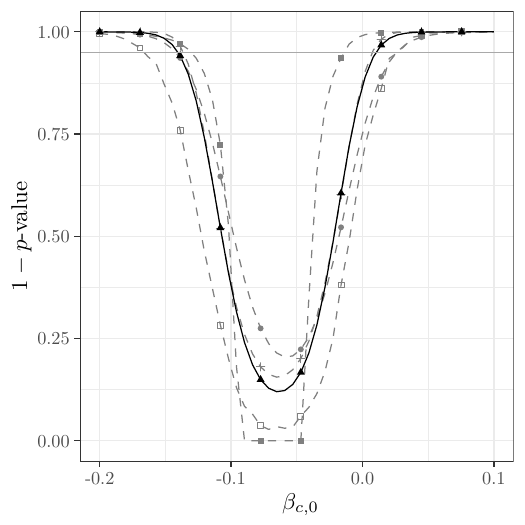}
        \caption{College workers}
    \end{subfigure}
    \begin{subfigure}[c]{\textwidth}
    	\centering
    	\includegraphics{TablesFigures/PowerRevision/legend_ld.pdf}
    \end{subfigure}
    \caption{95 \% confidence sets for $\beta_s$ for the application in Equation \eqref{Equation-RJAR-Card-2} with $k_n = 38$ IVs. $\text{max}_iP_{ii} = 0.944$. $\gamma_n^* = 0$, $r^{-1}_n\sum_{i = 1}^n\sum_{j\neq i}(P^{\gamma^*_n}_{ij})^2 = 0.513$.}
    \label{Figure-Card-38}
\end{figure}

Figure \ref{Figure-Card-342} shows the confidence sets when $k_n = 342$ for high-school workers and college workers, respectively. Since $r_n = n - q = 124-9 = 115 < 342 = k_n$, the jackknifed AR statistics of CMS and MS are not applicable. We find that $\gamma^*_n = 5.299$ and $r^{-1}_n\sum_{i = 1}^n\sum_{j\neq i}(P^{\gamma^*_n}_{ij})^2 = 0.106$. The 95\% confidence sets obtained by inverting the AR test of CT are empty for both $\beta_{h}$ and $\beta_{c}$. This could be due to heteroskedasticity in the error terms. In line with the simulation results on power reported in Section \ref{Section-RJAR-Simulations}, the RJAR test yields smaller confidence intervals than the BCCH Sup Score test. The qualitative conclusions with respect to the case of 38 IVs remain unchanged. 

\begin{figure}[H]
    \centering
    \begin{subfigure}[c]{0.5\textwidth}
        \centering
        \includegraphics[width = 0.9\textwidth]{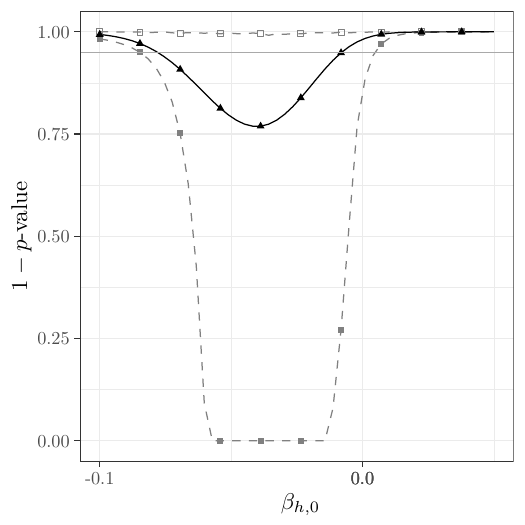}
        \caption{High-school workers}
    \end{subfigure}%
    \begin{subfigure}[c]{0.5\textwidth}
        \centering
        \includegraphics[width = 0.9\textwidth]{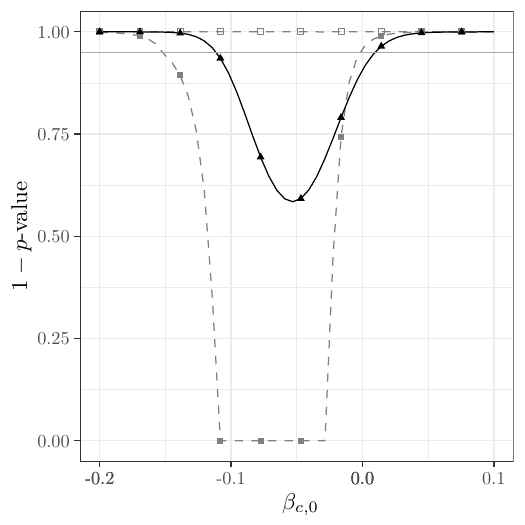}
        \caption{College workers}
    \end{subfigure}
    \begin{subfigure}[c]{\textwidth}
    	\centering
    	\includegraphics{TablesFigures/PowerRevision/legend_hd.pdf}
    \end{subfigure}
    \caption{95 \% confidence sets for $\beta_s$ for the application in Equation \eqref{Equation-RJAR-Card-2} with $k_n = 342$ IVs. $\gamma_n^* = 5.299$, $r^{-1}_n\sum_{i = 1}^n\sum_{j\neq i}(P^{\gamma^*_n}_{ij})^2 = 0.106$.}
    \label{Figure-Card-342}
\end{figure}

\section{Conclusion \label{Section-RJAR-Conclusion}}

We contributed to the literature on (very) many IVs in the cross-sectional linear IV model by proposing a new, ridge-regularised jackknifed AR test. Our test compares favourably with existing methods in the literature both theoretically, by allowing for high-dimensional IVs and weakening a common assumption on the IVs' projection matrix, and practically, by having correct asymptotic size and displaying favourable power properties even when the number of IVs approaches or exceeds the number of observations.

\pagebreak

\bibliographystyle{chicago}

\bibliography{RJAR}


\renewcommand{\thesection}{\Alph{section}}
\numberwithin{equation}{section}
\newpage
\setcounter{section}{0}

\bigskip
\begin{center}
{\large\bf SUPPLEMENTARY MATERIAL}
\end{center}

\section{Proofs}

Throughout this appendix, $C>0$ denotes a universal constant that can change across lines. CSHNW refers to \citet{Chao:2012iz} from where we also borrow our summation conventions. We also define $\Phi_{\gamma_n} :=  \frac{2}{r_n}\sum_{i = 1}^n\sum_{j \neq i}(P_{ij}^{\gamma_n})^2\mathbb{E}[\varepsilon_i^2]\mathbb{E}[\varepsilon_j^2]$. Since all the proofs in this appendix are under the null hypothesis, we write $\varepsilon_i$ and $\hat{\Phi}_{\gamma_n}$ for $e_i(\beta_0)$ and $\hat{\Phi}_{\gamma_n}(\beta_0)$, respectively. 

Furthermore, we note that Assumption \ref{Assumption-High-Level} implies that 
\begin{equation}
    \label{Equation-Assumption-High-Level-Implication}
\sum_{i = 1}^n\sum_{j\neq i}(P_{ij}^{\gamma_n})^2 \geq cr_n,
\end{equation}
for some $c > 0$ and~$n$ sufficiently large.

Our strategy for deriving the limiting distribution of our RJAR test is as follows. Since $P^{\gamma_n}$ is generally not idempotent for $\gamma_n \geq 0$, we cannot rely on standard properties of idempotent matrices. Therefore, we first derive a series of properties for the ridge-regularised projection matrix $P^{\gamma_n}$ that hold for any $\gamma_n\geq 0$ when $r_n = k_n$ and $\gamma_n >0$ when $r_n < k_n$. These properties are derived using a singular value decomposition of the matrix of IVs $Z$, and are collected in Lemmas \ref{Lemma-P-Linear-Algebra} and \ref{Lemma-B2}. These Lemmas play a similar role to Lemma B1 and Lemma B2 in \citet{Chao:2014}. The proof of Theorem \ref{Theorem-RJAR} derives the asymptotic distribution of our RJAR statistic under the assumption that $\Phi_{\gamma_n}$ is known. This is achieved by casting the RJAR statistic as a degenerate U statistic with variable kernel, and verifying the conditions of a Martingale CLT. A benefit of deriving our proofs from the bottom up (and not relying on the augmentation approach in \citet{Hansen:2014ie}) is that we do not require that all diagonal elements of $P^{\gamma_n}$ be bounded away from 1. Lemma \ref{Lemma-Variance-Cons} shows that our proposed estimator of $\Phi_{\gamma_n}$ is consistent. Taken together with Theorem \ref{Theorem-RJAR} and Assumption \ref{Assumption-High-Level}, Lemma \ref{Lemma-Variance-Cons} implies that our RJAR test is asymptotically valid. Finally, Lemma \ref{Lemma-Gamma-Star-Existence} proves the existence of $\gamma^*_n$, and the proof of Proposition \ref{Proposition-Weaker-CMSMS} shows that Assumption 3 is weaker than the balanced-design assumption in CMS and MS.

The following singular value decomposition of the $n\times k_n$ matrix of instruments $Z$ of rank $r_n$ will be used frequently (see, e.g., \citet[p. 60]{Lutkepohl:1996uz}):
\begin{equation*}
	Z = USQ',
\end{equation*}
where $U$ is an $n\times n$ matrix such that $U'U=UU'=I_n$, $Q$ is a $k_n\times k_n$ matrix such that $Q'Q = QQ' = I_{k_n}$, and $S$ is the $n\times k_n$ matrix given by
\begin{equation*}
	S = \begin{bmatrix}
		D & 0_{r_n \times (k_n - r_n)}\\
		0_{(n - r_n)\times r_n} & 0_{(n - r_n) \times (k-r_n)}\\
	\end{bmatrix},
\end{equation*}
where $D$ is the diagonal $r_n\times r_n$ matrix containing the singular values of $Z$. Hence, one can write
\begin{equation}\label{eq:Pgamma}
	\begin{aligned}
	P^{\gamma_n} = USQ'(QS'SQ' + Q\gamma_n I_{k_n}Q')^{-1}QS'U'
			 = US(S'S + \gamma_n I_{k_n})^{-1}S'U'
			 = U\tilde{D}U',
	\end{aligned}
\end{equation}
where $\tilde D =S(S'S + \gamma_n I_{k_n})^{-1}S'$ is the diagonal $n\times n$ matrix with diagonal entries given by given by $\tilde{D}_{ll} = \frac{D_{ll}^2}{D_{ll}^2 + \gamma_n }\leq 1$ for $l = 1, \dots, r_n$, and zero otherwise. Note that the diagonal entries of $\tilde D$ are also the the eigenvalues of $P^{\gamma_n}$. 

\subsection{Lemmas}
The following lemma collects some properties of $P^{\gamma_n}$. Recall that $Z$ has rank $r_n$.
\begin{lemma}
	Fix $n\geq 3$. For all $h, m = 1, \dots, n$ and $\gamma_n  \geq 0$ if $r_n = k_n$ and $\gamma_n > 0$ if $r_n < k_n$ one has \label{Lemma-P-Linear-Algebra} 
	\begin{enumerate}[label = (\roman*)]
		\item $0\leq (P^{\gamma_n})^j_{hh} \leq P^{\gamma_n}_{hh} \leq 1$ for all positive integers $j$, \label{Lemma-P-Squared-Ineq}  
		\item $\sum_{i = 1}^n(P^{\gamma_n}_{hi})^2 = (P^{\gamma_n})^2_{hh} \leq P^{\gamma_n}_{hh}$, \label{Lemma-P-Row-Ineq}
		\item $\sum_{i = 1}^nP^{\gamma_n}_{ii} = \sum_{l = 1}^{r_n}\frac{D_{ll}^2}{D_{ll}^2 + \gamma_n } \leq r_n$, \label{Lemma-Traces-1}
		\item $\sum_{i = 1}^n(P^{\gamma_n})^2_{ii} = \sum_{l = 1}^{r_n}\frac{D_{ll}^4}{\left(D_{ll}^2 + \gamma_n \right)^2}\leq r_n$,\label{Lemma-Traces-2}
		\item \label{Lemma-One-Bounds}
 $|P^{\gamma_n}_{hm}| \leq 1$, \label{Lemma-Off-Diag-One}
		\item 	 For any $\mathcal{I}_2\subseteq \{1,\hdots,n\}^2$ and any $\mathcal{I}_3\subseteq\{1,\hdots,n\}^3$, \label{Lemma-B1} 
	\begin{enumerate}
		\item $\sum_{\mathcal{I}_2}(P^{\gamma_n}_{ij})^4 \leq r_n$,
		\item $\sum_{\mathcal{I}_3}(P^{\gamma_n}_{ij})^2(P^{\gamma_n}_{jl})^2 \leq r_n$.
	\end{enumerate}
	\end{enumerate}

\end{lemma}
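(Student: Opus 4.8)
The lemma collects elementary linear-algebra facts about the ridge-regularized projection matrix $P^{\gamma_n}$, and the plan is to prove them all by exploiting the spectral decomposition $P^{\gamma_n} = U\tilde D U'$ established in Equation \eqref{eq:Pgamma}, where $U$ is orthogonal and $\tilde D$ is diagonal with entries $\tilde D_{ll} = D_{ll}^2/(D_{ll}^2 + \gamma_n) \in [0,1]$ (and zero for $l > r_n$). The crucial structural observation is that $P^{\gamma_n}$ is symmetric and idempotent-like: its eigenvalues lie in $[0,1]$, so $P^{\gamma_n}$ is positive semidefinite with $(P^{\gamma_n})^j \preceq P^{\gamma_n}$ in the Loewner order for every integer $j \geq 1$. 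I would prove the parts roughly in the stated order, since the later parts reuse the earlier ones.

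\textbf{Parts (i)--(v).} For \ref{Lemma-P-Squared-Ineq}, since $0 \leq \tilde D_{ll} \leq 1$ we have $0 \leq \tilde D_{ll}^j \leq \tilde D_{ll}$, and writing $(P^{\gamma_n})^j_{hh} = \sum_l U_{hl}^2 \tilde D_{ll}^j$ (using $(P^{\gamma_n})^j = U \tilde D^j U'$) gives the chain of inequalities; the final bound $P^{\gamma_n}_{hh}\le 1$ follows because $\sum_l U_{hl}^2 = 1$ as $U$ is orthogonal. For \ref{Lemma-P-Row-Ineq}, symmetry of $P^{\gamma_n}$ gives $\sum_i (P^{\gamma_n}_{hi})^2 = \sum_i P^{\gamma_n}_{hi} P^{\gamma_n}_{ih} = (P^{\gamma_n})^2_{hh}$, and then \ref{Lemma-P-Squared-Ineq} with $j=2$ finishes it. Parts \ref{Lemma-Traces-1} and \ref{Lemma-Traces-2} are just $\operatorname{tr}(P^{\gamma_n}) = \sum_l \tilde D_{ll}$ and $\operatorname{tr}((P^{\gamma_n})^2) = \sum_l \tilde D_{ll}^2$, each a sum of $r_n$ terms in $[0,1]$; the explicit formulas in $D_{ll}$ come straight from $\tilde D_{ll}$. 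For \ref{Lemma-One-Bounds}, Cauchy--Schwarz gives $|P^{\gamma_n}_{hm}| = |\sum_l U_{hl}\tilde D_{ll} U_{ml}| \leq (\sum_l U_{hl}^2)^{1/2}(\sum_l U_{ml}^2)^{1/2} = 1$ after bounding $\tilde D_{ll}\le 1$.

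\textbf{Part (vi), the main obstacle.} This is the only part requiring genuine care, because the index sets $\mathcal I_2, \mathcal I_3$ are \emph{arbitrary} subsets. The plan is to dominate each sum by the corresponding sum over the full index set, using nonnegativity of the summands, and then evaluate the full sums via the trace machinery. For (a), $\sum_{\mathcal I_2}(P^{\gamma_n}_{ij})^4 \leq \sum_{i,j}(P^{\gamma_n}_{ij})^4 \le \sum_{i,j}(P^{\gamma_n}_{ij})^2 = \operatorname{tr}((P^{\gamma_n})^2) \le r_n$, where the middle step uses $|P^{\gamma_n}_{ij}|\le 1$ from \ref{Lemma-One-Bounds} to get $(P^{\gamma_n}_{ij})^4 \le (P^{\gamma_n}_{ij})^2$ and the trace identity is \ref{Lemma-Traces-2}. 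For (b), $\sum_{\mathcal I_3}(P^{\gamma_n}_{ij})^2(P^{\gamma_n}_{jl})^2 \le \sum_{i,j,l}(P^{\gamma_n}_{ij})^2(P^{\gamma_n}_{jl})^2 = \sum_j\big(\sum_i (P^{\gamma_n}_{ij})^2\big)\big(\sum_l (P^{\gamma_n}_{jl})^2\big) = \sum_j \big((P^{\gamma_n})^2_{jj}\big)^2$ by \ref{Lemma-P-Row-Ineq}; then $(P^{\gamma_n})^2_{jj} \le P^{\gamma_n}_{jj}\le 1$ from \ref{Lemma-P-Squared-Ineq} yields $\big((P^{\gamma_n})^2_{jj}\big)^2 \le (P^{\gamma_n})^2_{jj}$, so the sum is at most $\operatorname{tr}((P^{\gamma_n})^2)\le r_n$ again. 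The one point to watch is keeping the factorization of the triple sum clean and making sure each summand is nonnegative so that enlarging the index set only increases the sum; no delicate estimate is needed beyond that.
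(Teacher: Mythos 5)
Your proposal is correct and follows essentially the same route as the paper's proof: the spectral decomposition $P^{\gamma_n}=U\tilde D U'$ with eigenvalues in $[0,1]$, the trace identities for (iii)--(iv), and domination of the sums over $\mathcal I_2,\mathcal I_3$ by the full sums combined with $\operatorname{tr}((P^{\gamma_n})^2)\leq r_n$ for (vi). The only (immaterial) difference is in (v), where you invoke Cauchy--Schwarz on the spectral representation while the paper deduces it from the row-sum bound in (ii).
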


\begin{proof}
By the arguments prior to this lemma, the non-zero eigenvalues of $P^{\gamma_n}$ are $\tilde{D}_{ll}=\frac{D_{ll}^2}{D_{ll}^2 + \gamma_n}$ for $l = 1, \dots, r_n$. \ref{Lemma-Traces-1} now follows from the trace of $P^{\gamma_n}$ equaling the sum of its non-zero eigenvalues. Furthermore, the eigenvalues of $(P^{\gamma_n})^2$ are given by $\tilde{D}_{ll}^2 = \frac{D_{ll}^4}{(D_{ll}^2 + \gamma_n )^2}$ for $l = 1, \dots, r_n$, and zero otherwise. \ref{Lemma-Traces-2} now follows in the same way as \ref{Lemma-Traces-1}.

By Equation \eqref{eq:Pgamma} one has that
\begin{equation}
	\label{Equation-P-hh}
(P^{\gamma_n})^j_{hh} = \sum_{l = 1}^n\tilde{D}_{ll}^jU_{hl}^2 \leq \sum_{l = 1}^n\tilde{D}_{ll}U_{hl}^2 = P^{\gamma_n}_{hh} \leq 1,
\end{equation}
which verifies \ref{Lemma-P-Squared-Ineq} since $0 \leq \tilde{D}_{ll} \leq 1$; the last inequality following from the largest diagonal entry of $P^{\gamma_n}$ being bounded from above by its largest eigenvalue, which is no greater than one. Next, by \ref{Lemma-P-Squared-Ineq},
\begin{align*}
    \sum_{i = 1}^n(P^{\gamma_n}_{hi})^2 = (P^{\gamma_n})^2_{hh}\leq P^{\gamma_n}_{hh}\leq 1,
\end{align*}
such that \ref{Lemma-P-Row-Ineq} and \ref{Lemma-Off-Diag-One} follow. Furthermore, by \ref{Lemma-Off-Diag-One} and \ref{Lemma-Traces-2},
	\begin{equation*}
		\sum_{\mathcal{I}_2}(P^{\gamma_n}_{ij})^4 \leq \sum_{i = 1}^n\sum_{j = 1}^n(P^{\gamma_n}_{ij})^2 = \sum_{i = 1}^n(P^{\gamma_n})^2_{ii} \leq r_n,
	\end{equation*}
	and \ref{Lemma-B1} follows since
	\begin{equation*}
		\sum_{\mathcal{I}_3}(P^{\gamma_n}_{ij})^2(P^{\gamma_n}_{jl})^2  \leq \sum_{j = 1}^n\sum_{i = 1}^n(P^{\gamma_n}_{ij})^2\sum_{l = 1}^n(P^{\gamma_n}_{jl})^2 = \sum_{j = 1}^n\left((P^{\gamma_n})^2_{jj}\right)^2 \leq \sum_{j = 1}^n(P^{\gamma_n})^2_{jj} \leq r_n,
	\end{equation*}
the penultimate inequality being a consequence of \ref{Lemma-P-Squared-Ineq} and the last of \ref{Lemma-Traces-2}.

\end{proof}

\begin{lemma}
\label{Lemma-B2}
	Fix $n\geq 4$. For all $\gamma_n  \geq 0$ if $r_n = k_n$ and $\gamma_n > 0$ if $r_n < k_n$ one has
	\begin{equation*}
		\left|\sum_{i < j < l < m}P^{\gamma_n}_{il}P^{\gamma_n}_{jl}P^{\gamma_n}_{im}P^{\gamma_n}_{jm}\right| \leq Cr_n.
	\end{equation*}
\end{lemma}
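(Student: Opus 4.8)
The plan is to exploit the ordering constraint $i<j<l<m$ to rewrite the inner sum over the two smaller indices as a difference of a perfect-square term and a nonnegative diagonal term, and then control each piece by a Frobenius-norm argument together with Lemma \ref{Lemma-P-Linear-Algebra}. Writing the summand as $P^{\gamma_n}_{il}P^{\gamma_n}_{im}P^{\gamma_n}_{jl}P^{\gamma_n}_{jm}$ and relabelling $(i,j,l,m)=(a,b,c,d)$ with $a<b<c<d$, I would fix $c<d$ and set $\phi_{cd}(a):=P^{\gamma_n}_{ac}P^{\gamma_n}_{ad}$, so the sum over $a<b<c$ becomes $\sum_{a<b<c}\phi_{cd}(a)\phi_{cd}(b)$. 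The elementary identity $\sum_{a<b}x_ax_b=\tfrac12\big[(\sum_a x_a)^2-\sum_a x_a^2\big]$, applied over the index set $\{a:a<c\}$, gives
\[
\sum_{a<b<c}\phi_{cd}(a)\phi_{cd}(b)=\tfrac12\big[\psi_{cd}^2-\chi_{cd}\big],\qquad \psi_{cd}:=\sum_{a<c}P^{\gamma_n}_{ac}P^{\gamma_n}_{ad},\quad \chi_{cd}:=\sum_{a<c}(P^{\gamma_n}_{ac})^2(P^{\gamma_n}_{ad})^2,
\]
so the target sum equals $\tfrac12\sum_{c<d}\psi_{cd}^2-\tfrac12\sum_{c<d}\chi_{cd}$, and it suffices to bound each of these two nonnegative sums by $Cr_n$.

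The diagonal piece is straightforward: since all terms are nonnegative, $\sum_{c<d}\chi_{cd}\le\sum_{a,c,d}(P^{\gamma_n}_{ac})^2(P^{\gamma_n}_{ad})^2=\sum_a\big((P^{\gamma_n})^2_{aa}\big)^2$, and by Lemma \ref{Lemma-P-Linear-Algebra}\ref{Lemma-P-Squared-Ineq} each $(P^{\gamma_n})^2_{aa}\le1$, whence $\big((P^{\gamma_n})^2_{aa}\big)^2\le(P^{\gamma_n})^2_{aa}$ and $\sum_a(P^{\gamma_n})^2_{aa}\le r_n$ by part \ref{Lemma-Traces-2}. For the square piece, the key observation is that $\psi_{cd}$ is a \emph{truncated} $(c,d)$ entry of $(P^{\gamma_n})^2$: introducing the strictly lower-triangular matrix $R$ with $R_{ca}:=P^{\gamma_n}_{ca}\,\mathbf{1}[a<c]$, one has $\psi_{cd}=(RP^{\gamma_n})_{cd}$, so that $\sum_{c<d}\psi_{cd}^2\le\|RP^{\gamma_n}\|_F^2=\Tr\big((P^{\gamma_n})^2R'R\big)$. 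Bounding the trace of a product of positive semidefinite matrices by $\lambda_{\max}\big((P^{\gamma_n})^2\big)\,\Tr(R'R)$, using that the eigenvalues $\tilde D_{ll}$ of $P^{\gamma_n}$ lie in $[0,1]$ (so $\lambda_{\max}((P^{\gamma_n})^2)\le1$), together with $\|R\|_F^2=\sum_{a<c}(P^{\gamma_n}_{ac})^2\le\Tr((P^{\gamma_n})^2)\le r_n$ by part \ref{Lemma-Traces-2}, yields $\sum_{c<d}\psi_{cd}^2\le r_n$. Combining the two bounds gives $\big|\sum_{i<j<l<m}P^{\gamma_n}_{il}P^{\gamma_n}_{jl}P^{\gamma_n}_{im}P^{\gamma_n}_{jm}\big|\le\tfrac12 r_n+\tfrac12 r_n=r_n\le Cr_n$.

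The crux of the argument is the decision to retain signs rather than bound termwise, and then the handling of the square piece. A naive triangle-inequality bound on $\sum|P^{\gamma_n}_{il}P^{\gamma_n}_{im}P^{\gamma_n}_{jl}P^{\gamma_n}_{jm}|$ is genuinely too large — for projection-type $P^{\gamma_n}$ with delocalised, sign-oscillating eigenvectors it is of order $r_n^2$ — so the cancellation encoded in the perfect square is essential. The one subtlety that must be navigated is that $\psi_{cd}$ is only a \emph{partial} sum of $(P^{\gamma_n})^2_{cd}$, so $\psi_{cd}^2$ cannot be compared directly to $\big((P^{\gamma_n})^2_{cd}\big)^2$; expressing $\psi_{cd}$ as an entry of the honest matrix product $RP^{\gamma_n}$ and invoking the operator-norm bound on $(P^{\gamma_n})^2$ is what sidesteps this difficulty.
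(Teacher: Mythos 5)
Your proof is correct, and it takes a genuinely different and more economical route than the paper's. The paper follows Lemma B2 of CSHNW: it first bounds $\text{tr}((P^{\gamma_n}-G)^4)\leq 16r_n$ (with $G$ the diagonal of $P^{\gamma_n}$), expands $(L+L')^4$ for the strictly lower-triangular part $L$ into ordered index sums, which only controls a \emph{combination} $\mathfrak{S}$ of three quadruple sums; it then has to disentangle the target sum from $\mathfrak{S}$ by introducing auxiliary i.i.d.\ random variables $u_i$ and computing second moments of the quadratic forms $\Delta_1,\Delta_2,\Delta_3$. Your argument avoids both the trace expansion and the probabilistic detour: the identity $\sum_{a<b}x_ax_b=\tfrac12[(\sum_a x_a)^2-\sum_a x_a^2]$ isolates the target sum directly as $\tfrac12\sum_{c<d}\psi_{cd}^2-\tfrac12\sum_{c<d}\chi_{cd}$, and each piece is handled deterministically — the $\chi$-term by Lemma \ref{Lemma-P-Linear-Algebra}\ref{Lemma-P-Squared-Ineq} and \ref{Lemma-Traces-2}, the $\psi$-term by writing $\psi_{cd}=(RP^{\gamma_n})_{cd}$ and using $\Tr\big((P^{\gamma_n})^2R'R\big)\leq\lambda_{\max}\big((P^{\gamma_n})^2\big)\Tr(R'R)\leq r_n$ (equivalently, $\|RP^{\gamma_n}\|_F\leq\|P^{\gamma_n}\|_{\text{op}}\|R\|_F$). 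All the steps check out: the truncation subtlety you flag is real, and your device of viewing $\psi_{cd}$ as an entry of the honest product $RP^{\gamma_n}$ resolves it; your remark that a termwise triangle-inequality bound is of order $r_n^2$ is also accurate, so the sign cancellation you preserve is indeed essential. What your approach buys is brevity, an explicit constant ($C=1$), and no appeal to auxiliary randomness; what the paper's approach buys is alignment with the existing CSHNW machinery, at the cost of considerably more bookkeeping.
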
	 
	
	\begin{proof}
	The proof follows Lemma B2 in CSHNW closely. We verify that Lemma B2 in CSHNW remains valid for any symmetric matrix satisfying the properties in Lemma \ref{Lemma-P-Linear-Algebra}. For $G := \text{diag}(P^{\gamma_n}_{11}, \dots, P^{\gamma_n}_{nn})$ observe that $(P^{\gamma_n} - G)^4$ equals
	\begin{align*}
		  &(P^{\gamma_n})^4 - (P^{\gamma_n})^3G - (P^{\gamma_n})^2GP^{\gamma_n} + (P^{\gamma_n})^2G^2 - P^{\gamma_n} G (P^{\gamma_n})^2 + P^{\gamma_n} GP^{\gamma_n} G\\
		  + & P^{\gamma_n} G^2P^{\gamma_n} - P^{\gamma_n} G^3 - G(P^{\gamma_n})^3 + G(P^{\gamma_n})^2G + GP^{\gamma_n} GP^{\gamma_n}\\
					 - & GP^{\gamma_n} G^2 + G^2(P^{\gamma_n})^2 - G^2P^{\gamma_n} G - G^3P^{\gamma_n}+ G^4.
	\end{align*}
	Since $\text{tr}(A') = \text{tr}(A)$ and $\text{tr}(AB) = \text{tr}(BA)$ for square matrices $A$ and $B$,
		\begin{align*}
		\text{tr}((P^{\gamma_n} - G)^4) =& \text{tr}((P^{\gamma_n})^4 ) - 4\text{tr}((P^{\gamma_n})^3G) + 4\text{tr}((P^{\gamma_n})^2G^2) + 2\text{tr}(P^{\gamma_n} G P^{\gamma_n} G)\\
			& - 4\text{tr}(P^{\gamma_n} G^3) + \text{tr}(G^4)\\
		\leq & \text{tr}((P^{\gamma_n})^4 ) + 4|\text{tr}((P^{\gamma_n})^3G)| + 4\text{tr}((P^{\gamma_n})^2G^2) + 2\text{tr}(P^{\gamma_n} G P^{\gamma_n} G)\\
			& + 4|\text{tr}(P^{\gamma_n} G^3)| + \text{tr}(G^4).
		\end{align*}

	Next, because $0 \leq (P^{\gamma_n})^j_{ii} \leq P^{\gamma_n}_{ii}$ for all positive integers $j$ by Lemma \ref{Lemma-P-Linear-Algebra} \ref{Lemma-P-Squared-Ineq}, and $G_{ii} \geq 0$, one gets
	\begin{align*}
		\text{tr}((P^{\gamma_n} - G)^4) \leq & \text{tr}(P^{\gamma_n} ) + 4\text{tr}(P^{\gamma_n} G) + 4\text{tr}(P^{\gamma_n} G^2) + 2\text{tr}(P^{\gamma_n} G P^{\gamma_n} G)\\
			& + 4\text{tr}(P^{\gamma_n} G^3) + \text{tr}(G^4).
	\end{align*}
	Since $0 \leq P^{\gamma_n}_{ii} \leq 1$ and $G^j \leq I_n$ (elementwise) for all positive integers $j$, $\text{tr}(P^{\gamma_n} G^j) \leq \text{tr}(P^{\gamma_n}) \leq r_n$, $\text{tr}(P^{\gamma_n} G P^{\gamma_n} G) \leq \text{tr}((P^{\gamma_n})^2) \leq r_n$. Furthermore, $\text{tr}(G^4) \leq \text{tr}(G) = \text{tr}(P^{\gamma_n}) \leq r_n$. Hence, 
	\begin{equation}
		\label{Equation-Inter}
		\text{tr}((P^{\gamma_n} - G)^4) \leq 16r_n.
	\end{equation}
	As in the proof of Lemma B2 in CSHNW, define the lower-triangular matrix $L$ with entries $L_{ij} = P^{\gamma_n}_{ij}\mathbf{1}_{i > j}$, so that $P^{\gamma_n} = L + L' + G$. Then
	\begin{align*}
		(P^{\gamma_n} - G)^4 =& (L + L')^4\\
		            =& L^4 + L^2LL' + L^2L'L + L^2L^{\prime 2} + LL'L^2 + LL'LL' \\
		            & + LL'L'L + LL^{\prime 3} + L'LL^2 + L'LLL' + L'LL'L + L'LL^{\prime 2} \\
					& + L^{\prime 2}L^2 + L^{\prime 2}LL' + L^{\prime 2}L'L + L^{\prime 4}.
	\end{align*}
	Note for all positive integers $j$, $[(L')^j]' = L^j$. Since $\text{tr}(A') = \text{tr}(A)$ and $\text{tr}(AB) = \text{tr}(BA)$ for any square matrices $A$ and $B$,
	\begin{equation}
	    \label{Equation-Expansion-L}
	    \begin{aligned}
		\text{tr}((P^{\gamma_n} - G)^4) = 2 \text{tr}(L^4) + 8\text{tr}(L^3L') + 4\text{tr}(L^2L^{\prime 2}) + 2\text{tr}(L'LL'L).
		\end{aligned}
	\end{equation}	
	We consider each of the terms on the right-hand side of Equation \eqref{Equation-Expansion-L} separately. Before proceeding, note that:
	\begin{align}
	    \label{Equation-L-Expressions}
    	\begin{split}
    	    (L)^2_{ab} &= \sum_{l = 1}^nL_{al}L_{lb},\\
    	    (L')^2_{ab} &= \sum_{l = 1}^nL_{la}L_{bl} = \sum_{l = 1}^nL_{bl}L_{la} = (L)^2_{ba},\\
    	    (L'L)_{ab} &= \sum_{l = 1}^nL_{la}L_{lb},\\
    	    (L)^3_{ab} &= \sum_{l = 1}^n(L)^2_{al}L_{lb} = \sum_{l = 1}^n\sum_{m = 1}^nL_{am}L_{ml}L_{lb},\\
    	    (L)^4_{ab} &= \sum_{l = 1}^n(L)^2_{al}(L)^2_{lb} = \sum_{l = 1}^n\sum_{m = 1}^n\sum_{j = 1}^n L_{am}L_{ml}L_{lj}L_{jb},\\
    	    (L^3L')_{ab} &= \sum_{l = 1}^n(L)^3_{al}(L')_{lb} = \sum_{l = 1}^n\sum_{m = 1}^n\sum_{j = 1}^nL_{am}L_{mj}L_{jl}L_{bl},\\
    	    (L^2L^{\prime 2})_{ab} &= \sum_{l = 1}^n (L)^2_{al}(L')^2_{lb} = \sum_{l = 1}^n (L)^2_{al}(L)^2_{bl} = \sum_{l = 1}^n\sum_{m = 1}^n\sum_{j = 1}^nL_{am}L_{ml}L_{bj}L_{jl},\\
    	    (L'LL'L)_{ab}&= \sum_{l = 1}^n(L'L)_{al}(L'L)_{lb} = \sum_{l = 1}^n\sum_{m = 1}^n\sum_{j = 1}^nL_{ma}L_{ml}L_{jl}L_{jb}.
    	\end{split}
    \end{align}
	
	Using the expression for $(L)^4_{ab}$ in Equation \eqref{Equation-L-Expressions},
	\begin{equation}
	    \label{Equation-Trace-L4}
	    \begin{aligned}
	        \text{tr}(L^4) &= \sum_{i = 1}^n\sum_{l = 1}^n\sum_{m = 1}^n\sum_{j = 1}^n L_{im}L_{ml}L_{lj}L_{ji}= \sum_{i, j, l, m}P^{\gamma_n}_{ji}\mathbf{1}_{j > i}P^{\gamma_n}_{im}\mathbf{1}_{i > m}P^{\gamma_n}_{ml}\mathbf{1}_{m > l}P^{\gamma_n}_{lj}\mathbf{1}_{l > j}\\
	        & = 0,
	    \end{aligned}
	\end{equation}
	since there is no combination of indices $i, j, l, m$ that jointly satisfy each of the indicator functions, as this would require $m < i < j < l < m$.
	
	Using the expression for $(L^3L')_{ab}$ in Equation \eqref{Equation-L-Expressions},
	\begin{equation}
	    \label{Equation-Trace-L3LT}
	    \begin{aligned}
	        \text{tr}(L^3L') &= \sum_{i = 1}^n\sum_{l = 1}^n\sum_{m = 1}^n\sum_{j = 1}^n L_{im}L_{mj}L_{jl}L_{il} = \sum_{i, j, l, m}P^{\gamma_n}_{im}\mathbf{1}_{i > m}P^{\gamma_n}_{mj}\mathbf{1}_{m > j}P^{\gamma_n}_{jl}\mathbf{1}_{j > l}P^{\gamma_n}_{il}\mathbf{1}_{i > l}\\
	        & = \sum_{l < j < m < i} P^{\gamma_n}_{im}P^{\gamma_n}_{mj}P^{\gamma_n}_{jl}P^{\gamma_n}_{il} = \sum_{i < j < l < m}P^{\gamma_n}_{ml}P^{\gamma_n}_{lj}P^{\gamma_n}_{ji}P^{\gamma_n}_{mi}\\
	        &= \sum_{i < j < l < m}P^{\gamma_n}_{ij}P^{\gamma_n}_{jl}P^{\gamma_n}_{im}P^{\gamma_n}_{lm}.
	    \end{aligned}
	\end{equation}
	The third equality follows since the product within the summations is non-zero only for $l < j < m < i$. The fourth equality follows by replacing the indices according to the dictionary $\{i:m, j:j, l:i,  m:l \}$. The last equality follows from the symmetry of $P^{\gamma_n}$.
	
	Using the expression for $(L^2L^{\prime 2})_{ab}$ in Equation \eqref{Equation-L-Expressions},
	\begin{equation}
	    \label{Equation-Trace-L2LT2}
	    \begin{aligned}
	        \text{tr}(L^2L^{\prime 2}) &= \sum_{i = 1}^n \sum_{l = 1}^n\sum_{m = 1}^n\sum_{j = 1}^nL_{im}L_{ml}L_{ij}L_{jl}\\
	        &= \sum_{i, j, l, m} P^{\gamma_n}_{im}\mathbf{1}_{i > m}P^{\gamma_n}_{ml}\mathbf{1}_{m > l} P^{\gamma_n}_{ij}\mathbf{1}_{i > j} P^{\gamma_n}_{jl}\mathbf{1}_{j > l}\\
	        =& \sum_{i > j > l, i > m > l}P^{\gamma_n}_{im}P^{\gamma_n}_{ml}P^{\gamma_n}_{ij}P^{\gamma_n}_{jl} = \sum_{i > j > l, i > m > l}P^{\gamma_n}_{ij}P^{\gamma_n}_{jl}P^{\gamma_n}_{lm}P^{\gamma_n}_{mi}\\
	        =&\sum_{i > j = m > l}P^{\gamma_n}_{ij}P^{\gamma_n}_{jl}P^{\gamma_n}_{lm}P^{\gamma_n}_{mi} + \sum_{i > j > m > l}P^{\gamma_n}_{ij}P^{\gamma_n}_{jl}P^{\gamma_n}_{lm}P^{\gamma_n}_{mi}\\
	        & + \sum_{i > m > j > l}P^{\gamma_n}_{ij}P^{\gamma_n}_{jl}P^{\gamma_n}_{lm}P^{\gamma_n}_{mi}\\
	        =&\sum_{i > j > l}P^{\gamma_n}_{ij}P^{\gamma_n}_{jl}P^{\gamma_n}_{lj}P^{\gamma_n}_{ji} + \sum_{i < j < l < m}P^{\gamma_n}_{ml}P^{\gamma_n}_{li}P^{\gamma_n}_{ij}P^{\gamma_n}_{jm}\\
	        &+ \sum_{i > m > j > l}P^{\gamma_n}_{ij}P^{\gamma_n}_{jl}P^{\gamma_n}_{lm}P^{\gamma_n}_{mi}\\
	        =&\sum_{i > j > l}P^{\gamma_n}_{ij}P^{\gamma_n}_{jl}P^{\gamma_n}_{lj}P^{\gamma_n}_{ji} + \sum_{i < j < l < m}P^{\gamma_n}_{ml}P^{\gamma_n}_{li}P^{\gamma_n}_{ij}P^{\gamma_n}_{jm}\\
	        &+ \sum_{i < j < l < m}P^{\gamma_n}_{mj}P^{\gamma_n}_{ji}P^{\gamma_n}_{il}P^{\gamma_n}_{lm}\\
            =&\sum_{i < j < l}(P^{\gamma_n}_{ij})^2(P^{\gamma_n}_{jl})^2 + 2\sum_{i < j < l < m}P^{\gamma_n}_{ij}P^{\gamma_n}_{il}P^{\gamma_n}_{jm}P^{\gamma_n}_{lm}.\\
	    \end{aligned}
	\end{equation}
	The third equality follows since the product within the summations is non-zero only when both $i > j > l$ and $i > m > l$. The fourth equality follows from the symmetry of $P^{\gamma_n}$. The sixth equality follows by replacing the indices in the second summation according to the dictionary $\{i:m, j:l, l:i, m:j\}$. The seventh equality follows by replacing the indices in the third summation according to the dictionary $\{i:m, j:j, l:i, m:l\}$. The last equality follows from the symmetry of $P^{\gamma_n}$.
	
	Using the expression for $(L'LL'L)_{ab}$ in Equation \eqref{Equation-L-Expressions},
	\begin{align}
	    \label{Equation-Trace-LTLLTL}
	    \begin{split}
	        \text{tr}(L'LL'L) =& \sum_{i = 1}^n \sum_{l = 1}^n\sum_{m = 1}^n\sum_{j = 1}^nL_{mi}L_{ml}L_{jl}L_{ji}\\ 
	        =& \sum_{i, j, l, m} P^{\gamma_n}_{mi}\mathbf{1}_{m > i}P^{\gamma_n}_{ml}\mathbf{1}_{m > l} P^{\gamma_n}_{jl}\mathbf{1}_{j > l} P^{\gamma_n}_{ji}\mathbf{1}_{j > i}\\
	        =& \sum_{i, j, l, m} P^{\gamma_n}_{ij}\mathbf{1}_{i > j}P^{\gamma_n}_{l > j}\mathbf{1}_{l > j} P^{\gamma_n}_{lm}\mathbf{1}_{l > m} P^{\gamma_n}_{im}\mathbf{1}_{i > m}\\
	        =& \sum_{m = j < i = l}P^{\gamma_n}_{ij}P^{\gamma_n}_{lj}P^{\gamma_n}_{lm}P^{\gamma_n}_{im} + \sum_{m = j < l < i}P^{\gamma_n}_{ij}P^{\gamma_n}_{lj}P^{\gamma_n}_{lm}P^{\gamma_n}_{im} \\
	        & + \sum_{m = j < i < l}P^{\gamma_n}_{ij}P^{\gamma_n}_{lj}P^{\gamma_n}_{lm}P^{\gamma_n}_{im}  + \sum_{j < m < i = l}P^{\gamma_n}_{ij}P^{\gamma_n}_{lj}P^{\gamma_n}_{lm}P^{\gamma_n}_{im}\\
	        & + \sum_{m < j < i = l}P^{\gamma_n}_{ij}P^{\gamma_n}_{lj}P^{\gamma_n}_{lm}P^{\gamma_n}_{im}  + \sum_{m < j < l < i}P^{\gamma_n}_{ij}P^{\gamma_n}_{lj}P^{\gamma_n}_{lm}P^{\gamma_n}_{im}\\
	        & + \sum_{m < j < i < l}P^{\gamma_n}_{ij}P^{\gamma_n}_{lj}P^{\gamma_n}_{lm}P^{\gamma_n}_{im} + \sum_{j < m < l < i}P^{\gamma_n}_{ij}P^{\gamma_n}_{lj}P^{\gamma_n}_{lm}P^{\gamma_n}_{im}\\
	        & + \sum_{j < m < i < l}P^{\gamma_n}_{ij}P^{\gamma_n}_{lj}P^{\gamma_n}_{lm}P^{\gamma_n}_{im}\\
	        =& \sum_{j < i}(P^{\gamma_n}_{ij})^4 + \sum_{j < l < i}(P^{\gamma_n}_{ij})^2(P^{\gamma_n}_{lj})^2 + \sum_{j < i < l}(P^{\gamma_n}_{ij})^2(P^{\gamma_n}_{lj})^2 \\
	        &+ \sum_{j < m < i}(P^{\gamma_n}_{ij})^2(P^{\gamma_n}_{im})^2 + \sum_{m < j < i }(P^{\gamma_n}_{ij})^2(P^{\gamma_n}_{im})^2  \\
	        &+ 4\sum_{i < j < l < m}P^{\gamma_n}_{li}P^{\gamma_n}_{mi}P^{\gamma_n}_{lj}P^{\gamma_n}_{mj}\\
	        =& \sum_{j < i}(P^{\gamma_n}_{ij})^4 + 2\sum_{i < j < l}(P^{\gamma_n}_{li})^2(P^{\gamma_n}_{ji})^2 \\
	        &+ \sum_{i < j < l}(P^{\gamma_n}_{li})^2(P^{\gamma_n}_{lj})^2 + 4\sum_{i < j < l < m}P^{\gamma_n}_{li}P^{\gamma_n}_{mi}P^{\gamma_n}_{lj}P^{\gamma_n}_{mj}\\
	        =& \sum_{i < j}(P^{\gamma_n}_{ij})^4 + 2 \sum_{i < j < l}\left((P^{\gamma_n}_{ij})^2(P^{\gamma_n}_{il})^2 + (P^{\gamma_n}_{il})^2(P^{\gamma_n}_{jl})^2\right) \\ 
	        &+ 4 \sum_{i < j < l < m}P^{\gamma_n}_{il}P^{\gamma_n}_{jl}P^{\gamma_n}_{im}P^{\gamma_n}_{jm}.
	    \end{split}
	\end{align}
	The third equality follows by replacing the indices in the summation according to the dictionary $\{i:j, j:i, l:m, m:l\}$. The fourth equality follows since the product within the summation is non-zero only when both $i > j, m$ and $l > j, m$. The fifth equality follows from replacing the indices in the last four summations in the previous line according to the dictionaries $\{i: m, j:j, l:l, m:i\}$, $\{i: l, j:j, l:m, m:i\}$, $\{i: m, j:i, l:l, m:j\}$, $\{i: l, j:i, l:m, m:j\}$, respectively. The sixth equality follows from replacing the indices in the second, third, fourth and fifth summations in the previous line according to the dictionaries $\{i:l, j:i, l:j\}$, $\{i:j, j:i, l:l\}$, $\{i:l, j:i, m:j\}$, $\{i:l, j:j, m:i\}$, respectively. The last equality follows by the symmetry of $P^{\gamma_n}$.
	
	Let $\mathfrak{S} \coloneqq \sum_{i < j < l < m}P^{\gamma_n}_{il}P^{\gamma_n}_{jl}P^{\gamma_n}_{im}P^{\gamma_n}_{jm} + P^{\gamma_n}_{ij}P^{\gamma_n}_{jl}P^{\gamma_n}_{im}P^{\gamma_n}_{lm} + P^{\gamma_n}_{ij}P^{\gamma_n}_{il}P^{\gamma_n}_{jm}P^{\gamma_n}_{lm}$. Substituting the expressions in Equations \eqref{Equation-Trace-L4}--\eqref{Equation-Trace-LTLLTL} into Equation \eqref{Equation-Expansion-L} yields
	\begin{equation*}
	    \begin{aligned}
	    \text{tr}((P^{\gamma_n} - G)^4) =& 8\sum_{i < j < l < m}P^{\gamma_n}_{ij}P^{\gamma_n}_{jl}P^{\gamma_n}_{im}P^{\gamma_n}_{lm}\\
	    & + 4\left(\sum_{i < j < l}(P^{\gamma_n}_{ij})^2(P^{\gamma_n}_{jl})^2 + 2\sum_{i < j < l < m}P^{\gamma_n}_{ij}P^{\gamma_n}_{il}P^{\gamma_n}_{jm}P^{\gamma_n}_{lm}\right)\\
	    & + 2\left(\sum_{i < j}(P^{\gamma_n}_{ij})^4 + 2 \sum_{i < j < l}\left((P^{\gamma_n}_{ij})^2(P^{\gamma_n}_{il})^2 + (P^{\gamma_n}_{il})^2(P^{\gamma_n}_{jl})^2\right)\right.\\
	    & \text{\phantom{ ddddd }}+\left. 4 \sum_{i < j < l < m}P^{\gamma_n}_{il}P^{\gamma_n}_{jl}P^{\gamma_n}_{im}P^{\gamma_n}_{jm}\right)\\
	    =&2\sum_{i < j}(P^{\gamma_n}_{ij})^4\\
	    &+ 4\sum_{i < j < l}\left((P^{\gamma_n}_{ij})^2(P^{\gamma_n}_{jl})^2 + (P^{\gamma_n}_{il})^2(P^{\gamma_n}_{jl})^2 + (P^{\gamma_n}_{ij})^2(P^{\gamma_n}_{il})^2\right)\\
	    &+ 8\mathfrak{S}.
	    \end{aligned}
	\end{equation*}

Next, by the triangle inequality and Lemma \ref{Lemma-P-Linear-Algebra} \ref{Lemma-B1},
	\begin{equation}
	    \label{Equation-Frak-S}
	    \begin{aligned}
		|\mathfrak S| \leq & \frac 1 4 \sum_{i < j}(P^{\gamma_n}_{ij})^4\\
		                   &+ \frac 1 2\sum_{i < j < l}\left((P^{\gamma_n}_{ij})^2(P^{\gamma_n}_{jl})^2 + (P^{\gamma_n}_{il})^2(P^{\gamma_n}_{jl})^2 + (P^{\gamma_n}_{ij})^2(P^{\gamma_n}_{il})^2\right)\\
		                   &+ \frac 1 8 \text{tr}((P^{\gamma_n} - G)^4)\\
		                   \leq & Cr_n.
		\end{aligned}
	\end{equation}
	
	Take $\{u_i\}$ to be a sequence of i.i.d. mean-zero and unit variance random variables. Define
	\begin{equation*}
		\begin{aligned}
		&\Delta_1 \coloneqq \sum_{i < j < l} \left(P^{\gamma_n}_{ij}P^{\gamma_n}_{il}u_{j}u_{l} + P^{\gamma_n}_{ij}P^{\gamma_n}_{jl}u_{i}u_{l} + P^{\gamma_n}_{il}P^{\gamma_n}_{jl}u_{i}u_{j} \right),\\
		&\Delta_2 \coloneqq \sum_{i < j < l} \left(P^{\gamma_n}_{ij}P^{\gamma_n}_{il}u_{j}u_{l} + P^{\gamma_n}_{ij}P^{\gamma_n}_{jl}u_{i}u_{l} \right), \Delta_3 \coloneqq \sum_{i < j < l} \left(P^{\gamma_n}_{il}P^{\gamma_n}_{jl}u_{i}u_{j}\right).
		\end{aligned}
	\end{equation*}
	Then by Lemma \ref{Lemma-P-Linear-Algebra} \ref{Lemma-B1},
	\begin{equation*}
	    \begin{aligned}
		\mathbb{E}[\Delta_3^2] &= \sum_{i < j < l}(P^{\gamma_n}_{il})^2(P^{\gamma_n}_{jl})^2 + 2\sum_{i < j < l < m}P^{\gamma_n}_{il}P^{\gamma_n}_{jl}P^{\gamma_n}_{im}P^{\gamma_n}_{jm}\\
		& \leq r_n + 2\sum_{i < j < l < m}P^{\gamma_n}_{il}P^{\gamma_n}_{jl}P^{\gamma_n}_{im}P^{\gamma_n}_{jm}.
		\end{aligned}
	\end{equation*}
	Furthermore,
	\begin{equation*}
		\mathbb{E}[\Delta_2\Delta_3] = \sum_{i < j < l < m}P^{\gamma_n}_{ij}P^{\gamma_n}_{il}P^{\gamma_n}_{jm}P^{\gamma_n}_{lm} +  \sum_{i < j < l < m}P^{\gamma_n}_{ij}P^{\gamma_n}_{jl}P^{\gamma_n}_{im}P^{\gamma_n}_{lm},
	\end{equation*}
	and
	\begin{equation*}
	    \begin{aligned}
	    \mathbb{E}[\Delta_2^2] =& \sum_{\{i, l\} < j < k}P^{\gamma_n}_{ij}P^{\gamma_n}_{il}P^{\gamma_n}_{mj}P^{\gamma_n}_{ml} + \sum_{i < \{j, l\} < m}P^{\gamma_n}_{ij}P^{\gamma_n}_{jm}P^{\gamma_n}_{il}P^{\gamma_n}_{lm}\\
	                            &+ \sum_{i < j < l < m}P^{\gamma_n}_{ij}P^{\gamma_n}_{im}P^{\gamma_n}_{jl}P^{\gamma_n}_{lm} + \sum_{l < i < j < m}P^{\gamma_n}_{ij}P^{\gamma_n}_{jm}P^{\gamma_n}_{li}P^{\gamma_n}_{lm}\\
	                           =& \sum_{i < j < l}(P^{\gamma_n}_{ij})^2(P^{\gamma_n}_{il})^2 + \sum_{i < j < l}(P^{\gamma_n}_{ij})^2(P^{\gamma_n}_{jl})^2 + 2\sum_{i < m < j < l}P^{\gamma_n}_{ij}P^{\gamma_n}_{il}P^{\gamma_n}_{mj}P^{\gamma_n}_{ml}\\
	                           &+ 2\sum_{i < j < l < m}P^{\gamma_n}_{ij}P^{\gamma_n}_{jm}P^{\gamma_n}_{il}P^{\gamma_n}_{lm}\\
	                           & + \sum_{i < j < l < m}P^{\gamma_n}_{ij}P^{\gamma_n}_{im}P^{\gamma_n}_{jl}P^{\gamma_n}_{lm} + \sum_{i < j < l  < m}P^{\gamma_n}_{jl}P^{\gamma_n}_{lm}P^{\gamma_n}_{ij}P^{\gamma_n}_{im}\\
	                           =& \sum_{i < j < l}(P^{\gamma_n}_{ij})^2(P^{\gamma_n}_{il})^2 + \sum_{i < j < l}(P^{\gamma_n}_{ij})^2(P^{\gamma_n}_{jl})^2 + 2 \mathfrak S\\
	                           & \leq Cr_n,
	    \end{aligned}
	\end{equation*}
	where the last inequality follows from Lemma \ref{Lemma-P-Linear-Algebra} \ref{Lemma-B1} and Equation \eqref{Equation-Frak-S}.
	Since $\Delta_1 = \Delta_2 + \Delta_3$,  $\mathbb{E}[\Delta_1^2] = \mathbb{E}[\Delta_2^2] + \mathbb{E}[\Delta_3^2] + 2\mathbb{E}[\Delta_2\Delta_3] \leq Cr_n + 2\mathfrak S \leq Cr_n$. Hence by the triangle inequality and the expression for $\mathbb{E}[\Delta_3^2]$,
	\begin{equation*}
		\begin{aligned}
		\left|\sum_{i < j < l < m}P^{\gamma_n}_{il}P^{\gamma_n}_{jl}P^{\gamma_n}_{im}P^{\gamma_n}_{jm}\right| &\leq C\left( \mathbb{E}[\Delta_3^2] + r_n\right) \leq C\left( \mathbb{E}[(\Delta_1 - \Delta_2)^2] + r_n\right)\\
		& \leq C(\mathbb{E}[\Delta_1^2] + \mathbb{E}[\Delta_2^2] + r_n)\\
		& \leq Cr_n.
		\end{aligned}
	\end{equation*}
\end{proof}

\begin{lemma}
    \label{Lemma-Gamma-Star-Existence}
    Let~$\gamma_{-}>0$ and~$\Gamma_n=\Gamma(\gamma_{-}) \coloneqq \{\gamma_n\in \Re : \gamma_n \geq 0 \text{ if } r_n = k_n , \text{ and } \gamma_n \geq \gamma_{-} > 0 \text{ if } r_n < k_n\}$. Then if $P^{\gamma_n}$ is a non-diagonal matrix,\footnote{We note that Assumption \ref{Assumption-High-Level} excludes the case where $P^{\gamma_n}$ is diagonal, so that $\gamma^*_n$ exists under the assumptions made in this paper.}
    \begin{equation*}
		\gamma^*_n \coloneqq \max\underset{\gamma_n  \in \Gamma_n}{\arg\max} \sum_{i = 1}^n\sum_{j\neq i}(P^{\gamma_n}_{ij})^2\in[0,\infty)
	\end{equation*}
    exists.
    
    \begin{proof}
    
    Notice that 
    \begin{equation*}
	\begin{aligned}
		\sum_{i = 1}^n\sum_{j\neq i}(P^{\gamma_n}_{ij})^2 &= \sum_{i=1}^n\sum_{j = 1}^n(P^{\gamma_n}_{ij})^2  - \sum_{i = 1}^n(P^{\gamma_n}_{ii})^2\\
		&= \sum_{i = 1}^n(P^{\gamma_n})^2_{ii} -\sum_{i = 1}^n(P^{\gamma_n}_{ii})^2\\
		&= \sum_{l = 1}^{r_n}\left(\frac{D_{ll}^2}{D_{ll}^2 + \gamma_n}\right)^2 - \sum_{i = 1}^n\left(\sum_{l = 1}^{r_n}\frac{D_{ll}^2}{D_{ll}^2 + \gamma_n}U^2_{il}\right)^2,
	\end{aligned}
    \end{equation*}
    where the third equality follows from Lemma \ref{Lemma-P-Linear-Algebra} \ref{Lemma-Traces-2} and Equation \eqref{Equation-P-hh}.
    Since
    \begin{equation*}
    		\underset{\gamma_n \to \infty}{\text{ lim }}\frac{D^2_{ll}}{D^2_{ll} + \gamma_n} = 0,
    \end{equation*}
    and $U^2_{il} \leq 1$ for~$l=1,\hdots,r_n$ and~$i=1,\hdots,n$, it follows that
    \begin{equation*}
    	\begin{aligned}
    	\underset{\gamma_n \to \infty}{\text{ lim }}\sum_{i = 1}^n\sum_{j\neq i}(P^{\gamma_n}_{ij})^2 &= 0.\\
    	\end{aligned}
    \end{equation*}
    Hence, the maximum is not attained for arbitrarily large $\gamma_n$. Since $P^{\gamma_n}$ is a non-diagonal matrix by assumption, $\sum_{i = 1}^n\sum_{j\neq i}(P^{\gamma_n}_{ij})^2$ is strictly positive.  This leaves a compact set over which the non-zero $\sum_{i = 1}^n\sum_{j\neq i}(P^{\gamma_n}_{ij})^2$ is maximised, such that a maximiser exists.


\end{proof}
\end{lemma}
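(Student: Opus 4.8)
The plan is to treat $f(\gamma_n) := \sum_{i=1}^n\sum_{j\neq i}(P^{\gamma_n}_{ij})^2$ as a continuous, nonnegative function on the closed set $\Gamma_n$ and to show that its set of maximisers is a non-empty compact subset of $\mathbb{R}$, so that its maximum element $\gamma^*_n$ exists. First I would rewrite $f$ via the singular value decomposition, exactly as in the displayed computation opening the proof: using Lemma \ref{Lemma-P-Linear-Algebra} \ref{Lemma-Traces-2} and Equation \eqref{Equation-P-hh},
\begin{equation*}
f(\gamma_n) = \sum_{l=1}^{r_n}\left(\frac{D_{ll}^2}{D_{ll}^2+\gamma_n}\right)^2 - \sum_{i=1}^n\left(\sum_{l=1}^{r_n}\frac{D_{ll}^2}{D_{ll}^2+\gamma_n}U_{il}^2\right)^2.
\end{equation*}
This writes $f$ as a finite sum of rational functions whose denominators $D_{ll}^2+\gamma_n$ are bounded away from zero on $\Gamma_n$ (when $r_n=k_n$ all $D_{ll}>0$, and when $r_n<k_n$ we have $\gamma_n\geq\gamma_->0$), so $f$ is continuous on $\Gamma_n$.

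Second, I would read off the behaviour at infinity: since $D_{ll}^2/(D_{ll}^2+\gamma_n)\to 0$ as $\gamma_n\to\infty$ for each $l$, and $U_{il}^2\leq 1$, the displayed formula yields $f(\gamma_n)\to 0$. Third, for positivity I would invoke the hypothesis that $P^{\gamma_n}$ is non-diagonal (equivalently, that Assumption \ref{Assumption-High-Level} holds): then some off-diagonal entry is non-zero for an admissible $\gamma_n$, so the supremum $s := \sup_{\gamma_n\in\Gamma_n}f(\gamma_n)$ is strictly positive, and it is finite because $0\leq f$ and $f$ vanishes at infinity.

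The core of the argument is then a compactness step. Because $f(\gamma_n)\to 0<s$, there is some $M<\infty$ with $f(\gamma_n)<s/2$ for all $\gamma_n>M$, so every maximiser must lie in the compact set $K := \Gamma_n\cap[0,M]$. The Weierstrass theorem shows that the continuous $f$ attains $s$ on $K$, so the argmax set is non-empty; moreover it equals $\{\gamma_n\in\Gamma_n: f(\gamma_n)=s\}$, which is closed (by continuity of $f$) and bounded (contained in $K$), hence compact. A compact subset of $\mathbb{R}$ possesses a maximum element, and this is precisely $\gamma^*_n$.

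Since this is a standard Weierstrass-type existence proof, there is no deep obstacle. The one point requiring genuine care is ruling out a maximising sequence that escapes to $+\infty$; this is exactly what the vanishing-at-infinity computation handles, confining the maximisers to a compact interval. That confinement is what simultaneously delivers both the existence of a maximiser and the existence of a \emph{largest} maximiser, and so it is the step I would treat most carefully.
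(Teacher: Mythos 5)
Your proposal is correct and follows essentially the same route as the paper's proof: the SVD-based rewriting of $\sum_{i=1}^n\sum_{j\neq i}(P^{\gamma_n}_{ij})^2$, its vanishing as $\gamma_n\to\infty$, strict positivity from non-diagonality, and a Weierstrass argument on the resulting compact set. If anything, you are slightly more complete than the paper, since you explicitly verify continuity and note that the argmax set is closed and bounded, hence compact, which is what justifies taking its \emph{maximum} element as $\gamma^*_n$ -- a step the paper leaves implicit.
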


\begin{lemma}
	\label{Lemma-Variance-Cons}
	Under Assumptions \ref{Assumption-MC}, \ref{Assumption-Bekker} and the null hypothesis in Equation \eqref{Equation-RJAR-Null-Hypothesis}, $|\hat{\Phi}_{\gamma_n} - \Phi_{\gamma_n}|\overset{p}{\to} 0$.
	\end{lemma}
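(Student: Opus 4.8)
The goal is to show that the variance estimator
$\hat{\Phi}_{\gamma_n} = \frac{2}{r_n}\sum_{i=1}^n\sum_{j\neq i}(P^{\gamma_n}_{ij})^2\varepsilon_i^2\varepsilon_j^2$
concentrates around its ``population'' analogue
$\Phi_{\gamma_n} = \frac{2}{r_n}\sum_{i=1}^n\sum_{j\neq i}(P^{\gamma_n}_{ij})^2\mathbb{E}[\varepsilon_i^2]\mathbb{E}[\varepsilon_j^2]$.
The natural approach is to show convergence in $L^1$ (or $L^2$), which implies convergence in probability. The plan is to write the difference as a mean-zero object plus a controllable remainder by decomposing each $\varepsilon_i^2 = \mathbb{E}[\varepsilon_i^2] + (\varepsilon_i^2 - \mathbb{E}[\varepsilon_i^2])$ and expanding the product $\varepsilon_i^2\varepsilon_j^2$ accordingly. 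Writing $\eta_i := \varepsilon_i^2 - \mathbb{E}[\varepsilon_i^2]$, which is mean-zero with $\sup_i\mathbb{E}[\eta_i^2]\leq \sup_i\mathbb{E}[\varepsilon_i^4]<\infty$ by Assumption \ref{Assumption-MC}, the difference becomes
\begin{equation*}
\hat{\Phi}_{\gamma_n} - \Phi_{\gamma_n} = \frac{2}{r_n}\sum_{i=1}^n\sum_{j\neq i}(P^{\gamma_n}_{ij})^2\big(\eta_i\eta_j + \mathbb{E}[\varepsilon_i^2]\eta_j + \mathbb{E}[\varepsilon_j^2]\eta_i\big).
\end{equation*}
Since $\mathbb{E}[\eta_i]=0$ and the $\eta_i$ are independent across $i$, each of the three terms has mean zero, so it suffices to show that each term converges to zero in $L^2$ (equivalently, that its variance vanishes).

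First I would bound the two ``linear'' terms. By symmetry of $P^{\gamma_n}$ and independence, the variance of $\frac{2}{r_n}\sum_{i}\sum_{j\neq i}(P^{\gamma_n}_{ij})^2\mathbb{E}[\varepsilon_i^2]\eta_j$ reduces to a sum over the single free index $j$ of $\big(\sum_{i\neq j}(P^{\gamma_n}_{ij})^2\mathbb{E}[\varepsilon_i^2]\big)^2\mathbb{E}[\eta_j^2]$, up to constants. Using the uniform moment bounds and Lemma \ref{Lemma-P-Linear-Algebra}\ref{Lemma-P-Row-Ineq}, each inner sum $\sum_{i}(P^{\gamma_n}_{ij})^2 = (P^{\gamma_n})^2_{jj}\leq 1$ is bounded, so the whole variance is $O(r_n^{-2}\sum_j 1) = O(r_n^{-2}\cdot n)$. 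The subtlety here is that $n$ can exceed $r_n$, so this crude bound is not obviously $o(1)$; I would instead keep one factor of $(P^{\gamma_n})^2_{jj}$ and use $\sum_j (P^{\gamma_n})^2_{jj} = \sum_j (P^{\gamma_n}_{jj})$-type traces bounded by $r_n$ from Lemma \ref{Lemma-P-Linear-Algebra}\ref{Lemma-Traces-1}--\ref{Lemma-Traces-2}, giving $O(r_n^{-2}\cdot r_n) = O(r_n^{-1})\to 0$ by Assumption \ref{Assumption-Bekker}. The quadratic term $\frac{2}{r_n}\sum_i\sum_{j\neq i}(P^{\gamma_n}_{ij})^2\eta_i\eta_j$ is a degenerate $U$-statistic-like quantity whose variance involves $\sum_{i\neq j}(P^{\gamma_n}_{ij})^4$; by Lemma \ref{Lemma-P-Linear-Algebra}\ref{Lemma-B1} this is bounded by $r_n$, so its variance is $O(r_n^{-2}\cdot r_n) = O(r_n^{-1})\to 0$.

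Collecting the three bounds gives $\mathbb{E}\big[(\hat{\Phi}_{\gamma_n}-\Phi_{\gamma_n})^2\big] = O(r_n^{-1})\to 0$, and convergence in probability follows by Chebyshev's inequality. I expect the main obstacle to be precisely the bookkeeping of which trace inequalities from Lemma \ref{Lemma-P-Linear-Algebra} are tight enough to beat the factor $n$ that appears when naively counting free summation indices: the crucial observation is that the normalisation is by $r_n$, not $n$, and that every relevant sum of squares or fourth powers of the entries of $P^{\gamma_n}$ is controlled by $r_n$ (never by $n$), so that each variance term carries a net factor of $r_n^{-1}$. Care is also needed because the statement must hold for \emph{any} admissible $\gamma_n$ uniformly in the sense that the bounds in Lemma \ref{Lemma-P-Linear-Algebra} hold for every $\gamma_n$ in the relevant range; since those lemmas are stated pointwise in $\gamma_n$ and the constants are universal, this causes no difficulty.
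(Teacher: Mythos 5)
Your proposal is correct and follows essentially the same route as the paper's proof: the identical decomposition $\hat{\Phi}_{\gamma_n}-\Phi_{\gamma_n}=\frac{2}{r_n}\sum_{i}\sum_{j\neq i}(P^{\gamma_n}_{ij})^2(\eta_i\eta_j+\mathbb{E}[\varepsilon_i^2]\eta_j+\mathbb{E}[\varepsilon_j^2]\eta_i)$, second-moment bounds of order $r_n^{-1}$ for each piece via the trace inequalities of Lemma \ref{Lemma-P-Linear-Algebra}, and Markov/Chebyshev plus Assumption \ref{Assumption-Bekker} to conclude. Your handling of the linear terms (keeping a factor $(P^{\gamma_n})^2_{jj}\leq 1$ and summing the trace to $r_n$) is just an inlined version of the paper's appeal to Lemma \ref{Lemma-P-Linear-Algebra}\ref{Lemma-B1}, so there is no substantive difference.
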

	
	\begin{proof}
			Defining $\eta_i:= \varepsilon_i^2 - \mathbb{E}[\varepsilon^2_i]$, one can write 
		\begin{equation*}
			\hat{\Phi}_{\gamma_n} - \Phi_{\gamma_n} = \frac{2}{r_n}\sum_{i = 1}^n\sum_{j\neq i}(P_{ij}^{\gamma_n})^2\left(\eta_i\eta_j + \mathbb{E}[\varepsilon^2_j]\eta_i + \mathbb{E}[\varepsilon^2_i]\eta_j \right),
		\end{equation*}
		 and it follows that
		\begin{equation}
			\label{Equation-RJAR-Abs-Phi-N}
			\begin{aligned}
			\left|\hat{\Phi}_{\gamma_n} - \Phi_{\gamma_n}\right| \leq& \frac{2}{r_n}\left|\sum_{i = 1}^n\sum_{j \neq i}(P_{ij}^{\gamma_n})^2\eta_i\eta_j\right| + \frac{2}{r_n}\left|\sum_{i = 1}^n\sum_{j \neq i}(P_{ij}^{\gamma_n})^2\mathbb{E}[\varepsilon^2_j]\eta_i\right|\\
			&+ \frac{2}{r_n}\left|\sum_{i = 1}^n\sum_{j \neq i}^n(P_{ij}^{\gamma_n})^2\mathbb{E}[\varepsilon^2_i]\eta_j\right|\\
			\equiv& A_1 + A_2 + A_3.
			\end{aligned}
		\end{equation}
		Consider each of $\mathbb{E}[A_1^2]$, $\mathbb{E}[A_2^2]$, and $\mathbb{E}[A_3^2]$ in turn.
		\begin{align*}
			\mathbb{E}[A_1^2] =& \frac{4}{r_n^2}\sum_{i = 1}^n\sum_{j \neq i}\sum_{h = 1}^n\sum_{g \neq h}(P_{ij}^{\gamma_n})^2(P_{hg}^{\gamma_n})^2 \mathbb{E}[\eta_i\eta_j \eta_h\eta_g]\\
			= &   \frac{4}{r_n^2}\sum_{i = 1}^n\sum_{j \neq i}\sum_{h \neq i}\sum_{g \neq h}(P_{ij}^{\gamma_n})^2(P_{hg}^{\gamma_n})^2\mathbb{E}[\eta_i\eta_j \eta_h\eta_g]\\
			&+  \frac{4}{r_n^2}\sum_{i = 1}^n\sum_{j \neq i}\sum_{g \neq i}(P_{ij}^{\gamma_n})^2(P_{ig}^{\gamma_n})^2\mathbb{E}[\eta_i^2\eta_j \eta_g]\\
			= &  \frac{4}{r_n^2}\sum_{i = 1}^n\sum_{j \neq i}\sum_{h \neq i}\sum_{g \notin\{h,i\}}(P_{ij}^{\gamma_n})^2(P_{hg}^{\gamma_n})^2\mathbb{E}[\eta_i]\mathbb{E}[\eta_j \eta_h\eta_g]\\
			+&  \frac{4}{r_n^2}\sum_{i = 1}^n\sum_{j \neq i}\sum_{h \neq i}(P_{ij}^{\gamma_n})^2(P_{ih}^{\gamma_n})^2\mathbb{E}[\eta_i^2\eta_j \eta_h]\\
			&+ \frac{4}{r_n^2}\sum_{i = 1}^n\sum_{j \neq i}\sum_{g \neq i}(P_{ij}^{\gamma_n})^2(P_{ig}^{\gamma_n})^2\mathbb{E}[\eta_i^2\eta_j \eta_g]\\
			= & \frac{8}{r_n^2}\sum_{i = 1}^n\sum_{j \neq i}\sum_{h \neq i}(P_{ij}^{\gamma_n})^2(P_{ih}^{\gamma_n})^2\mathbb{E}[\eta_i^2\eta_j \eta_h]\\
			= & \frac{8}{r_n^2}\sum_{i = 1}^n\sum_{j \neq i}\sum_{h \notin\{i,j\}}(P_{ij}^{\gamma_n})^2(P_{ih}^{\gamma_n})^2\mathbb{E}[\eta_i^2]\mathbb{E}[\eta_j]\mathbb{E}[\eta_h] \\
			&+ \frac{8}{r_n^2}\sum_{i = 1}^n\sum_{j \neq i}(P_{ij}^{\gamma_n})^4\mathbb{E}[\eta_i^2]\mathbb{E}[\eta_j^2]\\
			= & \frac{8}{r_n^2}\sum_{i = 1}^n\sum_{j \neq i}(P_{ij}^{\gamma_n})^4\mathbb{E}[\eta_i^2]\mathbb{E}[\eta_j^2]\\
			\leq & \frac{C }{r_n^2}\sum_{i = 1}^n\sum_{j \neq i}(P_{ij}^{\gamma_n})^4\\
			\leq & \frac{C}{r_n},
		\end{align*}
		The fourth equality follows from $\mathbb{E}[\eta_i] = 0$ and the symmetry of $P^{\gamma_n}$. The sixth equality follows from $\mathbb{E}[\eta_j] = 0$. The first inequality follows from Assumption \ref{Assumption-MC}, which implies $\text{sup}_{i\in \mathbb{N}}\mathbb{E}[\eta_i^2] < \infty$. The second inequality follows from Lemma \ref{Lemma-P-Linear-Algebra} \ref{Lemma-B1}. Next, concerning $\mathbb{E}[A_2^2]$ one has
		\begin{equation*}
			\begin{aligned}
				\mathbb{E}[A_2^2] =& \frac{4}{r_n^2}\sum_{i = 1}^n\sum_{j \neq i} \sum_{h = 1}^n\sum_{g\neq h}(P_{ij}^{\gamma_n})^2(P_{hg}^{\gamma_n})^2\mathbb{E}[\varepsilon^2_j]\mathbb{E}[\varepsilon^2_g]\mathbb{E}[\eta_i\eta_h]\\
				= & \frac{4}{r_n^2}\sum_{i = 1}^n\sum_{j \neq i} \sum_{h \neq i}\sum_{g\neq h}(P_{ij}^{\gamma_n})^2(P_{hg}^{\gamma_n})^2\mathbb{E}[\varepsilon^2_j]\mathbb{E}[\varepsilon^2_g]\mathbb{E}[\eta_i]\mathbb{E}[\eta_h]\\ 
				+ & \frac{4}{r_n^2}\sum_{i = 1}^n\sum_{j \neq i} \sum_{g\neq i}(P_{ij}^{\gamma_n})^2(P_{ig}^{\gamma_n})^2\mathbb{E}[\varepsilon^2_j]\mathbb{E}[\varepsilon^2_g]\mathbb{E}[\eta_i^2]\\
				= & \frac{4}{r_n^2}\sum_{i = 1}^n\sum_{j \neq i} \sum_{g\neq i}(P_{ij}^{\gamma_n})^2(P_{ig}^{\gamma_n})^2\mathbb{E}[\varepsilon^2_j]\mathbb{E}[\varepsilon^2_g]\mathbb{E}[\eta_i^2]\\
				\leq & \frac{C}{r_n^2}\sum_{i = 1}^n\sum_{j \neq i} \sum_{g\neq i}(P_{ij}^{\gamma_n})^2(P_{ig}^{\gamma_n})^2\\
				\leq & \frac{C}{r_n},
			\end{aligned}
		\end{equation*}
		by the same reasoning that led to the penultimate display. Finally, similar arguments imply that $\mathbb{E}[A_3^2] \leq \frac{C}{r_n}$ such that by Markov's inequality and Equation \eqref{Equation-RJAR-Abs-Phi-N} it follows that $\hat{\Phi}_{\gamma_n} - \Phi_{\gamma_n} = O_p({r^{-1/2}_n})$. Thus, by Assumption \ref{Assumption-Bekker}, $|\hat{\Phi}_{\gamma_n} - \Phi_{\gamma_n}|\overset{p}{\to} 0$. 
	\end{proof}

\subsection{Proof of Proposition \ref{Proposition-Weaker-CMSMS}\label{Appendix-Weaker-CMSMS}}
As observed after the statement of Proposition \ref{Proposition-Weaker-CMSMS}, it suffices to prove its part 2. Thus, 
assume that there exists a $\delta' \in (0, 1)$ such that $\frac{1}{k_n}|\mathcal{A}_n(\delta')| \to 0$. Notice that the sum in Assumption \ref{Assumption-High-Level} with $\gamma_n = 0$ can be written as
\begin{equation*}
    \begin{aligned}
       \frac{1}{k_n} \sum_{i = 1}^n\sum_{j\neq i} P_{ij}^2 &= \frac{1}{k_n} \left(\sum_{i = 1}^nP_{ii} - \sum_{i = 1}^nP_{ii}^2\right)\\
                                             &= \frac{1}{k_n} \left(k_n - \sum_{i \in \mathcal{A}^c_n(\delta')} P_{ii}^2 - \sum_{i \in \mathcal{A}_n(\delta')} P_{ii}^2\right)\\
                                             &\geq \frac{1}{k_n} \left(k_n - (1-\delta')\sum_{i \in \mathcal{A}_n^c(\delta')} P_{ii} - \sum_{i \in \mathcal{A}_n(\delta')} P_{ii}^2\right)\\
                                             & \geq \frac{1}{k_n} \left(k_n - (1-{\delta}')k_n - |\mathcal{A}_n(\delta')|\right)\\
                                             & \geq \frac{1}{k_n} \left( k_n\delta' - |\mathcal{A}_n(\delta')|\right),
    \end{aligned}
\end{equation*}
where the first equality uses the idempotency and symmetry of $P$, and the second equality uses the fact that the trace of~$P$ is equal to its rank~$k_n$. Thus, if $\frac{1}{k_n}|\mathcal{A}_n(\delta')| \to 0$, 
\begin{align*}
   \liminf_{n\to\infty}\frac{1}{k_n}\sum_{i = 1}^n\sum_{j\neq i}(P^{\gamma_n^*}_{ij})^2 
   \geq
    \liminf_{n\to\infty} \frac{1}{k_n}\sum_{i = 1}^n\sum_{j\neq i}P_{ij}^2 
    \geq 
    \delta'
    >
   0, 
\end{align*}
and hence Assumption \ref{Assumption-High-Level} is satisfied.

\qed

\subsection{Proof of Proposition \ref{prop:hd}}
The following preparatory lemma is likely well-known but we state it here for ease of reference. Its proof is a simple consequence of the Marchenko-Pastur theorem.\footnote{We thank Alexei Onatski for suggesting this.} Denote by~$\lambda_{n,1}\geq \hdots\geq \lambda_{n,n}\geq 0$ the eigenvalues of~$\frac{ZZ'}{k_n}$, where $\lambda_{n,i}=\frac{D^2_{ii}}{k_n}$ in our earlier notation on p.~\pageref{eq:Pgamma}, and by~$F_n$ the empirical distribution of these. 
\begin{lemma}\label{lem:moments}
	Let~$Z$ be an~$n\times k_n$ matrix with i.i.d.~entries of mean zero and variance one. In addition, let~$p>2$ satisfy~$\mathbb{E}|Z_{11}|^{4p}<\infty$ and let~$k_n=\tau n$ for~$\tau\in[1,\infty)$. Then
	\begin{enumerate}
\item $\frac{1}{n}\sum_{i=1}^n\lambda_{n,i}\to 1$ almost surely.
		\item $\frac{1}{n}\text{tr}\sbr[2]{\del[1]{\frac{ZZ'}{n}}^2}
		=
		 \frac{\tau^2}{n}\text{tr}\sbr[2]{\del[1]{\frac{ZZ'}{k_n}}^2}
		 =
		  \frac{\tau^2}{n}\sum_{i=1}^n\lambda_{n,i}^2
		 \to \tau^2+\tau$ almost surely.
		 \item $\frac{1}{n}\sum_{i=1}^n \sbr[2]{\del[0]{\frac{ZZ'}{n}}_{ii}}^2
		  =
		  \frac{\tau^2}{n}\sum_{i=1}^n \sbr[2]{\del[0]{\frac{ZZ'}{k_n}}_{ii}}^2
		  \to \tau^2
		  $ almost surely.
	\end{enumerate}
\end{lemma}

\begin{proof}
	Denote by~$F_\tau$ the Marchenko-Pastur distribution with ``dimension-to-sample-ratio''~$1/\tau$ (observe that the dimension of~$ZZ'$ is~$n\times n$ whereas each entry is based on a ``sample size'' of~$k_n$). By the Marchenko-Pastur theorem (as stated in Theorem 3.6 of \citet{bai2010spectral}),~$F_n$ almost surely converges weakly to~$F_\tau$. In addition, by the Bai-Yin-Krishnaiah law (as stated in Theorem 5.8 of~\citet{bai2010spectral}) it follows that~$\lambda_{n,1}\to (1+\tau^{-1/2})^2$ almost surely. Hence, outside a set of probability zero for~$n$ sufficiently large~$F_n$ has compact support and therefore~$F_n \circ (x\mapsto x^a)^{-1}$ is uniformly integrable for all~$a\in(0,\infty)$.
	
	We first establish part~1. This follows by the observations just made because
	\begin{align*}
	\frac{1}{n}\sum_{i=1}^n\lambda_{n,i}
	=
	\int x dF_n
	\to 
	\int x dF_\tau
	=
	1\qquad \text{almost surely}, 	
	\end{align*}
	where the last equality follows from, e.g., Lemma 3.1 in \citet{bai2010spectral}.

	 To prove part 2., note that since~$k_n/n=\tau$
	\begin{align*}
	\frac{1}{n}\text{tr}\sbr[3]{\del[2]{\frac{ZZ'}{n}}^2}
		=
		 \frac{\tau^2}{n}\text{tr}\sbr[3]{\del[2]{\frac{ZZ'}{k_n}}^2}
		 =
		 \frac{\tau^2}{n}\sum_{i=1}^n\lambda_{n,i}^2.
		 	\end{align*}
		 	The almost sure convergence to~$\tau^2+\tau$ follows from the observations prior to the previous display, i.e.
		 	\begin{align*}
		 	\frac{\tau^2}{n}\sum_{i=1}^n\lambda_{n,i}^2
		 	=
		 	\tau^2\int x^2 dF_n
		 \to 
		 \tau^2 \int x^2dF_{\tau}
		 =
		 \tau^2(1+1/\tau)=\tau^2+\tau \qquad\text{almost surely},	
		 	\end{align*}
		 	 where the second-to-last equality follows from, e.g., Lemma 3.1 in \citet{bai2010spectral}.
		 	
		 	Next, to establish part 3., note that
		 	\begin{align*}
		 		\mathbb{E}  \sbr[2]{\del[2]{\frac{ZZ'}{k_n}}_{ii}}^2
		 		=
		 		\frac{1}{k_n^2}\sum_{j=1}^{k_n}\mathbb{E}Z_{ij}^4 + \frac{1}{k_n^2}\sum_{j=1}^{k_n}\sum_{l\neq j}^{k_n}\mathbb{E}Z_{ij}^2\mathbb{E}Z_{il}^2
		 		\to 1\qquad\text{for }i=1,\hdots,n 
		 	\end{align*}
		 	and that the~$\sbr[1]{\del[1]{\frac{ZZ'}{k_n}}_{ii}}^2$ are independent across~$i=1,\hdots,n$. Therefore, an application of Markov's inequality along with the Marcinkiewicz-Zygmund inequality yields that there exists a~$c_p>0$ such that for every~$\varepsilon>0$ 
		 	\begin{align*}
		 		\mathbb{P}\del[3]{ \envert[3]{\frac{1}{n}\sum_{i=1}^n \sbr[2]{\del[2]{\frac{ZZ'}{k_n}}_{ii}}^2-\mathbb{E}  \sbr[2]{\del[2]{\frac{ZZ'}{k_n}}_{11}}^2}\geq \varepsilon}
		 	&\leq 	
		 	c_p\frac{\mathbb{E}\del[3]{\sum_{i=1}^n \del[2]{\sbr[2]{\del[2]{\frac{ZZ'}{k_n}}_{ii}}^2-\mathbb{E}  \sbr[2]{\del[2]{\frac{ZZ'}{k_n}}_{11}}^2}^2}^{p/2}}{\varepsilon^p n^p}\\
		 	&= 
		 	c_p\frac{n^{p/2}\mathbb{E}\del[3]{\frac{1}{n}\sum_{i=1}^n \del[2]{\sbr[2]{\del[2]{\frac{ZZ'}{k_n}}_{ii}}^2-\mathbb{E}  \sbr[2]{\del[2]{\frac{ZZ'}{k_n}}_{11}}^2}^2}^{p/2}}{\varepsilon^p n^p} \\
		 	&\leq 
		 	c_p\frac{n^{p/2}\mathbb{E}\envert[2]{\sbr[2]{\del[2]{\frac{ZZ'}{k_n}}_{11}}^2-\mathbb{E}  \sbr[2]{\del[2]{\frac{ZZ'}{k_n}}_{11}}^2}^{p}}{\varepsilon^p n^p},
		 	\end{align*} 
		 	where the second inequality follows by Jensen's inequality as well as the~$\sbr[1]{\del[1]{\frac{ZZ'}{k_n}}_{ii}}^2$ being identically distributed. Another application of Jensen's inequality shows that the right-hand side is bounded from above by~$c(p,\varepsilon)n^{-p/2}$ [here $c(p,\varepsilon)$ is a non-negative constant depending only on~$p$ and~$\varepsilon$]. Therefore, part 3.~follows from the Borel-Cantelli lemma.
		 	\end{proof}

    \begin{proof}[Proof of Proposition~\ref{prop:hd}]
	Since the entries of~$Z$ have a distribution that is absoluteley continuous with respect to the Lebesgue measure, the rank of~$Z$ is~$n$ with probability one. Hence, we show that there exists a~$\gamma_{-}>0$ and a sequence~$\gamma_n$ satisfying~$\gamma_n\geq \gamma_{-}$ for all~$n\in\mathbb{N}$ such that 
	\begin{align}\label{eq:Ass3show}
		\liminf_{n\to\infty}\frac{1}{n}\sum_{i=1}^n\sum_{j\neq i}^n\del[1]{P_{ij}^{\gamma_n}}^2
		>
		0 \qquad \text{almost surely}.
	\end{align}
	Observe that
	\begin{align*}
	\frac{1}{n}\sum_{i=1}^n\sum_{j\neq i}^n\del[1]{P_{ij}^{\gamma_n}}^2
=
\frac{1}{n}\sum_{i=1}^n\del[1]{P^{\gamma_n}}^2_{ii}
-
\frac{1}{n}\sum_{i=1}^n\del[1]{P^{\gamma_n}_{ii}}^2
=
	\frac{1}{n}\text{tr}\sbr[1]{\del[1]{P^{\gamma_n}}^2}
	-
	\frac{1}{n}\sum_{i=1}^n\del[1]{P^{\gamma_n}_{ii}}^2.
	\end{align*}
	Therefore,~\eqref{eq:Ass3show} will follow if we establish the existence of a sequence~$\gamma_n$, bounded away from zero, such that almost surely
	\begin{align}\label{eq:auxshow}
		\liminf_{n\to\infty}\frac{1}{n}\text{tr}\sbr[1]{\del[1]{P^{\gamma_n}}^2}
		>
		\limsup_{n\to\infty}\frac{1}{n}\sum_{i=1}^n\del[1]{P^{\gamma_n}_{ii}}^2.
	\end{align} 
	It follows by \eqref{eq:Pgamma} and its ensuing discussion that
	\begin{align}\label{eq:eigendecomp}
		\frac{1}{\tau}P^{\gamma_n}
		=
		U\frac{1}{\tau}\tilde{D}U'
		=
		U\check{D}U',
	\end{align}
	with~$\check{D}=\frac{1}{\tau}\tilde{D}$ such that with $\gamma_n=\eta_nn$ for some~$\eta_n\geq 0$
	\begin{align}\label{eq:eigen}
	\check{D}_{ii}
	=
	\frac{1}{\tau}\frac{\lambda_{n,i}}{\lambda_{n,i}+\gamma_n/k_n}
	=
	\frac{\lambda_{n,i}}{\tau\lambda_{n,i}+\eta_n}
	\qquad\text{for } i=1,\hdots n.
	\end{align}
	We begin by lower bounding the left-hand side of~\eqref{eq:auxshow}.
	\begin{align*}
	\frac{1}{n}\text{tr}\sbr[1]{\del[1]{P^{\gamma_n}}^2}
	&\geq
	\frac{1}{n}\text{tr}\sbr[2]{\del[2]{\frac{1}{\eta_n n}ZZ'}^2}
	-\envert[3]{\frac{1}{n}\text{tr}\sbr[2]{\del[2]{\frac{1}{\eta_n n}ZZ'}^2}-\frac{1}{n}\text{tr}\sbr[1]{\del[1]{P^{\gamma_n}}^2}}\\
	&
	=
	\frac{\tau^2}{n}\text{tr}\sbr[2]{\del[2]{\frac{1}{\eta_n k_n}ZZ'}^2}
	-
	\frac{\tau^2}{n}\envert[3]{\text{tr}\sbr[2]{\del[2]{\frac{1}{\eta_n k_n}ZZ'}^2}-\text{tr}\sbr[2]{\del[2]{\frac{1}{\tau}P^{\gamma_n}}^2}}\\
	&=
	\frac{\tau^2}{n\eta_n^2}\sum_{i=1}^n\lambda_{n,i}^2
	-
	\frac{\tau^2}{n}\envert[3]{\frac{1}{\eta_n^2}\sum_{i=1}^n\lambda_{n,i}^2-\sum_{i=1}^n\frac{\lambda_{n,i}^2}{(\tau\lambda_{n,i}+\eta_n)^2}}	
	\end{align*}
	Furthermore,
	\begin{align*}
	\envert[3]{\frac{1}{\eta_n^2}\sum_{i=1}^n\lambda_{n,i}^2-\sum_{i=1}^n\frac{\lambda_{n,i}^2}{(\tau\lambda_{n,i}+\eta_n)^2}}
	=
	\frac{1}{\eta_n^2}\sum_{i=1}^n\lambda_{n,i}^2\sbr[3]{1-\frac{\eta_n^2}{(\tau\lambda_{n,i}+\eta_n)^2}},	
	\end{align*}
	where by Theorem 5.8 of~\citet{bai2010spectral} for~$i=1,\hdots,n$
	\begin{align*}
		\lambda_{n,i}\leq \lambda_{n,1}\to (1+\tau^{-1/2})^2\qquad \text{almost surely}
	\end{align*}
	such that for any~$\varepsilon>0$ we can and do choose~$\eta_n= \frac{2\tau(1+\tau^{-1/2})^2}{\varepsilon}=:\eta_\varepsilon$ not depending on~$n$  to ensure that~$\lambda_{n,1}\leq \varepsilon\tau^{-1} \eta_n$ eventually. Thus, almost surely eventually	
	\begin{align*}
		\envert[3]{\frac{1}{\eta_\varepsilon^2}\sum_{i=1}^n\lambda_{n,i}^2-\sum_{i=1}^n\frac{\lambda_{n,i}^2}{(\tau\lambda_{n,i}+\eta_\varepsilon)^2}}
		\leq
		\frac{1}{\eta_\varepsilon^2}\sum_{i=1}^n\lambda_{n,i}^2\sbr[3]{1-\frac{1}{(1+\varepsilon)^2}}.
	\end{align*}
	and it follows by part 2.~of Lemma~\ref{lem:moments} that with~$\gamma_{n,\varepsilon}=n\eta_\varepsilon$
	\begin{align}\label{eq:P1}
	\liminf_{n\to\infty}\frac{1}{n}\text{tr}\sbr[1]{\del[1]{P^{\gamma_{n,\varepsilon}}}^2}	
	\geq
	\liminf_{n\to\infty}\frac{\tau^2}{(1+\varepsilon)^2n\eta_\varepsilon^2}\sum_{i=1}^n\lambda_{n,i}^2 = \frac{\tau^2+\tau}{(1+\varepsilon)^2\eta_\varepsilon^2}\qquad \text{almost surely},
	\end{align}
We now bound the right-hand side of~\eqref{eq:auxshow} from above. To this end, observe that for~$i=1,\hdots,n$
\begin{align*}
	\frac{1}{n}\sum_{i=1}^n\del[1]{P^{\gamma_{n,\varepsilon}}_{ii}}^2
	&\leq
	\frac{1}{n}\sum_{i=1}^n\sbr[2]{\del[2]{\frac{ZZ'}{\eta_\varepsilon n}}_{ii}}^2+\envert[3]{\frac{1}{n}\sum_{i=1}^n\sbr[2]{\del[2]{\frac{ZZ'}{\eta_\varepsilon n}}_{ii}}^2-\frac{1}{n}\sum_{i=1}^n\del[1]{P^{\gamma_{n,\varepsilon}}_{ii}}^2}\\
	&=
		\frac{\tau^2}{n}\sum_{i=1}^n\sbr[2]{\del[2]{\frac{ZZ'}{\eta_\varepsilon k_n}}_{ii}}^2+\frac{\tau^2}{n}\envert[3]{\sum_{i=1}^n\sbr[2]{\del[2]{\frac{ZZ'}{\eta_\varepsilon k_n}}_{ii}}^2-\sum_{i=1}^n\del[2]{\frac{1}{\tau}P^{\gamma_{n,\varepsilon}}_{ii}}^2}
\end{align*}
with
\begin{align*}
	\envert[3]{\sum_{i=1}^n\sbr[2]{\del[2]{\frac{ZZ'}{\eta_\varepsilon k_n}}_{ii}}^2-\sum_{i=1}^n\del[2]{\frac{1}{\tau}P^{\gamma_{n,\varepsilon}}_{ii}}^2}
	\leq
	\max_{1\leq i\leq n}\sbr[2]{\del[2]{\frac{ZZ'}{\eta_\varepsilon k_n}}_{ii}+\frac{1}{\tau}P^{\gamma_{n,\varepsilon}}_{ii}}\sum_{i=1}^n\envert[3]{\del[2]{\frac{ZZ'}{\eta_\varepsilon k_n}}_{ii}-\frac{1}{\tau}P^{\gamma_{n,\varepsilon}}_{ii}}.
\end{align*}
By~\eqref{eq:eigendecomp} and \eqref{eq:eigen} 
\begin{align*}
\del[2]{\frac{ZZ'}{\eta_\varepsilon k_n}}-\frac{1}{\tau}P^{\gamma_n}
=
U\del[1]{\eta_\varepsilon^{-1}\text{diag}(\lambda_{n,1},\hdots,\lambda_{n,n})-\check{D}}U',
\end{align*}
and since
\begin{align*}
\del[1]{\eta_\varepsilon^{-1}\text{diag}(\lambda_{n,1},\hdots,\lambda_{n,n})-\check{D}}_{ii}
=
\frac{\lambda_{n,i}}{\eta_\varepsilon}-\frac{\lambda_{n,i}}{\tau\lambda_{n,i}+\eta_\varepsilon}
=
\frac{\lambda_{n,i}}{\eta_\varepsilon}\del[2]{1-\frac{\eta_\varepsilon}{\tau\lambda_{n,i}+\eta_\varepsilon}}
\geq 
0
	\qquad\text{for } i=1,\hdots n
\end{align*}
it follows that all eigenvalues of~$\del[1]{\frac{ZZ'}{\eta_\varepsilon k_n}}-\frac{1}{\tau}P^{\gamma_{n,\varepsilon}}$ are non-negative such that it is positive semi-definite. Hence, all its diagonal elements are non-negative; that is~$\frac{1}{\tau}P^{\gamma_{n,\varepsilon}}_{ii}\leq \del[1]{\frac{ZZ'}{\eta_\varepsilon k_n}}_{ii}$ for~$i=1,\hdots,n$.
Since the diagonal elements of a symmetric matrix are bounded from above by the largest eigenvalue, it follows  Theorem 5.8 of~\citet{bai2010spectral} that, almost surely,
\begin{align*}
\limsup_{n\to\infty}\max_{1\leq i\leq n}\sbr[2]{\del[2]{\frac{ZZ'}{\eta_\varepsilon k_n}}_{ii}+\frac{1}{\tau}P^{\gamma_{n,\varepsilon}}_{ii}}
\leq
\limsup_{n\to\infty}\max_{1\leq i\leq n}2\del[2]{\frac{ZZ'}{\eta_\varepsilon k_n}}_{ii}
\leq 
\limsup_{n\to\infty}\frac{2\lambda_{n,1}}{\eta_\varepsilon} 
=
\frac{2(1+\tau^{-1/2})^2}{\eta_\varepsilon}.
\end{align*}
Next, recalling that~$\frac{1}{\tau}P^{\gamma_{n,\varepsilon}}_{ii}\leq \del[1]{\frac{ZZ'}{\eta_\varepsilon k_n}}_{ii}$ for~$i=1,\hdots,n$, it follows  by the penultimate display, 
\begin{align*}
	\sum_{i=1}^n\envert[3]{\del[2]{\frac{ZZ'}{\eta_\varepsilon k_n}}_{ii}-\frac{1}{\tau}P^{\gamma_{n,\varepsilon}}_{ii}}
	=
	\sum_{i=1}^n\frac{\lambda_{n,i}}{\eta_\varepsilon}\del[2]{1-\frac{\eta_\varepsilon}{\tau\lambda_{n,i}+\eta_\varepsilon}}
	\leq
	\sum_{i=1}^n\frac{\lambda_{n,i}}{\eta_\varepsilon}\del[2]{1-\frac{1}{1+\varepsilon}},
\end{align*}
almost surely (for~$n$ sufficiently large) implying that
\begin{align*}
	\limsup_{n\to\infty}\frac{1}{n}\sum_{i=1}^n\envert[3]{\del[2]{\frac{ZZ'}{\eta_\varepsilon k_n}}_{ii}-\frac{1}{\tau}P^{\gamma_{n,\varepsilon}}_{ii}}
	\leq
	\frac{1}{\eta_\varepsilon}\del[2]{1-\frac{1}{1+\varepsilon}}\qquad \text{almost surely}
\end{align*}
by part 1.~of Lemma \ref{lem:moments}. Thus, in total, and using part 3.~of Lemma~\ref{lem:moments},
\begin{align}\label{eq:P2}
	\limsup_{n\to\infty}\frac{1}{n}\sum_{i=1}^n\del[1]{P^{\gamma_{n,\varepsilon}}_{ii}}^2
	\leq 
\frac{\tau^2}{\eta_\varepsilon^2}\sbr[2]{1+2(1+\tau^{-1/2})^2\del[2]{1-\frac{1}{1+\varepsilon}}}\qquad \text{almost surely}.	
\end{align}
For any~$\delta>0$ one can ensure that the right-hand side in the previous display does not exceed~$\frac{\tau^2}{\eta_\varepsilon^2}(1+\delta)$ by choosing~$\varepsilon$ sufficiently close to zero. Thus, together~\eqref{eq:P1} and~\eqref{eq:P2} yield~\eqref{eq:auxshow} (with~$\gamma_{n,\varepsilon}$ being the sought after sequence~$\gamma_n$). 
\end{proof}

\subsection{Proof of Theorem \ref{Theorem-RJAR}  \label{Appendix-RJAR}}


We begin by showing that 
\begin{align}\label{eq: RJAR-Known-Variance}
	{RJAR}^*_{\gamma_n} & \coloneqq   \frac{1}{\sqrt{r_n}\sqrt {{\Phi}_{\gamma_n}}}\sum_{i = 1}^n \sum_{j \neq i}P_{ij}^{\gamma_n}\varepsilon_i\varepsilon_j
\end{align}
converges in distribution to a standard normal {for all sequences of $\gamma_n$ that satisfy Assumption \ref{Assumption-High-Level}}.\footnote{We do not directly invoke Lemma 2 in \citet{Hansen:2014ie} because it would require strengthening the assumptions used in this paper.} To this end, define $\mathcal{U}_n := 2\sum_{i = 2}^n\sum_{j = 1}^{i-1}P^{\gamma_n}_{ij}\varepsilon_i\varepsilon_j$, $s_n^2: = \mathbb{E}[\mathcal{U}_n^2]$ and note that by the symmetry of $P^{\gamma_n}$, $RJAR^*_{\gamma_n}$ can be written as
	\begin{equation*}
		RJAR^*_{\gamma_n} = s_n^{-1}\mathcal{U}_n.
	\end{equation*}
We proceed by establishing that $s_n^{-1}\mathcal{U}_n \overset{d}{\to} \mathcal{N}[0, 1]$ as $n\to\infty$. This, in turn, follows from \citet[Corollary 3.1]{Hall:1980uh}) upon verifying that i) for all $\epsilon>0$
	\begin{equation}
		\label{Equation-Hall-Cond-1}
		s_n^{-2}\sum_{i = 2}^n\mathbb{E}[Y^2_{ni}I(|Y_{ni}|>\epsilon s_n)]\to 0\qquad \text{as } n\to\infty,
	\end{equation} 
	where $Y_{ni} = 2\sum_{j = 1}^{i-1}P^{\gamma_n}_{ij}\varepsilon_i\varepsilon_j$, and ii)
	\begin{equation}
		\label{Equation-Hall-Cond-2}
		s_n^{-2}\mathcal{V}_n^2 \overset{p}{\to } 1\qquad\text{as } n\to\infty,	
	\end{equation}
	where $\mathcal{V}_n^2 = \sum_{i = 2}^n\mathbb{E}[Y^2_{ni}|\varepsilon_1, \dots, \varepsilon_{i-1}]$.
Consider first the condition in Equation \eqref{Equation-Hall-Cond-1} and write
\begin{equation*}
	\mathbb{E}[Y^2_{ni}] = 4\sum_{j = 1}^{i-1}\sum_{h = 1}^{i-1}P_{ij}^{\gamma_n} P_{ih}^{\gamma_n} \mathbb{E}[\varepsilon_i^2\varepsilon_j\varepsilon_h] = 4\sum_{j = 1}^{i-1}(P_{ij}^{\gamma_n})^2\mathbb{E}[\varepsilon_i^2]\mathbb{E}[\varepsilon_j^2],
\end{equation*}
so that, upon using that $\mathcal{U}_n=\sum_{i=2}^n Y_{ni}$ and $\mathbb{E}[Y_{ni}Y_{nj}]=0$ for $i\neq j$, one gets
\begin{equation}\label{eq:snaux}
	s_n^2 = \sum_{i = 2}^n\mathbb{E}[Y^2_{ni}] = 4\sum_{i = 2}^n\sum_{j = 1}^{i-1}(P_{ij}^{\gamma_n})^2\mathbb{E}[\varepsilon_i^2]\mathbb{E}[\varepsilon_j^2].
\end{equation}
Furthermore,
\begin{equation*}
	\begin{aligned}
	\mathbb{E}[Y_{ni}^4] =& 16\sum_{j = 1}^{i-1}\sum_{h = 1}^{i-1}\sum_{m = 1}^{i-1}\sum_{l = 1}^{i-1}P_{ij}^{\gamma_n} P_{ih}^{\gamma_n} P_{im}^{\gamma_n} P_{il}^{\gamma_n} \mathbb{E}[\varepsilon_i^4\varepsilon_j\varepsilon_h\varepsilon_m\varepsilon_l]\\
	=& 16\sum_{j = 1}^{i-1}(P_{ij}^{\gamma_n})^4\mathbb{E}[\varepsilon_i^4]\mathbb{E}[\varepsilon_j^4] + 48\sum_{j = 1}^{i-1}\sum_{h\neq j}(P_{ij}^{\gamma_n})^2(P_{ih}^{\gamma_n})^2\mathbb{E}[\varepsilon_i^4]\mathbb{E}[\varepsilon_j^2]\mathbb{E}[\varepsilon_h^2],
	\end{aligned}
\end{equation*}
so that
\begin{equation*}
    \begin{aligned}
	\sum_{i = 2}^n\mathbb{E}[Y_{ni}^4] =&16\sum_{i = 2}^n\sum_{j = 1}^{i-1}(P_{ij}^{\gamma_n})^4\mathbb{E}[\varepsilon_i^4]\mathbb{E}[\varepsilon_j^4]\\
	&+ 48\sum_{i = 2}^n\sum_{j = 1}^{i-1}\sum_{h\neq j}(P_{ij}^{\gamma_n})^2(P_{ih}^{\gamma_n})^2\mathbb{E}[\varepsilon_i^4]\mathbb{E}[\varepsilon_j^2]\mathbb{E}[\varepsilon_h^2].
	\end{aligned}
\end{equation*}
It follows that
\begin{equation}
	\label{Equation-Weird-Condition}	
	\begin{aligned}
		\frac{\sum_{i = 2}^n\mathbb{E}[Y_{ni}^4]}{s^4_n} &\leq C  \frac{\sum_{i = 2}^n\sum_{j = 1}^{i-1}(P_{ij}^{\gamma_n})^4 +\sum_{i = 2}^n\sum_{j = 1}^{i-1}\sum_{h\neq j}(P_{ij}^{\gamma_n})^2(P_{ih}^{\gamma_n})^2}{\left(\sum_{i = 1}^n\sum_{j\neq i}(P^{\gamma_n}_{ij})^2\right)^2}\\
		&\leq  C\frac{\sum_{i = 1}^n\sum_{j = 1}^{n}(P_{ij}^{\gamma_n})^2 +\sum_{i = 1}^n\sum_{j = 1}^{n}(P_{ij}^{\gamma_n})^2\sum_{h = 1}^n(P_{ih}^{\gamma_n})^2}{\left(\sum_{i = 1}^n\sum_{j\neq i}(P^{\gamma_n}_{ij})^2\right)^2}\\
		&=  C\frac{\sum_{i = 1}^n(P^{\gamma_n})^2_{ii} +\sum_{i = 1}^n\left((P^{\gamma_n})^2_{ii}\right)^2}{\left(\sum_{i = 1}^n\sum_{j\neq i}(P^{\gamma_n}_{ij})^2\right)^2}\\
		&\leq  C\frac{\sum_{i = 1}^n(P^{\gamma_n})^2_{ii} +\sum_{i = 1}^n(P^{\gamma_n})^2_{ii}}{\left(\sum_{i = 1}^n\sum_{j\neq i}(P^{\gamma_n}_{ij})^2\right)^2}\\
		&\leq C  \frac{\sum_{i = 1}^n(P^{\gamma_n})^2_{ii}}{r^{2}_n} \\
		&\leq Cr^{1 - 2}_n \\
		&\to 0,
	\end{aligned}
\end{equation}
where the first inequality follows from Assumption \ref{Assumption-MC}. The second inequality follows from Lemma \ref{Lemma-P-Linear-Algebra} \ref{Lemma-One-Bounds}. The third inequality follows from Lemma \ref{Lemma-P-Linear-Algebra} \ref{Lemma-P-Squared-Ineq}. The fourth inequality follows from Equation \eqref{Equation-Assumption-High-Level-Implication}. The fifth inequality follows from Lemma \ref{Lemma-P-Linear-Algebra} \ref{Lemma-Traces-1}. The limit holds by Assumption \ref{Assumption-Bekker} and Assumption \ref{Assumption-High-Level}. The condition in Equation \eqref{Equation-Hall-Cond-1} follows from Equation \eqref{Equation-Weird-Condition}. 
%
%
%
%

In order to verify the convergence in Equation \eqref{Equation-Hall-Cond-2} it suffices to show that
\begin{equation*}
	s_n^{-4}\mathbb{E}\left[(\mathcal{V}_n^2 - s_n^2)^2\right] = \frac{\mathbb{E}[\mathcal{V}_n^4] + s_n^4 - 2s_n^2\mathbb{E}[\mathcal{V}_n^2]}{s_n^4} = \frac{\mathbb{E}[\mathcal{V}_n^4]}{s_n^4} + 1 - 2\frac{\mathbb{E}[\mathcal{V}_n^2]}{s_n^2} \to 0. 
\end{equation*}
Since $\mathcal{V}_n^2 = \sum_{i = 2}^n\mathbb{E}[Y^2_{ni}|\varepsilon_1, \dots, \varepsilon_{i-1}]$ it follows that
\begin{align*}
    \mathbb{E}[\mathcal{V}_n^2]
    =
    \sum_{i=2}^n \mathbb{E}[Y_{ni}^2]
    =
    s_n^2,
\end{align*}
the last equality following from Equation \eqref{eq:snaux}. It remains to be verified that 
\begin{equation}
	\label{Equation-Weird-Condition-2}
	\frac{\mathbb{E}[\mathcal{V}_n^4]}{s_n^4}\to 1.
\end{equation}
To this end, observe that
\begin{equation*}
	\mathcal{V}_n^2 = \sum_{i = 2}^n\mathbb{E}[Y^2_{ni}|\varepsilon_1, \dots, \varepsilon_{i-1}] = 4\sum_{i = 2}^n\sum_{j = 1}^{i-1}\sum_{h = 1}^{i-1}P_{ij}^{\gamma_n} P_{ih}^{\gamma_n} \mathbb{E}[\varepsilon_i^2]\varepsilon_j\varepsilon_h,
\end{equation*}
such that
\begin{equation*}
	\mathcal{V}_n^4 = 16\sum_{i = 2}^n\sum_{j = 2}^n\sum_{h = 1}^{i-1}\sum_{l = 1}^{i-1}\sum_{m = 1}^{j-1}\sum_{w = 1}^{j-1}P^{\gamma_n}_{ih}P^{\gamma_n}_{il}P^{\gamma_n}_{jm}P^{\gamma_n}_{jw}\mathbb{E}[\varepsilon_i^2]\mathbb{E}[\varepsilon_j^2]\varepsilon_h\varepsilon_l\varepsilon_m\varepsilon_w.
\end{equation*}
For $i \leq j$, tedious but straightforward calculations included for completeness yield
\begin{align*}
	&\mathbb{E}\left[\sum_{h = 1}^{i-1}\sum_{l = 1}^{i-1}\sum_{m = 1}^{j-1}\sum_{w = 1}^{j-1}P^{\gamma_n}_{ih}P^{\gamma_n}_{il}P^{\gamma_n}_{jm}P^{\gamma_n}_{jw}\varepsilon_h\varepsilon_l\varepsilon_m\varepsilon_w\right]\\
	 =& \sum_{l = 1}^{i-1}\sum_{h = 1}^{i-1}\sum_{m = 1}^{j-1}\sum_{w\neq m}^{j-1}P^{\gamma_n}_{il}P^{\gamma_n}_{ih}P^{\gamma_n}_{jm}P^{\gamma_n}_{jw}\mathbb{E}[\varepsilon_l\varepsilon_h\varepsilon_m\varepsilon_w] \\
	 & + \sum_{l = 1}^{i-1}\sum_{h\neq l}^{i-1}\sum_{m\neq l}^{j-1}P^{\gamma_n}_{il}P^{\gamma_n}_{ih}(P^{\gamma_n}_{jm})^2\mathbb{E}[\varepsilon_l]\mathbb{E}[\varepsilon_h\varepsilon_m^2]\\
	 & + \sum_{l = 1}^{i-1}\sum_{h\neq l}^{i-1}P^{\gamma_n}_{il}P^{\gamma_n}_{ih}(P^{\gamma_n}_{jl})^2\mathbb{E}[\varepsilon_l^3]\mathbb{E}[\varepsilon_h] + \sum_{l = 1}^{i-1}\sum_{m = 1}^{j-1}(P^{\gamma_n}_{il})^2(P^{\gamma_n}_{jm})^2\mathbb{E}[\varepsilon_l^2\varepsilon_m^2]\\
	 =& \sum_{l = 1}^{i-1}\sum_{h = 1}^{i-1}\sum_{m = 1}^{j-1}\sum_{w\neq m}^{j-1}P^{\gamma_n}_{il}P^{\gamma_n}_{ih}P^{\gamma_n}_{jm}P^{\gamma_n}_{jw}\mathbb{E}[\varepsilon_l\varepsilon_h\varepsilon_m\varepsilon_w] + \sum_{l = 1}^{i-1}\sum_{m = 1}^{j-1}(P^{\gamma_n}_{il})^2(P^{\gamma_n}_{jm})^2\mathbb{E}[\varepsilon_l^2\varepsilon_m^2]\\
	 =& \sum_{l = 1}^{i-1}\sum_{h\neq l}^{i-1}\sum_{m = 1}^{j-1}\sum_{w\neq m}^{j-1}P^{\gamma_n}_{il}P^{\gamma_n}_{ih}P^{\gamma_n}_{jm}P^{\gamma_n}_{jw}\mathbb{E}[\varepsilon_l\varepsilon_h\varepsilon_m\varepsilon_w]\\
	 &+ \sum_{l = 1}^{i-1}\sum_{m\neq l}^{j-1}\sum_{w\neq m}^{j-1}(P^{\gamma_n}_{il})^2P^{\gamma_n}_{jm}P^{\gamma_n}_{jw}\mathbb{E}[\varepsilon_l^2\varepsilon_w]\mathbb{E}[\varepsilon_m]\\
	 &+ \sum_{l = 1}^{i-1}\sum_{w\neq l}^{j-1}(P^{\gamma_n}_{il})^2P^{\gamma_n}_{jl}P^{\gamma_n}_{jw}\mathbb{E}[\varepsilon_l^3]\mathbb{E}[\varepsilon_w] + \sum_{l = 1}^{i-1}\sum_{m = 1}^{j-1}(P^{\gamma_n}_{il})^2(P^{\gamma_n}_{jm})^2\mathbb{E}[\varepsilon_l^2\varepsilon_m^2]\\
	 =& \sum_{l = 1}^{i-1}\sum_{h\neq l}^{i-1}\sum_{m = 1}^{j-1}\sum_{w\neq m}^{j-1}P^{\gamma_n}_{il}P^{\gamma_n}_{ih}P^{\gamma_n}_{jm}P^{\gamma_n}_{jw}\mathbb{E}[\varepsilon_l\varepsilon_h\varepsilon_m\varepsilon_w] + \sum_{l = 1}^{i-1}\sum_{m = 1}^{j-1}(P^{\gamma_n}_{il})^2(P^{\gamma_n}_{jm})^2\mathbb{E}[\varepsilon_l^2\varepsilon_m^2]\\
	 =& \sum_{l = 1}^{i-1}\sum_{h\neq l}^{i-1}\sum_{m\neq h}^{j-1}\sum_{w\neq m}^{j-1}P^{\gamma_n}_{il}P^{\gamma_n}_{ih}P^{\gamma_n}_{jm}P^{\gamma_n}_{jw}\mathbb{E}[\varepsilon_l\varepsilon_h\varepsilon_m\varepsilon_w]\\
	 &+ \sum_{l = 1}^{i-1}\sum_{h\neq l}^{i-1}\sum_{w\notin\{h, l\}}^{j-1}P^{\gamma_n}_{il}P^{\gamma_n}_{ih}P^{\gamma_n}_{jh}P^{\gamma_n}_{jw}\mathbb{E}[\varepsilon_l]\mathbb{E}[\varepsilon_w]\mathbb{E}[\varepsilon_h^2]\\
	 & + \sum_{l = 1}^{i-1}\sum_{h\neq l}^{i-1}P^{\gamma_n}_{il}P^{\gamma_n}_{ih}P^{\gamma_n}_{jh}P^{\gamma_n}_{jl}\mathbb{E}[\varepsilon_l^2]\mathbb{E}[\varepsilon_h^2]+ \sum_{l = 1}^{i-1}\sum_{m = 1}^{j-1}(P^{\gamma_n}_{il})^2(P^{\gamma_n}_{jm})^2\mathbb{E}[\varepsilon_l^2\varepsilon_m^2]\\
	 =& \sum_{l = 1}^{i-1}\sum_{h\neq l}^{i-1}\sum_{m\neq h}^{j-1}\sum_{w\neq m}^{j-1}P^{\gamma_n}_{il}P^{\gamma_n}_{ih}P^{\gamma_n}_{jm}P^{\gamma_n}_{jw}\mathbb{E}[\varepsilon_l\varepsilon_h\varepsilon_m\varepsilon_w] + \sum_{l = 1}^{i-1}\sum_{h\neq l}^{i-1}P^{\gamma_n}_{il}P^{\gamma_n}_{ih}P^{\gamma_n}_{jh}P^{\gamma_n}_{jl}\mathbb{E}[\varepsilon_l^2]\mathbb{E}[\varepsilon_h^2]\\
	 &+ \sum_{l = 1}^{i-1}\sum_{m = 1}^{j-1}(P^{\gamma_n}_{il})^2(P^{\gamma_n}_{jm})^2\mathbb{E}[\varepsilon_l^2\varepsilon_m^2]\\	 
	 =& \sum_{l = 1}^{i-1}\sum_{h\neq l}^{i-1}\sum_{m\neq h}^{j-1}\sum_{w \notin\{m, h\}}^{j-1}P^{\gamma_n}_{il}P^{\gamma_n}_{ih}P^{\gamma_n}_{jm}P^{\gamma_n}_{jw}\mathbb{E}[\varepsilon_l\varepsilon_h\varepsilon_m\varepsilon_w]\\
	 &+ \sum_{l = 1}^{i-1}\sum_{h\neq l}^{i-1}\sum_{m\notin\{h, l\}}^{j-1}P^{\gamma_n}_{il}P^{\gamma_n}_{ih}P^{\gamma_n}_{jm}P^{\gamma_n}_{jh}\mathbb{E}[\varepsilon_l]\mathbb{E}[\varepsilon_m]\mathbb{E}[\varepsilon_h^2]\\
	 &+ 2\sum_{l = 1}^{i-1}\sum_{h\neq l}^{i-1}P^{\gamma_n}_{il}P^{\gamma_n}_{ih}P^{\gamma_n}_{jl}P^{\gamma_n}_{jh}\mathbb{E}[\varepsilon_l^2]\mathbb{E}[\varepsilon_h^2] + \sum_{l = 1}^{i-1}\sum_{m = 1}^{j-1}(P^{\gamma_n}_{il})^2(P^{\gamma_n}_{jm})^2\mathbb{E}[\varepsilon_l^2\varepsilon_m^2]\\
	 =& \sum_{l = 1}^{i-1}\sum_{h\neq l}^{i-1}\sum_{m\neq h}^{j-1}\sum_{w \notin\{m, h\}}^{j-1}P^{\gamma_n}_{il}P^{\gamma_n}_{ih}P^{\gamma_n}_{jm}P^{\gamma_n}_{jw}\mathbb{E}[\varepsilon_l\varepsilon_h\varepsilon_m\varepsilon_w] \\
	 &+ 2\sum_{l = 1}^{i-1}\sum_{h\neq l}^{i-1}P^{\gamma_n}_{il}P^{\gamma_n}_{ih}P^{\gamma_n}_{jl}P^{\gamma_n}_{jh}\mathbb{E}[\varepsilon_l^2]\mathbb{E}[\varepsilon_h^2] + \sum_{l = 1}^{i-1}\sum_{m = 1}^{j-1}(P^{\gamma_n}_{il})^2(P^{\gamma_n}_{jm})^2\mathbb{E}[\varepsilon_l^2\varepsilon_m^2]\\
	 =& \sum_{l = 1}^{i-1}\sum_{h\neq l}^{i-1}\sum_{m\neq h}^{j-1}\sum_{w\notin\{m, h, l\}}^{j-1}P^{\gamma_n}_{il}P^{\gamma_n}_{ih}P^{\gamma_n}_{jm}P^{\gamma_n}_{jw}\mathbb{E}[\varepsilon_l\varepsilon_h\varepsilon_m]\mathbb{E}[\varepsilon_w]\\ &+\sum_{l=1}^{i-1}\sum_{h\neq l}^{i-1}\sum_{m\notin\{h,l\} }^{j-1}P^{\gamma_n}_{il}P^{\gamma_n}_{ih}P^{\gamma_n}_{jm}P^{\gamma_n}_{jl}\mathbb{E}[\varepsilon_l^2]\mathbb{E}[\varepsilon_h]\mathbb{E}[\varepsilon_m]\\
	 &+\sum_{l=1}^{i-1}\sum_{h\neq l}^{i-1}P^{\gamma_n}_{il}P^{\gamma_n}_{ih}(P^{\gamma_n}_{jl})^2\mathbb{E}[\varepsilon_l^3]\mathbb{E}[\varepsilon_h]
	 + 2\sum_{l = 1}^{i-1}\sum_{h\neq l}^{i-1}P^{\gamma_n}_{il}P^{\gamma_n}_{ih}P^{\gamma_n}_{jl}P^{\gamma_n}_{jh}\mathbb{E}[\varepsilon_l^2]\mathbb{E}[\varepsilon_h^2] \\
	 &+ \sum_{l = 1}^{i-1}\sum_{m = 1}^{j-1}(P^{\gamma_n}_{il})^2(P^{\gamma_n}_{jm})^2\mathbb{E}[\varepsilon_l^2\varepsilon_m^2]\\
	 =& 2\sum_{l = 1}^{i-1}\sum_{h\neq l}^{i-1}P^{\gamma_n}_{il}P^{\gamma_n}_{ih}P^{\gamma_n}_{jl}P^{\gamma_n}_{jh}\mathbb{E}[\varepsilon_l^2]\mathbb{E}[\varepsilon_h^2] + \sum_{l = 1}^{i-1}\sum_{m = 1}^{j-1}(P^{\gamma_n}_{il})^2(P^{\gamma_n}_{jm})^2\mathbb{E}[\varepsilon_l^2\varepsilon_m^2]\\
	 =& 2\sum_{l = 1}^{i-1}\sum_{h\neq l}^{i-1}P^{\gamma_n}_{il}P^{\gamma_n}_{ih}P^{\gamma_n}_{jl}P^{\gamma_n}_{jh}\mathbb{E}[\varepsilon_l^2]\mathbb{E}[\varepsilon_h^2] + \sum_{l = 1}^{i-1}\sum_{m\neq l}^{j-1}(P^{\gamma_n}_{il})^2(P^{\gamma_n}_{jm})^2\mathbb{E}[\varepsilon_l^2]\mathbb{E}[\varepsilon_m^2] \\
	 &+ \sum_{l = 1}^{i-1}(P^{\gamma_n}_{il})^2(P^{\gamma_n}_{jl})^2\mathbb{E}[\varepsilon_l^4]\\
	 =& 2\sum_{h = 1}^{i-1}\sum_{l\neq h}P^{\gamma_n}_{ih}P^{\gamma_n}_{il}P^{\gamma_n}_{jh}P^{\gamma_n}_{jl}\mathbb{E}[\varepsilon_h^2]\mathbb{E}[\varepsilon_l^2]\\
	 &+ \sum_{h = 1}^{i-1}(P^{\gamma_n}_{ih})^2(P^{\gamma_n}_{jh})^2\text{Var}[\varepsilon_h^2] + \sum_{h = 1}^{i-1}\sum_{l = 1}^{j-1}(P^{\gamma_n}_{ih})^2(P^{\gamma_n}_{jl})^2\mathbb{E}[\varepsilon_h^2]\mathbb{E}[\varepsilon_l^2],
\end{align*}
where $\text{Var}[\varepsilon_h^2] \coloneqq \mathbb{E}[\varepsilon_h^4] - \mathbb{E}[\varepsilon_h^2]\mathbb{E}[\varepsilon_h^2]$.
%

Defining $q \coloneqq \text{min}\{i,j\}$, the above display in turn implies
\begin{equation*}
	\begin{aligned}
	\mathbb{E}[\mathcal{V}_n^4] =& 32\sum_{i = 2}^n\mathbb{E}[\varepsilon_i^2]\sum_{j = 2}^n\mathbb{E}[\varepsilon_j^2]\sum_{h = 1}^{q-1}\sum_{l\neq h}P^{\gamma_n}_{ih}P^{\gamma_n}_{il}P^{\gamma_n}_{jh}P^{\gamma_n}_{jl}\mathbb{E}[\varepsilon_h^2]\mathbb{E}[\varepsilon_l^2]\\ 
	& + 16\sum_{i = 2}^n\mathbb{E}[\varepsilon_i^2]\sum_{j = 2}^n\mathbb{E}[\varepsilon_j^2]\sum_{h = 1}^{q-1}(P^{\gamma_n}_{ih})^2(P^{\gamma_n}_{jh})^2\text{Var}[\varepsilon_h^2]\\
	& + 16\sum_{i = 2}^n\mathbb{E}[\varepsilon_i^2]\sum_{j = 2}^n\mathbb{E}[\varepsilon_j^2]\sum_{h = 1}^{i-1}\sum_{l = 1}^{j-1}(P^{\gamma_n}_{ih})^2(P^{\gamma_n}_{jl})^2\mathbb{E}[\varepsilon_h^2]\mathbb{E}[\varepsilon_l^2]\\
	\equiv& \mathfrak{A}_n + \mathfrak{B}_n + s_n^4.
	\end{aligned}
\end{equation*}
In order to establish the convergence in Equation \eqref{Equation-Weird-Condition-2}, it is sufficient to show that $\frac{\mathfrak{A}_n + \mathfrak{B}_n}{s_n^4} \to 0$. To this end, note that
\begin{equation}
    \label{Equation-Frak-A}
	\begin{aligned}
	\mathfrak{A}_n &\leq C\sum_{i = 1}^n\sum_{j = 1}^n\sum_{h = 1}^{q-1}\sum_{l\neq h}P^{\gamma_n}_{ih}P^{\gamma_n}_{il}P^{\gamma_n}_{jh}P^{\gamma_n}_{jl}\\
	& = C\left(\sum_{i = 1}^n\sum_{j \neq i}\sum_{h = 1}^{q-1}\sum_{l\neq h}P^{\gamma_n}_{ih}P^{\gamma_n}_{il}P^{\gamma_n}_{jh}P^{\gamma_n}_{jl} + \sum_{i = 1}^n\sum_{h = 1}^{q-1}\sum_{l\neq h}(P^{\gamma_n}_{ih})^2(P^{\gamma_n}_{il})^2\right)\\
	&\leq C\left(\sum_{i = 1}^n\sum_{j < i}\sum_{h < j}\sum_{l\neq h}P^{\gamma_n}_{ih}P^{\gamma_n}_{il}P^{\gamma_n}_{jh}  P^{\gamma_n}_{jl} + r_n\right)\\
	& \leq  C\left(\sum_{i = 1}^n\sum_{j < i}\sum_{h < j}\sum_{l<h}P^{\gamma_n}_{ih}P^{\gamma_n}_{il}P^{\gamma_n}_{jh}  P^{\gamma_n}_{jl} + r_n\right)\\
	& \leq C\left(\left|\sum_{i = 1}^n\sum_{j < i}\sum_{h < j}\sum_{l<h}P^{\gamma_n}_{ih}P^{\gamma_n}_{il}P^{\gamma_n}_{jh}  P^{\gamma_n}_{jl}\right| + r_n \right)\\
	& \leq Cr_n
	\end{aligned}
\end{equation}
The first inequality follows from Assumption \ref{Assumption-MC}. The second inequality follows from Lemma \ref{Lemma-P-Linear-Algebra} \ref{Lemma-B1}. The third inequality follows from the symmetry of $P^{\gamma_n}$. The fifth inequality follows from Lemma \ref{Lemma-B2}. Next,
\begin{equation}
    \label{Equation-Frak-B}
    \begin{aligned}
        \mathfrak{B}_n &\leq C\sum_{i = 1}^n\sum_{j = 1}^n\sum_{h = 1}^{q-1}(P^{\gamma_n}_{ih})^2(P^{\gamma_n}_{jh})^2
                     \leq Cr_n,
    \end{aligned}
\end{equation}
the first inequality following from Assumption \ref{Assumption-MC} and the second from Lemma \ref{Lemma-P-Linear-Algebra} \ref{Lemma-B1}. Thus,
\begin{equation}
    \label{Equation-Weird-Condition-2-Limit}
    \begin{aligned}
	\frac{\mathfrak{A}_n + \mathfrak{B}_n}{s^4_n}  &\leq C \frac{r_n}{\left(\sum_{i = 1}^n\sum_{j\neq i}(P^{\gamma_n}_{ij})^2\right)^2}                   \leq 
	Cr^{1 - 2}_n
	           \to 0.
	\end{aligned}
\end{equation}
 where the second inequality follows from Equation \eqref{Equation-Assumption-High-Level-Implication} and the convergence from Assumption \ref{Assumption-Bekker} and Assumption \ref{Assumption-High-Level}. Equation \eqref{Equation-Weird-Condition-2-Limit} verifies the condition in Equation \eqref{Equation-Weird-Condition-2}, which in turn implies that the condition condition in Equation \eqref{Equation-Hall-Cond-2} holds.

Having verified the conditions in Equation \eqref{Equation-Hall-Cond-1} and Equation \eqref{Equation-Hall-Cond-2} we conclude that $RJAR^*_{\gamma_n} \overset{d}{\to}\mathcal{N}[0, 1]$. By Lemma \ref{Lemma-Variance-Cons}  $|\hat{\Phi}_{\gamma_n} - \Phi_{\gamma_n}|\overset{p}{\to} 0$ and hence also $\hat{\Phi}_{\gamma_n}/\Phi_{\gamma_n} \overset{p}{\to} 1$  under Assumptions \ref{Assumption-MC} and \ref{Assumption-High-Level}. Hence, under the null hypothesis, the continuous mapping theorem implies that $RJAR_{\gamma_n}(\beta_0) \overset{d}{\to} \mathcal{N}[0, 1]$.

\qed

\section{Simulation results with heteroskedastic errors \label{Appendix-Hetero}}
We extend the simulations in the main text to allow for heteroskedasticity in the error terms. As in the main text, the DGP is given by
\begin{subequations}
	\begin{align}
			y_i & = X_i\*\beta + \varepsilon_i\\
	        X_i & = Z_{i}'\*\pi + v_i, 
	\end{align}
\end{subequations}
for $i = 1, \dots, n = 100$. The IVs $Z_i$ are independent and identically Gaussian with mean 0 and $\text{Var}\left[Z_{il} \right] = 0.3 $ and $\text{Corr}\left[ Z_{il}, Z_{im}\right] = 0.5^{|l-m|}$. As in \citet{MavroeidisET}, the error terms are given by
\begin{equation*}
\begin{aligned}
    \varepsilon_i &= (\sigma_\varepsilon + ||Q_\varepsilon\tilde{Z}_i||_2)\eta_{1i}\\
    v_i &= (\sigma_v + ||Q_v\tilde{Z}_i||_2)\eta_{2i},
\end{aligned}
\end{equation*}
where $\tilde{Z}_i$ is the $4\times 1$ vector consisting of the first 4 elements of $Z_i$, $\sigma_\varepsilon = \sqrt 2$, $\sigma_v = 1$, $Q_\varepsilon$ and $Q_v$ are $4\times 4$ matrices controlling the degree of heteroskedasticity in the first and second stage, respectively,\footnote{The simulation setup in the main text corresponds to the case where  $Q_\varepsilon$ and $Q_v$ are set equal to the $4\times 4$ matrix of zeros.} and
\begin{equation*}
    \begin{aligned}
    [\eta_{1i}, \eta_{2i}]'
 &\sim \mathcal N \left[\left[\begin{smallmatrix}
     0\\
     0
 \end{smallmatrix}\right],   
\left[\begin{smallmatrix}
1   & 0.6 \\ 
0.6 & 1 
\end{smallmatrix}\right]
\right].
\end{aligned}
\end{equation*}
The~$[\eta_{1i}, \eta_{2i}]$ are mutually independent as well as independent of the~$\tilde{Z}_i$.
We set
\begin{equation}\label{eq:Qs}
    \begin{aligned}
        Q_\varepsilon &= \left[\begin{smallmatrix}
            2 & 0.8 & 0.6 & 0.4\\
            0.3  & 1.5 & 0.9 & 0.3\\
            0.8  & 0.6 & 1.9 & 0.2\\
            0.4  & 0.3 & 0.2 & 1.1
        \end{smallmatrix}\right] \quad \text{and} \quad Q_v = I_4,
    \end{aligned}
\end{equation}
following \citet{MavroeidisET}.\footnote{Our matrix $Q_\varepsilon$ corresponds to $Q_\varepsilon$ in \citet[Equation (4.9)]{MavroeidisET} with $\varrho$ (in their notation) set equal to 0.1.}

$\pi = \zeta\*\kappa$, where $\kappa$ is a vector of zeros and ones that varies with the type of DGP considered (sparse or dense, as modelled below), and $\zeta$ is some scalar that ensures that for a given value of $\mu^2$, the following relationship is satisfied:
\begin{equation*}
	\mu^2 = \frac{n\*\pi'\mathbb{E}\left[Z_{i}\*Z_{i}'\right]\*\pi}{\sigma_v^2}.
\end{equation*}

We note that in this heteroskedastic context, $\mu^2$ is only used to ensure that the coefficients on the IVs are the same as in the homoskedastic DGP considered in Section \ref{Section-RJAR-Simulations} in the main text, and does not measure identification strength.

As in the main text, we consider both a sparse first stage and a dense first stage. Sparsity in the first stage is modelled by setting $\kappa = [\iota_5', 0_{k_n-5}' ]'$, where $\iota_q$ is a $q\times 1$ vector of ones, and $0_q$ is a $q\times 1$ vector of zeros. Density in the first stage is modelled by setting $\kappa = [\iota_{0.4k_n}', 0_{0.6k_n}']'$. We consider $k_n = 30, 90, 190$. In the context of Assumption \ref{Assumption-High-Level}, we search over values greater than 1 when choosing~$\gamma^*_n$ in case~$r_n<k_n$. The variance estimator of MS occasionally yields a negative value. These cases are conservatively interpreted as a failure to reject the null hypothesis. As recommended by BCCH, $c_{BCCH} = 1.1$. As in the simulation section in CT, we set $\theta = 0.05$. The number of Monte Carlo replications is 10,000.

\subsection{Size}

Figure \ref{Figure-Size-Hetero} shows the simulation results with a sparse first stage for tests of size 0.01 to 0.99, that is the rejection frequency under $H_0:\beta_0 = 1$. The results are similar to the ones of the homoskedastic DGP considered in Section \ref{Section-RJAR-Simulations} in the main text, although the AR test of CT exhibts some mild size distortion.

\begin{figure}[H]
    \centering
    \begin{subfigure}[b]{0.5\textwidth}
        \includegraphics[width=\textwidth]{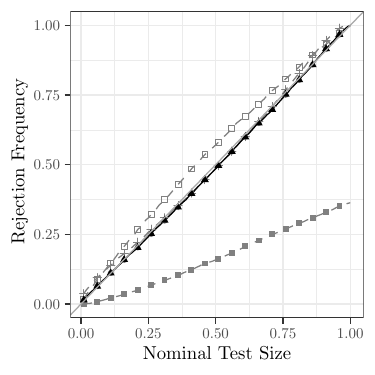}
        \caption{$k_n = 30$}
    \end{subfigure}%
    \begin{subfigure}[b]{0.5\textwidth}
        \includegraphics[width=\textwidth]{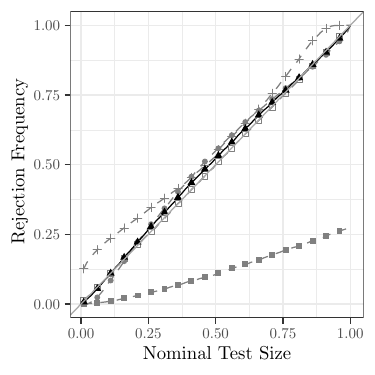}
        \caption{$k_n = 90$}
    \end{subfigure}
  \begin{subfigure}[b]{0.5\textwidth}
        \includegraphics[width=\textwidth]{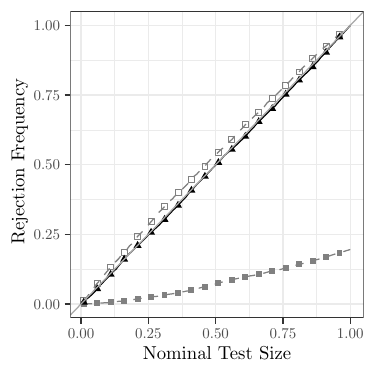}
        \caption{$k_n = 190$}
    \end{subfigure}
    
    \begin{subfigure}[c]{\textwidth}
    	\centering
    	\includegraphics{TablesFigures/PowerRevision/legend_ld.pdf}
    \end{subfigure}
    \caption{PP Plots for Sparse IVs, heteroskedastic errors, $\beta = 1$, $\mu^2 = 0$, $H_0: \beta_0 = 1$.}
    \label{Figure-Size-Hetero}
\end{figure}

\subsection{Power}

Figures \ref{Figure-Power-30-Hetero}--\ref{Figure-Power-190-Hetero} show the power of the tests when the number of IVs and the sparsity pattern of the first stage is varied. It is still the case that $H_0:\beta_0 = 1$. We report the results for $\mu^2 = 180$, as the power for $\mu^2 = 30$ is very low for all tests. The results confirm that also in this heteroskedastic setup, the RJAR test is as powerful as existing methods whenever these are applicable, and sometimes much more powerful.

\begin{figure}[H]
    \begin{subfigure}[b]{0.5\textwidth}
        \includegraphics[width=\textwidth]{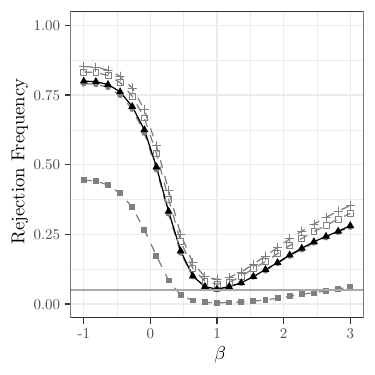}
        \caption{Sparse IVs}
    \end{subfigure}%
    \begin{subfigure}[b]{0.5\textwidth}
        \includegraphics[width=\textwidth]{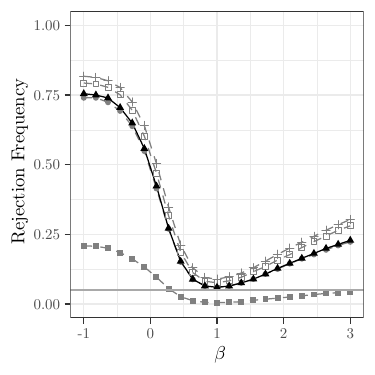}
        \caption{Dense IVs}
    \end{subfigure}
    \begin{subfigure}[c]{\textwidth}
    	\centering
    	\includegraphics{TablesFigures/PowerRevision/legend_ld.pdf}
    \end{subfigure}
    \caption{Power curves for 30 IVs and heteroskedastic errors. Nominal test size of 5\% indicated by the grey horizontal line. $\mu^2 = 180$, $H_0: \beta_0 = 1$.}
    \label{Figure-Power-30-Hetero}
\end{figure}

\begin{figure}[H]
    \begin{subfigure}[b]{0.5\textwidth}
        \includegraphics[width=\textwidth]{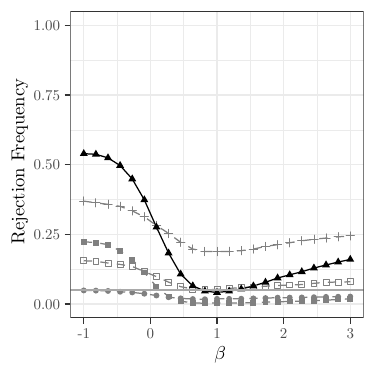}
        \caption{Sparse IVs}
    \end{subfigure}%
    \begin{subfigure}[b]{0.5\textwidth}
        \includegraphics[width=\textwidth]{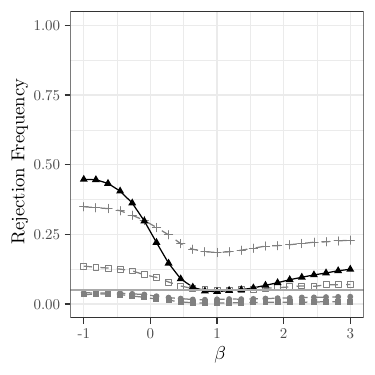}
        \caption{Dense IVs}
    \end{subfigure}
    \begin{subfigure}[c]{\textwidth}
    	\centering
    	\includegraphics{TablesFigures/PowerRevision/legend_ld.pdf}
    \end{subfigure}
    \caption{Power curves for 90 IVs and heteroskedastic error terms. Nominal test size of 5\% indicated by the grey horizontal line. $\mu^2 = 180$, $H_0: \beta_0 = 1$. }
    \label{Figure-Power-90-Hetero}
\end{figure}

\begin{figure}[H]
    \begin{subfigure}[b]{0.5\textwidth}
        \includegraphics[width=\textwidth]{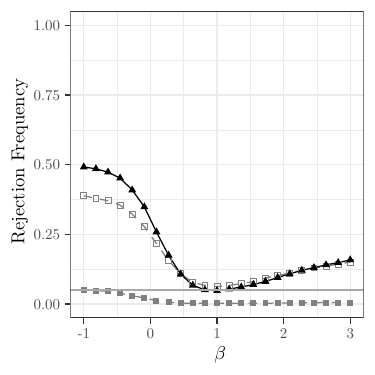}
        \caption{Sparse IVs}
    \end{subfigure}%
    \begin{subfigure}[b]{0.5\textwidth}
        \includegraphics[width=\textwidth]{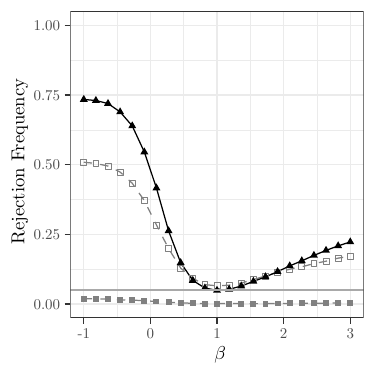}
        \caption{Dense IVs}
    \end{subfigure}
    \begin{subfigure}[c]{\textwidth}
    	\centering
    	\includegraphics{TablesFigures/PowerRevision/legend_hd.pdf}
    \end{subfigure}
    \caption{Power curves for 190 IVs and heteroskedastic error terms. Nominal test size of 5\% indicated by the grey horizontal line. $\mu^2 = 180$, $H_0: \beta_0 = 1$.}
    \label{Figure-Power-190-Hetero}
\end{figure}

\section{Simulation results for projection inference \label{Appendix-Projection}}

To study projection inference, we use the simulation setup for two endogenous variables of \citet{MavroeidisET}.

 The DGP is given by
\begin{equation*}
	\begin{aligned}
			y_i    & = \beta_1 X_{i1} + \beta_2 X_{i2} + \varepsilon_i\\
   	        X_{i1} & = Z_{i}'\*\pi_1 + v_{i1} \\
	        X_{i2} & = Z_{i}'\*\pi_2 + v_{i2},
	\end{aligned}
\end{equation*}
for $i = 1, \dots, n = 250$. Inference is conducted on $\beta_1$ by projecting out $X_{i2}$. $\beta_2 = 0$, $\pi_1 = \tilde{\iota}_{k_n}\zeta_1/(4n)^{1/2}$, $\pi_2 = \iota_{k_n}\zeta_2/(4n)^{1/2}$, $\tilde{\iota}_{k_n} = [\iota_{k_n/2}', -\iota_{k_n/2}']'$, $\zeta_1, \zeta_2 \in\{4, 40\}$ (weak and strong IVs), and $Z_i\overset{i.i.d.}{\sim}\mathcal{N}[0, I_{k_n}]$. The error terms are given by
\begin{equation*}
\begin{aligned}
    \varepsilon_i &= (||Q_\varepsilon\tilde{Z}_i||_F)\eta_{1i}\\
    v_{1i} &= (||Q_v\tilde{Z}_i||_F)\eta_{2i}\\
    v_{2i} &= (||Q_v\tilde{Z}_i||_F)\eta_{3i},\\
\end{aligned}
\end{equation*}
where $\tilde{Z}_i$ is the $4\times 1$ vector consisting of the first 4 elements of $Z_i$, the $4\times 4$ matrices $Q_\varepsilon$, $Q_v$ control the type and degree of heteroskedasticity and are set as in \eqref{eq:Qs}, and
\begin{equation*}
    \begin{aligned}
    [\eta_{1i}, \eta_{2i}, \eta_{3i}]'
 &\sim \mathcal N \left[\left[\begin{smallmatrix}
     0\\
     0\\
     0
 \end{smallmatrix}\right],   
\left[\begin{smallmatrix}
1   & 0.8 & 0.8 \\ 
0.8 & 1   & 0.3\\
0.8 & 0.3 & 1
\end{smallmatrix}\right]
\right].
\end{aligned}
\end{equation*}

We consider $k_n = 30$ (the number of IVs for which all approaches have correct size in the simulations reported in the main text) for three combinations of $\pi_1$ and $\pi_2$ (strong, strong), (weak, strong), (weak, weak). The variance estimator of MS occasionally yields a negative value. These cases are conservatively interpreted as a failure to reject the null hypothesis. $c_{BCCH}$, and $\theta$ are set as in the main text. The number of Monte Carlo replications is 10,000.

Figure \ref{Figure-Proj-Hetero} shows the power of the projection tests for $H_0:\beta_{1,0} = 0$. The tests have similar power except for the Sup Score test of BCCH, which has very low power. This is likely due to the dense parameterisation of the first-stage projection vectors in the DGP of \citet{MavroeidisET}. The AR test of CT exhibits some size distortions. This is likely due to the heteroskedasticity in the error terms.

\begin{figure}[H]
    \centering
    \begin{subfigure}[b]{0.5\textwidth}
        \includegraphics[width=\textwidth]{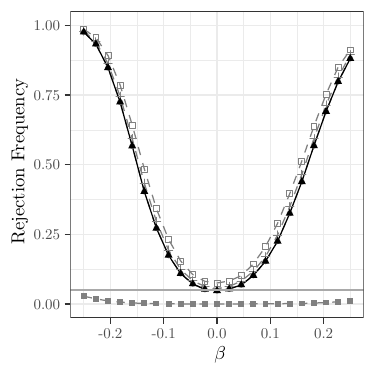}
        \caption{Strong ($\xi_1 = 40$, $\xi_2 = 40$)}
    \end{subfigure}%
    \begin{subfigure}[b]{0.5\textwidth}
        \includegraphics[width=\textwidth]{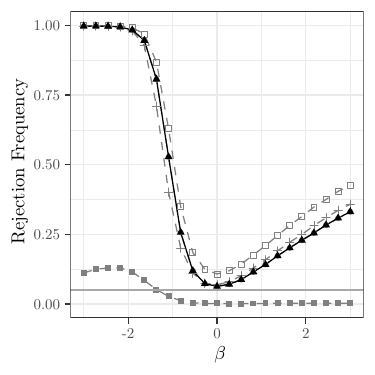}
        \caption{Mixed strength ($\xi_1 = 4$, $\xi_2 = 40$)}
    \end{subfigure}
    \begin{subfigure}[b]{0.5\textwidth}
    \centering
        \includegraphics[width=\textwidth]{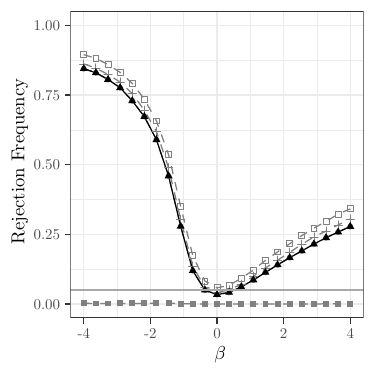}
        \caption{Weak ($\xi_1 = 4$, $\xi_2 = 4$)}
    \end{subfigure}
   
    \begin{subfigure}[c]{\textwidth}
    	\centering
   	\includegraphics{TablesFigures/PowerRevision/legend_ld.pdf}
   \end{subfigure}
   \caption{Power curves for projection inference with heteroskedastic error terms. Nominal test size of 5\% indicated by the grey horizontal line. $H_0: \beta_{1,0} = 0$. }
    \label{Figure-Proj-Hetero}
\end{figure}

\section{Simulation results with uncorrelated IVs \label{Appendix-Simulations-Uncorrelated-IVs}}

The same DGP as in the main text is used, except that $Z_i\overset{i.i.d.}{\sim}\mathcal{N}[0, I_{k_n}]$.

\begin{figure}[H]
    \begin{subfigure}[b]{0.5\textwidth}
        \includegraphics[width=\textwidth,height=0.9\textwidth]{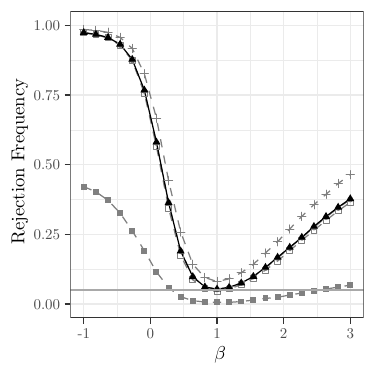}
        \caption{$\mu^2 = 30$, sparse IVs}
    \end{subfigure}%
    \begin{subfigure}[b]{0.5\textwidth}
        \includegraphics[width=\textwidth,height=0.9\textwidth]{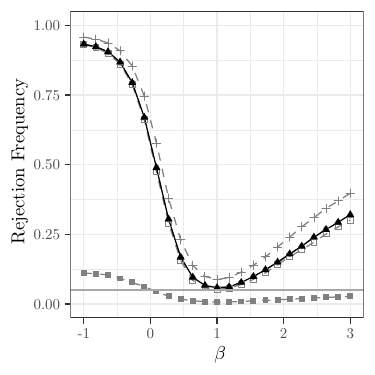}
        \caption{$\mu^2 = 30$, dense IVs}
    \end{subfigure}
    \begin{subfigure}[b]{0.5\textwidth}
        \includegraphics[width=\textwidth,height=0.9\textwidth]{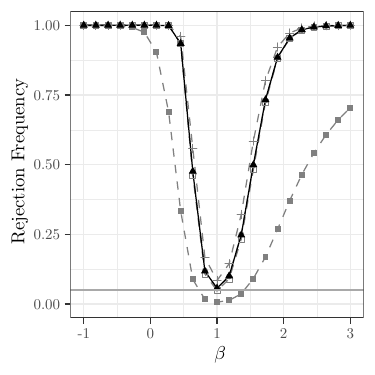}
        \caption{$\mu^2 = 180$, sparse IVs}
    \end{subfigure}%
    \begin{subfigure}[b]{0.5\textwidth}
        \includegraphics[width=\textwidth,height=0.9\textwidth]{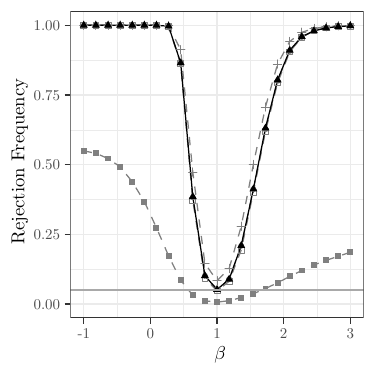}
        \caption{$\mu^2 = 180$, dense IVs}
    \end{subfigure}
    \begin{subfigure}[c]{\textwidth}
    	\centering
    	\includegraphics{TablesFigures/PowerRevision/legend_ld.pdf}
    \end{subfigure}
    \caption{Power curves for 30 independent IVs. Nominal test size of 5\% indicated by the grey horizontal line. $H_0: \beta_0 = 1$.}
    \label{Figure-Power-30-Homo-Indie}
\end{figure}

\begin{figure}[H]
    \begin{subfigure}[b]{0.5\textwidth}
        \includegraphics[width=\textwidth]{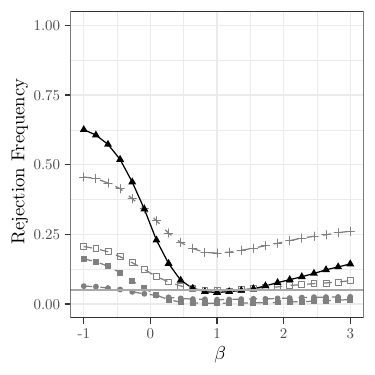}
        \caption{$\mu^2 = 30$, sparse IVs}
    \end{subfigure}%
    \begin{subfigure}[b]{0.5\textwidth}
        \includegraphics[width=\textwidth]{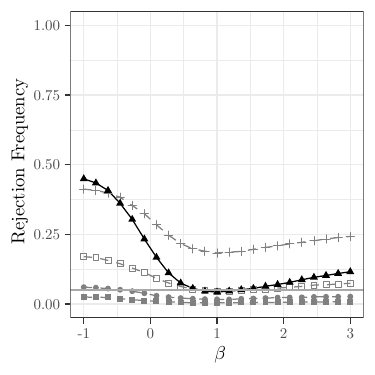}
        \caption{$\mu^2 = 30$, dense IVs}
    \end{subfigure}
    \begin{subfigure}[b]{0.5\textwidth}
        \includegraphics[width=\textwidth]{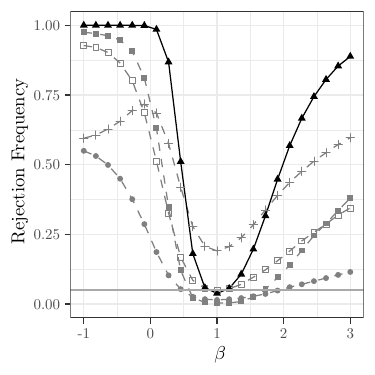}
        \caption{$\mu^2 = 180$, sparse IVs}
    \end{subfigure}%
    \begin{subfigure}[b]{0.5\textwidth}
        \includegraphics[width=\textwidth]{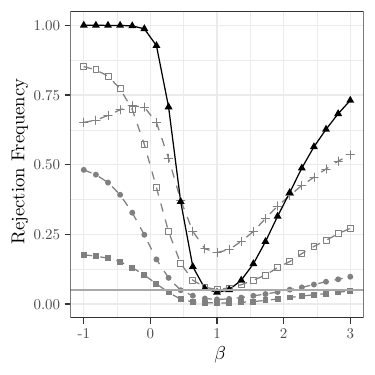}
        \caption{$\mu^2 = 180$, dense IVs}
    \end{subfigure}
    \begin{subfigure}[c]{\textwidth}
    	\centering
    	\includegraphics{TablesFigures/PowerRevision/legend_ld.pdf}
    \end{subfigure}
    \caption{Power curves for 90 independent IVs. Nominal test size of 5\% indicated by the grey horizontal line. $H_0: \beta_0 = 1$. }
    \label{Figure-Power-90-Homo-Indie}
\end{figure}

\begin{figure}[H]
    \begin{subfigure}[b]{0.5\textwidth}
        \includegraphics[width=\textwidth]{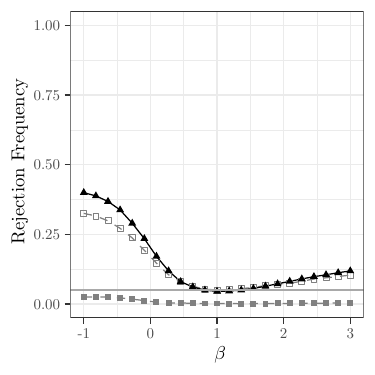}
        \caption{$\mu^2 = 30$, sparse IVs}
    \end{subfigure}%
    \begin{subfigure}[b]{0.5\textwidth}
        \includegraphics[width=\textwidth]{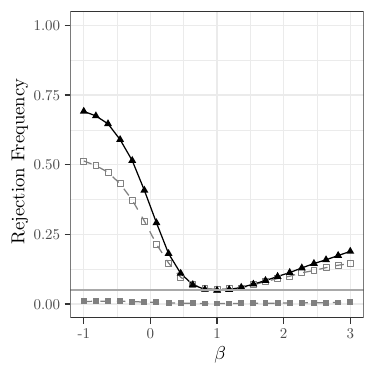}
        \caption{$\mu^2 = 30$, dense IVs}
    \end{subfigure}
    \begin{subfigure}[b]{0.5\textwidth}
        \includegraphics[width=\textwidth]{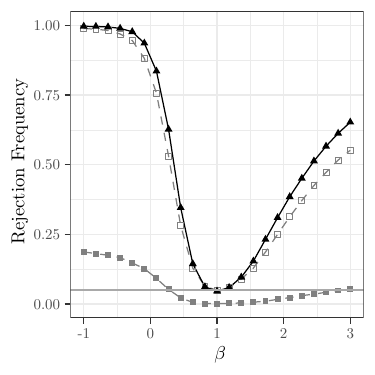}
        \caption{$\mu^2 = 180$, sparse IVs}
    \end{subfigure}%
    \begin{subfigure}[b]{0.5\textwidth}
        \includegraphics[width=\textwidth]{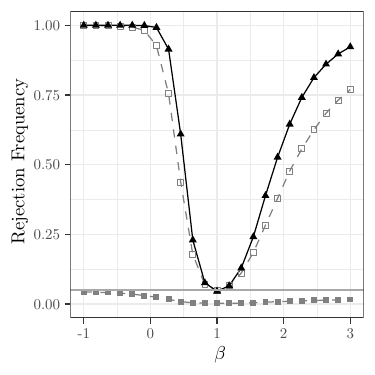}
        \caption{$\mu^2 = 180$, dense IVs}
    \end{subfigure}
    \begin{subfigure}[c]{\textwidth}
    	\centering
    	\includegraphics{TablesFigures/PowerRevision/legend_hd.pdf}
    \end{subfigure}
    \caption{Power curves for 190 independent IVs. Nominal test size of 5\% indicated by the grey horizontal line. $H_0: \beta_0 = 1$.}
    \label{Figure-Power-190-Homo-Indie}
\end{figure}

\section{Size of AR test of MS with $n = 1000$ and $k_n = 900$\label{Appendix-Simulations-1000}}

The same DGP as in the main text is used, except that $k_n = 900$ and $n = 1,000$.

\begin{figure}[H]
    \centering
        \includegraphics[width=0.5\textwidth]{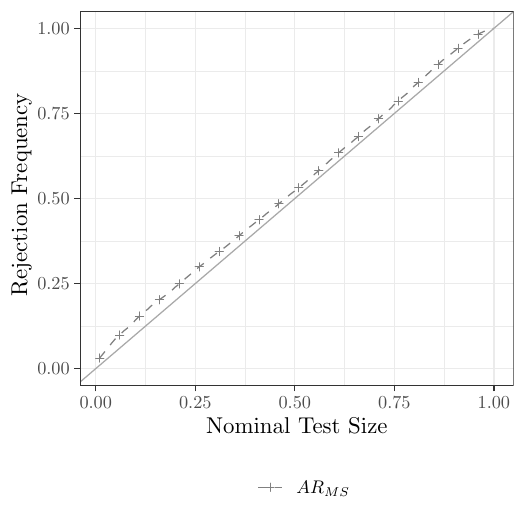}
    \caption{PP Plots for AR test of MS with sparse IVs, homoskedastic errors, $\beta = 1$, $\mu^2 = 0$, $H_0: \beta_0 = 1$, $n = 1000$, $k_n = 900$.}
    \label{Figure-Size}
\end{figure}


\end{document}